\newtheorem{observation}{Observation}
\begin{document}
\title{Online Domination: The Value of Getting to Know All your Neighbors}
%
%

\author{Hovhannes Harutyunyan \and
Denis Pankratov \and
Jesse Racicot}
\authorrunning{H. Harutyunyan et al.}
%
\institute{Department of Computer Science and Software Engineering, Concordia University, Montreal, QC, H3G1M8, Canada}

\maketitle              
\begin{abstract}
We study the dominating set problem in an online setting. An algorithm is required to guarantee competitiveness against an   adversary that reveals the input graph one node at a time. When a node is revealed, the algorithm learns about the entire neighborhood  of the node (including those nodes that have not yet been revealed).  Furthermore, the adversary is required to keep the revealed portion  of the graph connected at all times. We present an algorithm that  achieves 2-competitiveness on trees and prove that this competitive  ratio cannot be improved by any other  algorithm. We also present algorithms that achieve 2.5-competitiveness on cactus graphs, $(t-1)$-competitiveness on $K_{1,t}$-free graphs, and  $\Theta(\sqrt{\Delta})$ for maximum degree $\Delta$ graphs. We show that all of those competitive ratios are tight. Then, we study several more general classes of graphs, such as threshold, bipartite  planar, and series-parallel graphs, and show that they do not admit competitive algorithms (that is, when competitive ratio is  independent of the input size). Previously, the dominating set problem was considered in a slightly  different input model, where a vertex is revealed  alongside its  restricted neighborhood: those neighbors that are among already  revealed vertices. Thus, conceptually, our results quantify the  value of knowing the entire neighborhood at the time a vertex is  revealed as compared to the restricted neighborhood. For instance, it was known in the restricted neighborhood model that 3-competitiveness is optimal for trees, whereas knowing the neighbors allows us to improve it to 2-competitiveness.

\keywords{Dominating set \and
Online algorithms \and
Competitive ratio \and
Trees \and
Cactus graphs \and
Bipartite planar graphs \and
Series-parallel graphs \and
Closed neighborhood.}
\end{abstract}
\section{Introduction}
\label{sec:intro}
Given an undirected simple graph $G=(V,E)$, a subset of vertices $D \subseteq V$ is called \emph{dominating} if every vertex of $V$ is either in $D$ or is adjacent to some vertex in $D$. In the well-known $\mathcal{NP}$-hard dominating set problem, the goal is to find a dominating set of minimum cardinality. We study this problem in the online setting, where a graph is revealed one node at a time. When a node is revealed its entire neighborhood is revealed as well. An algorithm is required to make an irrevocable decision on whether to include the newly revealed vertex into the dominating set the algorithm is constructing or not. This decision must be made before the next vertex is revealed. Performance of an online algorithm is measured against an optimal offline algorithm, i.e., an algorithm that knows the entire input in advance and has infinite computational resources. This measure is captured by the notion of competitive ratio and analysis, which is made precise below. For now, it suffices to note that competitive ratio is analogous to approximation ratio in the offline setting.

The dominating set problem has important practical and theoretical applications, such as establishing surveillance service (\cite{Berge62}), routing and transmission services in (wireless) networks (\cite{DasB97}), as well as broadcasting (\cite{VertexAddition,BroadcastingDomination}). While the dominating set problem and its variants (connected dominating set, independent dominating set, weighted dominating set, etc.) have been extensively studied in the offline setting~\cite{Berge62,HaynesHS98,HenningA13,Konig50,Ore62,WangDC16}, this problem has received little attention in the online algorithms community. The current paper attempts to fill in this gap, while making a quantitative comparison with another online model for dominating set.

Online dominating set problem has been studied in the vertex arrival model by Boyar et al.~\cite{OnlineDominatingSet3}. In that model, when a vertex is revealed only restricted neighborhood of that vertex is revealed as well, namely, those neighbors that appear among previously revealed vertices. Moreover, in the model considered by Boyar et al. decisions are only partially irrevocable, i.e., when a vertex arrives an algorithm may add this vertex together with \emph{any of its neighbors from the restricted neighborhood} to the dominating set. Thus, the decision to include a vertex is irrevocable, while the decision not to include a vertex is only partially irrevocable -- an algorithm has a chance to reconsider when any yet unrevealed neighbors arrive. The catch is that the algorithm does not know the input size and has to maintain a dominating set at all times. In the model considered in this paper, all decisions (to include or exclude a vertex from a dominating set) are irrevocable. On one hand, this makes our model stronger for the adversary. On another hand, our model is weaker for the adversary than the model of Boyar et al. in the aspect of the adversary being forced to reveal all neighbors of a newly revealed vertex at once. Thus, our results when compared to those of the vertex arrival model can be viewed as quantifying the value of getting to know all neighbors of a vertex at the time of its revelation. 

Perhaps somewhat surprisingly, we discover in several results that the benefit of knowing all neighbors outweighs the drawbacks of fully irrevocable decisions. Our results are summarized below, but in particular we show that in our model $\Delta$-bounded degree graphs admit $O(\sqrt{\Delta})$ online algorithms, while Boyar et al. show that $\Omega(\Delta)$ is necessary in their model. Similarly, we demonstrate and analyze a $2$-competitive algorithm for trees, while Kobayashi~\cite{OnlineDominatingSet4} shows a lower bound of $3$ in the vertex arrival model. Our degree upper bound implies that $O(\sqrt{n})$ competitive ratio is tight for general graphs, whereas Boyar et al. showed the lower bound of $\Omega(n)$ in the vertex arrival model. This paints a picture that knowing all the neighbors improves not only precise constants, when graph classes allow for small competitive ratio algorithms, but also give asymptotic improvements for more ``challenging'' graph classes for algorithms.

Prior to summarizing our results, we give a brief overview of competitive analysis framework. For more details, an interested reader should consult excellent books~\cite{BorodinE1998,Komm16} and references therein. Let $ALG$ be an algorithm for the online dominating set problem. Let $ALG(G, \sigma)$ denote the set of vertices that are selected by $ALG$  on the input graph $G$ with its vertices revealed according to the order $\sigma$. We sometimes abuse the notation and omit $G$ or $\sigma$ (or both) when they are clear from the context. Abusing notation even more, we sometimes write $ALG(G, \sigma)$ to mean $|ALG(G, \sigma)|$. Similar conventions apply to an offline optimal solution denoted by $OPT$. We say that $ALG$ has \emph{strict competitive ratio} $c$ if $ALG \le c \cdot OPT$ on all inputs. We say that $ALG$ has \emph{asymptotic competitive ratio} $c$ (or, alternatively, that $ALG$ is $c$-competitive) if $\limsup_{OPT \rightarrow \infty} \frac{ALG}{OPT} \le c$. The \emph{competitive ratio} of $ALG$ is the infimum over all $c$ such that $ALG$ is $c$-competitive. When we simply write ``competitive ratio'' we typically mean ``asymptotic competitive ratio'' unless stated otherwise.

We shall consider performance of algorithms with respect to restricted inputs, specified by  various graph classes, such as trees, cactus graphs, series-parallel, etc. The above definitions of competitive ratios can be modified by restricting them to inputs coming from certain graph classes. We denote the competitive ratio of an algorithm $ALG$ with respect to the restricted graph class CLASS by $\rho(ALG,$CLASS$)$.

The following is a summary of our contributions with the section numbers where the results appear. 
\begin{itemize}
    \item tight competitive ratio $2$ on trees (Section~\ref{ssec:trees});
    \item tight competitive ratio $\frac{5}{2}$ on cactus graphs (Section~\ref{ssec:cactus});
    \item tight competitive ratio $\Theta(\sqrt{\Delta})$ on maximum degree $\Delta$ graphs (Section~\ref{ssec:degree});
    \item tight competitive ratio $t-1$ on $K_{1,t}$-free graphs (Section~\ref{ssec:claws});
    \item tight competitive ratio $\Theta(\sqrt{n})$ for threshold graphs (Section~\ref{ssec:threshold}), planar bipartite graphs (Section~\ref{ssec:planar}),  and series-parallel graphs (Section~\ref{ssec:series-parallel}).
\end{itemize}

We note that all our upper bounds are in terms of strict competitive ratios, and all our lower bounds, with the exception of $K_{1, t}$-free graphs, are in terms of asymptotic competitive ratios.\footnote{With the small caveat that the performance ratio for threshold graphs is measured as a function of input size for reasons provided later.}

\section{Preliminaries}
\label{sec:prelim}

In this section we describe definitions and establish notations that will be used frequently in the rest of the paper.
Let $G = (V, E)$ be a connected undirected graph on $n = |V| \geq 1$ vertices. For a subset of vertices $S \subseteq V$ we define the \emph{closed neighborhood} of $S$, denoted by $N[S]$, to be $S \cup \{ v \in V \mid \exists u \in S, \{ u, v \} \in E\}$. We use $\langle S \rangle$ to denote the subgraph of $G$ induced on $S$.

The vertices $V$ are revealed online in order $(v_1, ..., v_n)$. Since we consider the online input model where vertices are revealed alongside their neighbors, we distinguish between two notions: those vertices that are revealed by a certain time and those that are visible. More precisely, we have the following:
\begin{definition}
\begin{itemize}
\item  $v_i$ is \emph{revealed by time} $j$ if $i \leq j$. 
\item $v_j$ is \emph{visible at time} $i$ if it is either revealed by time $i$ or it is adjacent to some vertex revealed by time $i$.
\item $R_i$ denotes the set of all vertices revealed by time $i$.
\item $V_i$ denote the vertices visible at time $i$ (i.e. $V_i = N[R_i]$).
\end{itemize}
\end{definition}  

The adversary chooses the graph $G$ as well as the revelation order of vertices; however, the adversary is restricted to those revelation orders that guarantee that $\langle R_i \rangle$ is connected for all $i$. Thus, we observe that the process of revelation of a graph by the adversary is a natural generalization of the breadth-first search (BFS) and depth-first search (DFS) explorations of the graph. Thus, we can define the \emph{revelation tree} analogous to BFS and DFS trees. We need the following observation first:

\begin{observation}\label{obs : Parent}
If $v_j \in V_{i} \setminus V_{i - 1}$ with $i \geq 2$ then $v_{i}$ is the unique neighbour of $v_j$ in $\langle V_{i} \rangle$.
\end{observation}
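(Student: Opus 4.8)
The plan is to split the claim into three parts: that $\{v_i,v_j\}$ is an edge, that $v_j\neq v_i$, and that no other visible vertex is adjacent to $v_j$. The only facts needed are the identity $V_k=N[R_k]$, the decomposition $R_i=R_{i-1}\cup\{v_i\}$ together with the elementary fact $N[A\cup B]=N[A]\cup N[B]$ (so that $V_i=N[R_i]=N[R_{i-1}]\cup N[v_i]=V_{i-1}\cup N[v_i]$), and the hypothesis that $\langle R_i\rangle$ is connected.

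First I would establish the edge. Since $v_j\in V_i\setminus V_{i-1}$ and $V_i=V_{i-1}\cup N[v_i]$, necessarily $v_j\in N[v_i]$, so either $v_j=v_i$ or $\{v_i,v_j\}\in E$. To rule out $v_j=v_i$ I use $i\ge 2$: then $R_{i-1}\neq\emptyset$, so by connectivity of $\langle R_i\rangle$ the vertex $v_i$ has a neighbour in $R_{i-1}$, whence $v_i\in N[R_{i-1}]=V_{i-1}$; but $v_j\notin V_{i-1}$, a contradiction. Hence $\{v_i,v_j\}\in E$, and since $v_i\in R_i\subseteq V_i$ and $v_j\in V_i$, the vertex $v_i$ is a neighbour of $v_j$ in $\langle V_i\rangle$.

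For uniqueness, suppose $v_k\in V_i$ is adjacent to $v_j$ with $v_k\neq v_i$. If $v_k$ were revealed by time $i$, then $v_k\in R_i\setminus\{v_i\}=R_{i-1}$, so $v_j\in N[v_k]\subseteq N[R_{i-1}]=V_{i-1}$, contradicting $v_j\notin V_{i-1}$; therefore $v_k$ is visible but not revealed at time $i$, say adjacent to some $v_m\in R_i$. Taking a path from $v_i$ to $v_m$ inside the connected graph $\langle R_i\rangle$ and appending the edges $\{v_i,v_j\},\{v_j,v_k\},\{v_k,v_m\}$---whose interior vertices $v_j,v_k$ lie outside $R_i$ and hence off that path---produces a cycle of $\langle V_i\rangle$ through $v_j$ (a triangle, in the degenerate case $v_i=v_m$). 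This last step is the one I expect to require the most care: it is where the structure of the host graph enters, since the contradiction is obtained precisely because $\langle V_i\rangle$ contains no such cycle---in particular whenever $G$, and hence $\langle V_i\rangle$, is a tree, which is the main setting in which this observation is applied.
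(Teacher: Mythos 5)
The paper offers no proof of this observation at all---it is stated as immediate---so the real question is whether your argument establishes what the paper actually needs. Your first two steps are exactly the intended content and are correct: $V_i=V_{i-1}\cup N[v_i]$ gives $v_j\in N[v_i]$; connectivity of $\langle R_i\rangle$ together with $i\ge 2$ forces $v_i\in V_{i-1}$, so $v_j\neq v_i$ and $\{v_i,v_j\}\in E$; and your first uniqueness case shows $v_j$ has no neighbour in $R_i$ other than $v_i$ (any such neighbour would lie in $R_{i-1}$ and put $v_j\in N[R_{i-1}]=V_{i-1}$). That much is precisely why the parent of $v_j$ is well defined, and it holds for every graph class treated in the paper.

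Your remaining case---an edge from $v_j$ to a visible but unrevealed $v_k$---is where the trouble lies, and your own caveat gives it away: such an edge cannot be excluded in general, and your cycle construction yields a contradiction only when $\langle V_i\rangle$ is acyclic. But the observation is not a tree-only statement: it sits in the preliminaries and underpins the parent/child, tree-edge and cross-edge terminology used for cactus, bounded-degree and $K_{1,t}$-free inputs as well. Indeed, read literally (with $\langle V_i\rangle$ the subgraph of $G$ induced on the visible vertices) the claim is false on the paper's own constructions: in the cactus lower bound the children $c_{1,1}$ and $c_{1,2}$ of $c$ become visible simultaneously and are adjacent in $G$, and in the $K_{1,t}$-free lower bound the grandchildren of $v_1$ form a clique. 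Edges of this kind, joining two vertices neither of which is the parent of the other, are exactly the cross edges defined right after the observation; they are simply not yet known to the algorithm at time $i$, since an edge only becomes visible once one of its endpoints is revealed. The statement must therefore be read as saying that $v_i$ is the unique neighbour of $v_j$ in the portion of the graph revealed by time $i$, equivalently the unique neighbour of $v_j$ inside $R_i$. Under that reading your revealed-neighbour case already constitutes the entire proof and the cycle argument is unnecessary; under the literal reading no proof is possible, and what your acyclicity argument actually establishes is the tree case, which the paper records separately as the lemma that tree inputs admit no cross edges.
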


In the preceding observation, we say that $v_j$ is a \emph{child} of $v_{i}$ and that $v_i$ is the \emph{parent} of $v_j$.  The edge $\{ v_i, v_j \}$ is called a \emph{tree edge}. The subgraph induced on the tree edges is the revelation tree. Any edge $\{ u, v \}$ where $u$ is not the parent of $v$ nor $v$ the parent of $u$ is called a \emph{cross edge}.

After the vertex $v_i$ is revealed together with its closed neighborhood $N[v_i]$, an online algorithm $ALG$ must make a decision $d_i \in \{0,1\}$, which indicates whether the algorithm takes this vertex to be in the dominating set or not. 

\begin{definition}Given an online algorithm $ALG$ we define:
\begin{itemize}
    \item $S_i$ denote the set of revealed vertices selected by $ALG$ after decision $d_i$ (i.e. $S_i = \{ v_k \mid d_k = 1, 1 \leq k \leq i \}$) where $S_0 = \emptyset$. 
    \item $D_i = N[S_i]$ is the set of vertices that are dominated after decision $i$.
    \item $U_i = V_i \setminus D_{i - 1}$ is the set of visible vertices undominated immediately before decision $d_i$. $U_0 = \emptyset$.
\end{itemize}
\end{definition}

A series of figures are provided below which illustrate the preceding definitions.  For these figures, and all others in this paper, the convention is that vertices that are shaded in gray are those selected by $ALG$, vertices with thicker boundaries belong to $OPT$, an edge that is dashed is a cross edge, and all the solid edges are tree edges.

\begin{figure}[h]
\centering
\includegraphics[scale=0.5]{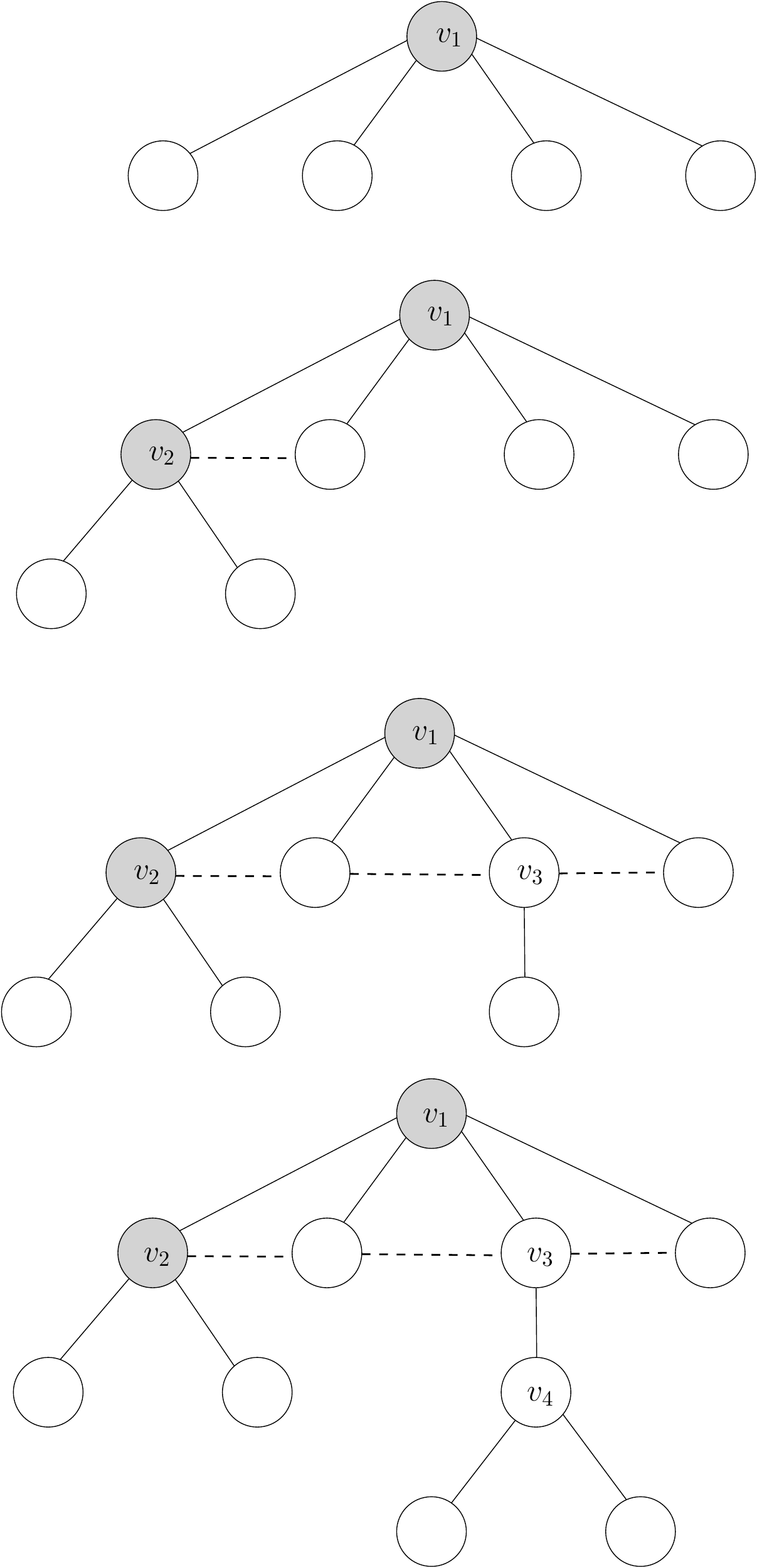}
\caption{A series of prefixes of the graph below.}\label{fig:basic-graph-full}
\end{figure}

\begin{figure}[h]
\centering
\includegraphics[scale=0.5]{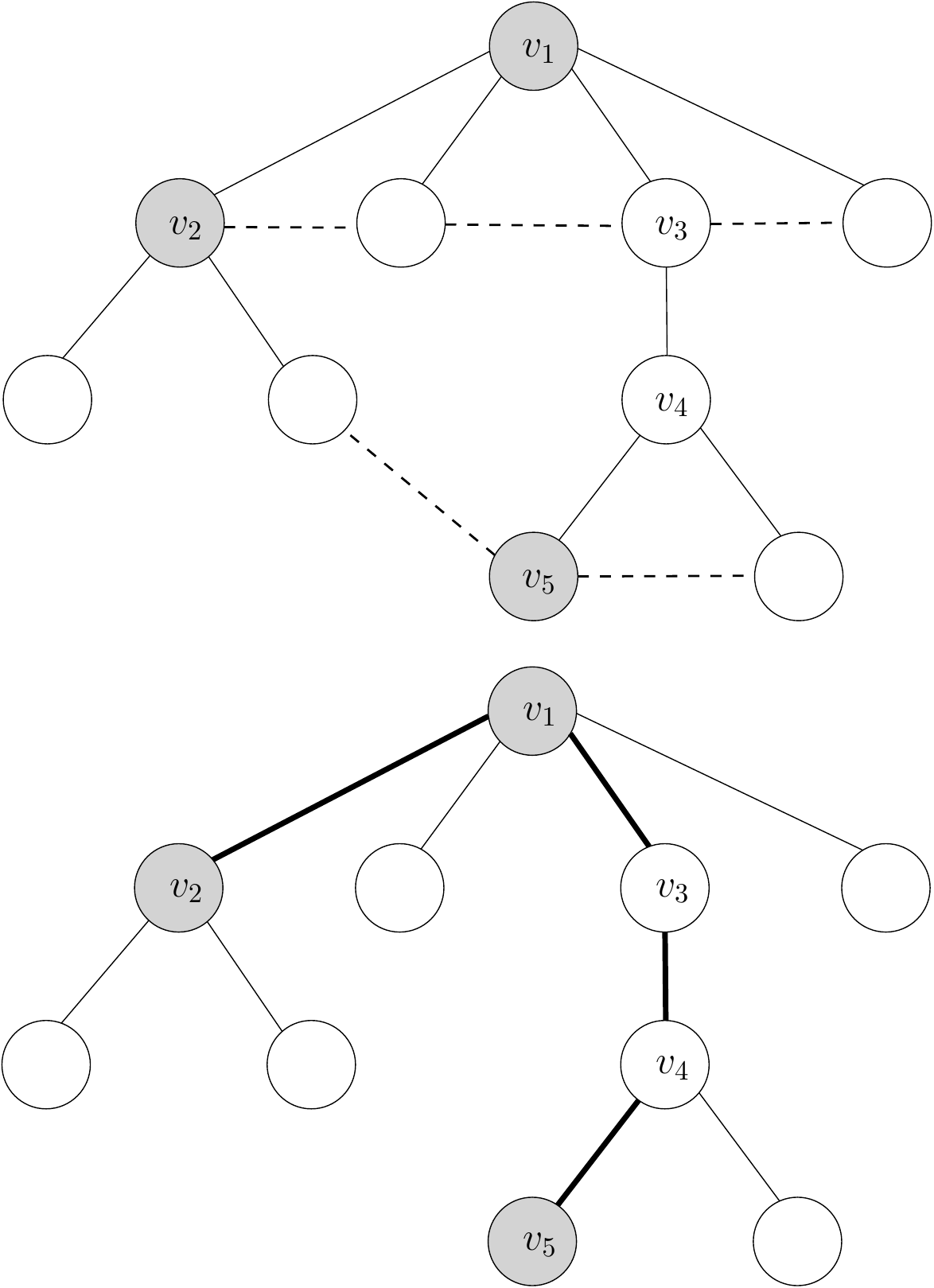}
\caption{Top picture depicts the prefix of a graph with $5$ revealed vertices and $10$ visible vertices.  Bottom picture depicts its revelation tree.  The thickened edges are the edges in the induced subgraph on the revealed vertices, which is acyclic here but is only guaranteed to be connected in general. }\label{fig:basic-graph-full}
\end{figure}

Since an online algorithm makes irrevocable decisions and it must produce a feasible solution, there may be situations where an algorithm is forced to select a vertex $v_j$ to be in the dominating set. This happens because $v_j$ is the ``last chance'' to dominate some other vertex $v_i$. In this case, we say that $v_j$ saves $v_i$ or that $v_j$ is the savior of $v_i$. Note that it is possible for a vertex $v_j$ to save itself. The following definition makes the notion of ``saving'' precise.

\begin{definition}
A vertex $v_j, j \geq 1$ \emph{saves} a vertex $v_i$ if $j = \max \{ k \mid v_k \in N[v_i] \}$ and $N[v_i] \setminus \{ v_j \}$ contains no vertices from $S_{j - 1}$.  Let $s(v_j)$ denote the set of vertices that $v_j$ saves.
\end{definition}

Observe that if a vertex is saved then it must be that every one of its neighbours (itself included) had a chance to dominate the said vertex.  

\begin{observation}\label{lemma: Saviour Lemma}
If $v_i$ is saved then $v_i \in N[v_j] \cap U_j$ for any $v_j \in N[v_i]$. 
\end{observation}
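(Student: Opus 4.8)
The plan is to unfold the definitions of $U_j$, $V_j$, and $D_{j-1}$ and then settle the one nontrivial point by a short contradiction. Write $v_m$ for the savior of $v_i$, so that by the definition of ``saves'' we have $m = \max\{k \mid v_k \in N[v_i]\}$ and $N[v_i] \setminus \{v_m\}$ is disjoint from $S_{m-1}$. Fix an arbitrary $v_j \in N[v_i]$. Since $U_j = V_j \setminus D_{j-1}$, the claim $v_i \in N[v_j] \cap U_j$ splits into three subgoals: (i)~$v_i \in N[v_j]$; (ii)~$v_i \in V_j$; and (iii)~$v_i \notin D_{j-1}$.

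First I would dispatch (i) and (ii), which do not use the saving hypothesis. Subgoal (i) is just symmetry of closed adjacency: either $v_j = v_i$, whence $v_i \in N[v_i] = N[v_j]$, or $\{v_i,v_j\} \in E$, whence again $v_i \in N[v_j]$. For (ii), when $v_j$ is revealed (at time $j$) its whole closed neighbourhood becomes visible, so $v_i \in N[v_j] \subseteq N[R_j] = V_j$.

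The only real content is (iii). Because $v_j \in N[v_i]$ and $m$ is the largest index of a vertex in $N[v_i]$, we have $j \le m$, hence $S_{j-1} \subseteq S_{m-1}$. Suppose for contradiction that $v_i \in D_{j-1} = N[S_{j-1}]$; then some $v_k \in S_{j-1}$ satisfies $v_i \in N[v_k]$, i.e.\ $v_k \in N[v_i]$ with $k \le j-1 \le m-1$. In particular $v_k \ne v_m$ and $v_k \in S_{m-1}$, so $v_k \in (N[v_i]\setminus\{v_m\}) \cap S_{m-1}$, contradicting that $v_i$ is saved. Thus $v_i \notin D_{j-1}$, and combining (i)--(iii) gives $v_i \in N[v_j] \cap U_j$.

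I do not anticipate any genuine obstacle: the statement is essentially a repackaging of the definition of ``saves'', and the only thing to watch is the index bookkeeping, namely $j \le m$, monotonicity $S_{j-1} \subseteq S_{m-1}$, and the fact that $v_k \ne v_m$ even in the boundary case $j = m$.
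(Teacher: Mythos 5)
Your argument is correct: the index bookkeeping ($j \le m$, monotonicity $S_{j-1} \subseteq S_{m-1}$, and $v_k \ne v_m$ since $k \le m-1$) is exactly what makes the claim go through. The paper states this as an observation without proof, treating it as an immediate unpacking of the definition of ``saves,'' which is precisely what your write-up carries out in detail.
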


All our upper bounds are established by either a GREEDY algorithm or a $k$-DOMINATE algorithm for some fixed integer value of parameter $k$:
\begin{itemize}
\item The algorithm GREEDY selects a newly revealed vertex if and only if the vertex is not currently dominated. Using the notation introduced above, GREEDY selects $v_i, i \geq 1$ if and only if $v_i \in U_i$.  

\item The algorithm $k$-DOMINATE (for some fixed integer parameter $k$) selects a newly revealed vertex if and only if either (1) the vertex has at least $k$ undominated neighbors, or (2) the vertex saves at least one other vertex. Using the notation introduced above, $v_i, i \geq 1$ is selected if and only if either (1) $|N(v_i) \cap U_i| \geq k$, or (2) $|s(v_i)| \geq 1$.
\end{itemize}
Both GREEDY and $k$-DOMINATE give rise to  rather efficient offline algorithms so that any of the positive results given in this paper may be realized as efficient offline approximation algorithms.

\section{Competitive Graph Classes}
\label{sec:competitive}

\subsection{Trees}
\label{ssec:trees}

In this section we establish the tight bound of $2$ on the best competitive ratio when the input graph is restricted to be a tree. The upper bound is achieved by the $2$-DOMINATE algorithm and is proved in Theorem~\ref{thm : UB Trees} below. The lower bound on all online algorithms is established in Theorem~\ref{thm : LB Trees}. We begin this section with the lower bound.

\begin{theorem}\label{thm : LB Trees}
$\rho(ALG, TREE) \geq 2$ for any algorithm $ALG$.
\end{theorem}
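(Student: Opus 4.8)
The plan is to construct, for any online algorithm $ALG$ and any target $M$, a tree $T$ together with a valid (connectivity-preserving) revelation order on which $OPT(T)$ is large while $ALG(T) \geq (2 - o(1)) \cdot OPT(T)$. The adversary strategy will be adaptive: it reveals vertices and watches the decisions $d_i$ made by $ALG$, branching its future construction based on whether $ALG$ selects the current vertex or not. The basic gadget is a short path. Starting from a root $r$, reveal a neighbor $u$; if $ALG$ does not take $u$, the adversary can later ``punish'' this by revealing a pendant leaf $\ell$ attached to $u$ whose only neighbor is $u$, forcing $ALG$ to eventually take $u$ anyway (since $u$ saves $\ell$), so $ALG$ spent a decision without dominating anything new that $OPT$ couldn't cover with $u$ alone; if $ALG$ does take $u$ immediately, the adversary instead extends a path $u, w, \dots$ so that $u$ was a ``wasted'' choice because $OPT$ would have dominated $u$ from its parent or from $w$. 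Either way the adversary engineers a situation where $ALG$ is forced to use roughly twice as many vertices as $OPT$ on each gadget.

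Concretely, I would use the classic ``caterpillar'' / ``spider'' lower-bound construction: repeatedly present vertices of a path $P$ one at a time; whenever $ALG$ selects a vertex $v_i$ on the path, the adversary hangs a leaf off $v_{i}$'s successor so that $OPT$ can cover two consecutive path-vertices plus the leaf with a single well-placed vertex while $ALG$ has already committed suboptimally; whenever $ALG$ declines a vertex, the adversary hangs a leaf off that vertex, forcing $ALG$ to take it later as a savior (Observation~\ref{lemma: Saviour Lemma}), again at a $2{:}1$ loss. After $k$ such gadgets the tree has $OPT = k + O(1)$ (a single vertex per gadget dominating everything in that gadget, chosen with hindsight) while $ALG \geq 2k - O(1)$. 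Taking $k \to \infty$ gives $\limsup ALG/OPT \geq 2$. I must check throughout that the revealed subgraph stays connected — this is automatic since we only ever attach new vertices adjacent to an already-revealed vertex — and that the final graph is genuinely a tree, i.e. no gadget introduces a cycle, which holds because every newly revealed vertex is a leaf at the moment of revelation.

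The main obstacle, and the part requiring care, is the bookkeeping that shows $OPT$ really is as small as claimed \emph{simultaneously} across all gadgets: the hindsight-optimal dominating set must be exhibited explicitly and shown to have size $k + O(1)$ regardless of the (adaptive) branching that occurred, and one must verify that in every branch $ALG$ is genuinely forced into $2$ selections per gadget rather than being able to amortize one selection across two gadgets by exploiting a shared vertex. To avoid such amortization I would make the gadgets vertex-disjoint except for sharing only the path backbone, and make each ``punishment leaf'' private to its gadget, so that a single $ALG$-vertex can help at most one gadget's worth of domination. The remaining details — formalizing the adaptive adversary as a decision tree over the sequence $(d_1, d_2, \dots)$ and computing $OPT$ and the lower bound on $ALG$ in each leaf of that decision tree — are routine once the gadget is fixed.
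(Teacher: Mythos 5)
Your adversary violates the input model in a way that matters. In this paper's model a vertex's \emph{entire} neighborhood becomes visible at the moment the vertex is revealed, so the adversary cannot ``later punish'' a declined vertex $u$ by attaching a fresh pendant leaf to $u$: any neighbor of $u$ must already have been visible when $u$ was revealed. Your connectivity check (``we only ever attach new vertices adjacent to an already-revealed vertex'') addresses the wrong constraint --- that is precisely the move the model forbids; adaptivity is only available in how \emph{visible but not yet revealed} vertices unfold. Relatedly, ``forcing $ALG$ to eventually take $u$ anyway (since $u$ saves $\ell$)'' contradicts full irrevocability: once $ALG$ declines $u$ it can never take $u$; the forced vertex is the leaf $\ell$ itself. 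Both slips suggest you are reasoning in the Boyar et al.\ vertex-arrival model rather than this one.

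Even if you patch the model issue by pre-revealing a pendant child of each path vertex, the per-gadget accounting does not give $2{:}1$ in every branch, and this is not routine bookkeeping but the crux. Against your strategy, an always-decline algorithm pays one vertex (the private leaf) per gadget while $OPT$ also pays one (the path vertex), and an always-accept algorithm pays one path vertex per gadget while $OPT$ still needs roughly one vertex per private leaf hung on the successors; both yield ratio about $1$, not $2$. The paper's construction fixes exactly this: each chain vertex is revealed with \emph{two} children up front; if $ALG$ declines, both children are revealed as leaves ($ALG$ pays $2$, $OPT$ pays $1$), and if $ALG$ accepts, one child continues the chain while the other is revealed as a \emph{support vertex with its own private leaf}, forcing $ALG$ to pay one more vertex per accept while $OPT$ takes only the support, which also dominates the chain vertex. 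In addition, the chain is capped at $k$ accepts and the whole process is restarted at the children of a high-degree root $v_1$, which is what absorbs the $O(1)$ additive slack and turns the bound into an asymptotic $2-\epsilon$; your proposal has no analogue of either device.
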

\begin{proof}
Consider an arbitrary small $\epsilon > 0$. We will give an adversarial input that guarantees that $ALG \ge (2-\epsilon) OPT$. 
Let $k = \lceil \frac{3}{\epsilon} \rceil \geq 4$. At the start, the adversary reveals $v_1$ with $k$ children $\{c_1, \ldots, c_k\}$. Then we start the process described in the next paragraph at $c_1$. The process can terminate in two ways: (i) $ALG$ stops selecting vertices to be in the dominating set, or (ii) $ALG$ selects $k$ vertices revealed after $c_1$ (inclusive). If the process terminates because of (i), then the adversary restarts the process at child $c_2$ of $v_1$. The process again terminates either with (i) or (ii) with respect to $c_2$. If it is due to (i), then the adversary restarts the process at $c_3$, and so on. If the process terminates with (ii) with respect to $c_i$ then we reveal $c_j$ for $j > i$ as leaves of $v_1$. 

Next, we describe the process with respect to $c_i$. The adversary reveals $c_i$ with $2$ children and if $ALG$ selects $c_i$ then exactly one child of $c_i$ is revealed with two additional children. If $ALG$ selects the child then one of its children is revealed with two additional children, and so on. Let $j_i$ be the number of these vertices that are selected by $ALG$. This process terminates only if $ALG$ stops selecting these vertices with two children ($j_i < k$) or when $ALG$ selects $k$ of them ($j_i = k$).  At this point the subtree grown at $c_i$ has some revealed vertices as well as visible, but not yet revealed vertices. To finish revealing the entire subtree, the adversary proceeds as follows.

If $j_i < k$ then the two children on the $(j_i + 1)$'st vertex are revealed to be leaves.  Moreover, each of the $j_i$ selected vertices have exactly one visible child that is not yet revealed. Reveal those $j_i$ children, called \emph{support vertices}, with an additional leaf child (i.e. the child is revealed to be a leaf after its parent is revealed).  Including the $2$ children of the $(j_i + 1)$'st vertex $ALG$ must select at least $j_i + 2$ additional vertices to dominate these leaves for a total of $j_i + (j_i + 2) = 2(j_i + 1)$ selected vertices in this subtree. In this case, $OPT$ can select the support vertices together with the $(j_i+1)$'st vertex for a total $j_i+1$ vertices to dominate the entire subtree.

If $j_i = k$ the procedure to finish revealing the entire subtree at $c_i$ is similar: the $k$'th vertex children are both revealed to be leaves and each of the other $k - 1$ selected vertices has the other child become a support vertex, i.e., revealed with an additional leaf child.  The performance is similar here but $ALG$ is not forced to select the two children of the $k$'th vertex so $ALG$ selects at least $k + (k - 1) = 2k - 1$. In this case, $OPT$ needs only select the $k$'th vertex together with the support vertices for a total of $k$ vertices to dominate the subtree.  

To finish the analysis, we consider the following two cases: 

$\textbf{Case 1 :}$ for all $i$ we have $j_i < k$. Then $ALG \geq 2(j_i + 1)$ on each subtree whereas $OPT \leq j_i + 1$ on each subtree.  Summing over all subtrees and remarking that $OPT$ might select $v_1$ we obtain that

\[ ALG/OPT \ge \left( \sum 2(j_i+1) \right) / \left( 1+ \sum (j_i+1) \right) \ge 2 - 2/k \ge 2-\epsilon. \]

$\textbf{Case 2 :}$ there exists $\ell$ such that $j_\ell = k$. Then $OPT$ selects $j_i+1$ vertices for $i < \ell$, $k$ vertices for $i = \ell$, $0$ vertices for $i > \ell$ per subtree, plus $v_1$. Whereas $ALG$ selects at least $2(j_i+1)$ for $i < \ell$, $2k-1$ for $i = \ell$, and $0$ for $i > \ell$. By a similar calculation to \textbf{Case 1}, we obtain that $ALG/OPT \ge 2 - 3/k \ge 2 - \epsilon$.
\end{proof} 

\begin{figure}[h]
\centering
\includegraphics[scale=0.6]{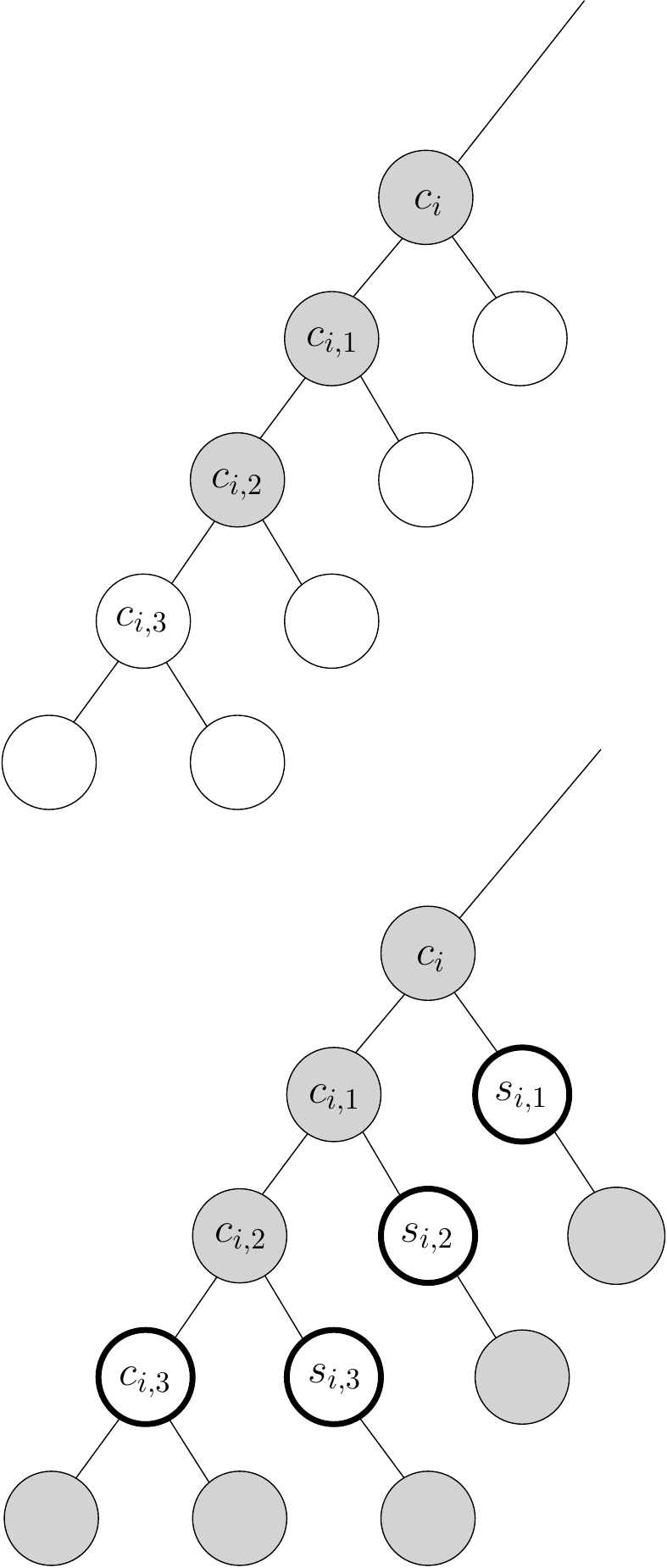}
\caption{An example of the process described in Theorem~\ref{thm : LB Trees} where $ALG$ selects $j_i = 3$ vertices on the subtree rooted at $c_i$.  The top depicts the subtree immediately after revealing $c_{i, 3}$ whereas the bottom shows the entirely revealed subtree.}\label{fig:cactus-2-gadget}
\end{figure}

Now that we have established an asymptotic lower bound of $2$ for any algorithm we show that $2$-DOMINATE is $2$-competitive.

\begin{theorem}\label{thm : UB Trees}
$\rho($2-DOMINATE, TREE$) = 2$.
\end{theorem}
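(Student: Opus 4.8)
The statement has two halves, and the lower bound $\rho(\text{2-DOMINATE},\text{TREE})\ge 2$ is immediate from Theorem~\ref{thm : LB Trees}, which holds for every algorithm. So the plan is to prove the matching upper bound, in fact the strict inequality $|S|\le 2\,|OPT|$ on every tree $T$, where $S$ is the set selected by 2-DOMINATE and $OPT$ a minimum dominating set of $T$. I would first record the structural simplification special to tree inputs: since $T$ is acyclic, $\langle R_i\rangle$ is always a subtree, there are no cross edges at all, and by Observation~\ref{obs : Parent} every non-root vertex is revealed strictly after its parent. Hence, when a vertex $v$ is revealed, \emph{all} of its children are still undominated while its parent may or may not be dominated; consequently 2-DOMINATE selects $v$ by rule (1) precisely when $v$ has at least two children, or has exactly one child and an undominated parent (the root being a harmless special case). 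I would then partition $S=D\sqcup F$, where $D$ is the set of vertices selected by rule (1) and $F$ the remaining vertices, each of which is ``forced'', i.e.\ saves some vertex. Note a vertex of $D$ freshly dominates at least two previously undominated vertices at the moment it is selected.

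The core of the proof I propose is a charging argument: give every vertex of $OPT$ a budget of $2$ and assign each vertex of $S$ to a vertex of $OPT$ so that no vertex of $OPT$ is overcharged, which gives $|S|\le 2\,|OPT|$ at once. A convenient way to organize this is to fix an \emph{owner} map $\omega\colon V\to OPT$ with $v\in N[\omega(v)]$, so that the closed neighborhoods of the vertices of $OPT$ are refined into a genuine partition $V=\bigsqcup_{u\in OPT}P_u$ with $P_u\subseteq N[u]$; then it suffices to show $|S\cap P_u|\le 2$ for every $u\in OPT$. Because $S\cap N[u]\subseteq\{u\}\cup\{\mathrm{parent}(u)\}\cup\mathrm{children}(u)$, this reduces to controlling which neighbors of $u$ can simultaneously be selected and owned by $u$. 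For a forced vertex $v\in F$ saving some $w$, Observations~\ref{obs : Parent} and~\ref{lemma: Saviour Lemma} tell us that every vertex of the star-like set $N[w]$ was visible and undominated and had the chance to dominate $w$ but declined; combining this with the fact that $w$ has some dominator in $OPT$, I would argue that saviors can be routed to $OPT$ so that each $OPT$-vertex absorbs at most one of them. For a vertex $v\in D$, which freshly dominates two vertices (its children, and possibly its parent/itself) when selected, I would charge $v$ either to an $OPT$-vertex dominating $v$ or to one dominating a child of $v$, choosing the target to avoid neighbors already spoken for. (A possible alternative is an inductive invariant maintained along the revelation, but since $OPT$ of a prefix need not relate cleanly to $OPT$ of the whole tree, the global charging looks more tractable.)

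The hard part will be the collision bound itself — i.e.\ $|S\cap P_u|\le 2$ for each $u\in OPT$. The dangerous configuration is an $OPT$-vertex $u$ whose parent and two or more children all lie in $S$; one must show that the algorithm's rules plus the tree structure force all but two of these to be chargeable through other $OPT$-vertices. For instance, a child $c$ of $u$ that lies in $S$ but has $u$ as its only $OPT$-neighbor satisfies $c\notin OPT$ and has no $OPT$-child, which sharply constrains why $c$ could have been selected (rule (1) needs $\ge 2$ children of $c$, a savior needs $c$ to be last in some $N[x]$), and in each surviving subcase $c$ can be owned via a descendant. Making this precise is essentially a case analysis, splitting on the reason each candidate $v\in S\cap N[u]$ was selected — rule (1) with $\ge 2$ children, rule (1) with one child and undominated parent, or forced-as-savior — and checking the bookkeeping globally. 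I expect this step, not the overall framework, to be where the work lies; once it is in place, summing $|S\cap P_u|\le 2$ over $u\in OPT$ gives $|S|\le 2\,|OPT|$, and together with Theorem~\ref{thm : LB Trees} this yields $\rho(\text{2-DOMINATE},\text{TREE})=2$.
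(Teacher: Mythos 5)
You correctly reduce the theorem to the upper bound $|S|\le 2\,|OPT|$ (the lower bound is indeed just Theorem~\ref{thm : LB Trees}), and the budget-of-$2$-per-$OPT$-vertex plan is the same high-level strategy as the paper's. But the concrete reduction you base everything on is false: it is \emph{not} sufficient, nor in general possible, to fix an owner map $\omega$ with $v\in N[\omega(v)]$ and prove $|S\cap P_u|\le 2$. Consider the tree with root $p$ having children $u,w$; $u$ has children $a,b$; each of $a,b$ has two children, and each of those four vertices has a single child which carries two leaves; $w$ has a single child $w_1$ carrying two leaves. Reveal parents before children ($p,u,a,b,w,w_1,\dots$). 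Then 2-DOMINATE selects $p,u,a,b$ (each is revealed with two undominated children), besides $w_1$ and the four deep vertices, while the unique minimum dominating set consists of $u$, $w_1$ and those four deep vertices. Here $N[p]\cap OPT=N[u]\cap OPT=N[a]\cap OPT=N[b]\cap OPT=\{u\}$, so \emph{every} adjacency-respecting owner map sends the four selected vertices $p,u,a,b$ to $u$, and $|S\cap P_u|\ge 4$. The cost of $a$ and $b$ can only be paid by $OPT$ vertices at distance $2$ (those dominating their children); your parenthetical fallback ``charge $v$ to an $OPT$-vertex dominating a child of $v$'' concedes exactly this, but it is incompatible with the partition $P_u\subseteq N[u]$ on which your counting $\sum_u|S\cap P_u|=|S|\le 2|OPT|$ rests, and you give no replacement bookkeeping for the relaxed routing.

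Beyond this structural problem, the substance of the upper bound is deferred rather than proved: the ``collision bound'' case analysis, which you yourself identify as where the work lies, is precisely the content of the paper's argument, and nothing in the proposal carries it out. For contrast, the paper avoids the integral-assignment pitfall by a fractional scheme: each selected $v_i$ spreads its unit charge evenly over $X_i=N[v_i]\cap U_i$, the vertices it newly dominates (so the charge automatically reaches $OPT$ vertices at distance up to $2$); every vertex is charged exactly once, hence carries at most $1$, and carries exactly $1$ only when it is saved and is the sole element of its savior's $X$-set; using the absence of cross edges in trees and Observation~\ref{lemma: Saviour Lemma} one shows two charge-$1$ vertices are never adjacent and share no neighbor, so each $N[u]$, $u\in OPT$ (with $OPT$ normalized to contain no leaves), accumulates at most $1+2\cdot\tfrac{1}{2}=2$. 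Any repair of your integral scheme would have to establish statements of exactly this kind, so as it stands the proposal has a genuine gap both in the framework and in the missing case analysis.
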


\textit{High level overview of the proof.} Consider an arbitrary input $T = (V, E)$ on $n \geq 3$ vertices and let $OPT$ denote a minimum dominating set of $T$ which contains no vertices of degree $1$ (i.e. any such vertex can be exchanged for its only neighbor). Recall that $S$ is the set of vertices selected by $2$-DOMINATE. Initially, we assign charge $1$ to each vertex $v$ in $S$ and charge $0$ to each vertex $v$ not in $S$. Thus, $|S| = \sum\limits_{v \in S} ch(v)$ where $ch(v)$ denotes the charge of $v$. With a charging scheme described shortly, we spread the charge from the vertices in $S$ to the vertices of $V$. Let $ch^*(v)$ denote the new charge associated with vertex $v$. We extend the functions $ch$ and $ch^*$ to subsets of vertices linearly, e.g., for $W \subseteq V$ we have $ch(W) = \sum_{v \in W} ch(v)$. We shall demonstrate that the procedure of spreading the charge satisfies two properties:
\begin{enumerate}
    \item conservation property: $\sum_v ch(v) = \sum_v ch^*(v)$ meaning that the total charge is preserved; and
    \item $OPT$-concentration property: for each $v \in OPT$ we have $ch^*(N[v]) \le 2$.
\end{enumerate}
With these two properties it follows that $2$-DOMINATE $ \le \sum_v ch(v) = \sum_v ch^* (v) \le \sum_{v \in OPT} ch^*(N[v]) \le 2 OPT$, so $2$-DOMINATE is strictly $2$-competitive.

Before we proceed with this plan, we make a couple of useful observations:
\begin{lemma}\label{Lemma : Tree Cross Edges}
If input is a tree, there are no cross edges incident on any vertex $v_i$.  In particular, any vertex $v_i$ has at most one neighbour before it is revealed.
\end{lemma}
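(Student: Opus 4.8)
The plan is to exploit the fact that the revelation tree is a genuine spanning tree of the input graph, and that a tree on $n$ vertices has no room for any extra edges. First I would observe that, by Observation~\ref{obs : Parent}, every vertex $v_j$ with $j \ge 2$ has exactly one parent, so the tree edges form a spanning subgraph of $G$ with precisely $n-1$ edges. This subgraph is connected: starting from any $v_j$ and repeatedly passing to the parent, one reaches $v_1$ (alternatively, invoke that $\langle R_i\rangle$ is connected for every $i$, which forces the parent pointers to be well-founded). A connected spanning subgraph on $n$ vertices with $n-1$ edges is a tree, so the revelation tree indeed deserves its name.

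Next, since the input graph $T$ is itself a tree on $n$ vertices, it has exactly $n-1$ edges. The revelation tree is a subgraph of $T$ with $n-1$ edges, hence it must coincide with $T$. Consequently every edge of $T$ is a tree edge, and there are no cross edges at all — in particular, none incident on any $v_i$.

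For the ``in particular'' clause, suppose toward a contradiction that some $v_i$ has two distinct neighbours $v_a, v_b$ with $a, b < i$. Each of $v_a, v_b$ is visible before time $i$, hence neither can be a child of $v_i$ (children of $v_i$ first become visible exactly at time $i$), so each of the edges $\{v_a, v_i\}$ and $\{v_b, v_i\}$ is either the parent edge of $v_i$ or a cross edge. Since $v_i$ has a unique parent, at least one of these two edges is a cross edge, contradicting the first part. Therefore $v_i$ has at most one already-revealed neighbour, which (when it exists) is precisely its parent.

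I do not anticipate a genuine obstacle here; the only point requiring care is the justification that the revelation tree is connected and acyclic (equivalently, that the parent relation terminates at $v_1$), which is exactly where Observation~\ref{obs : Parent} together with the adversary's constraint that $\langle R_i\rangle$ be connected for all $i$ are used.
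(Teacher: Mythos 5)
Your argument is correct, but it takes a different route from the paper's. The paper states this lemma without proof, and its template for this kind of claim is the local cycle argument used for the cactus analogue (Lemma~\ref{Cact Cross Edge}): a cross edge $\{v_i,v_j\}$ together with the path of tree edges joining $v_i$ and $v_j$ (which exists because the parent pointers lead back to $v_1$) closes a cycle, which is impossible in a tree. You instead argue globally: the revelation tree is a connected spanning subgraph with $n-1$ edges, hence a spanning tree, hence equal to $T$, so every edge is a tree edge. Both are sound; the counting argument is arguably cleaner here but does not generalize to the cactus setting, whereas the cycle argument is the one the paper reuses there. Two small points of care in your write-up: (i) the claim that each non-root vertex contributes a \emph{distinct} tree edge deserves the one-line remark that the parent of $v_j$ has strictly smaller index (the adversary's connectivity constraint forces $v_j\in V_{j-1}$ for $j\ge 2$), so mutual parenthood cannot occur and the edge count really is $n-1$; (ii) in the ``in particular'' clause you only rule out two \emph{revealed} neighbours $v_a,v_b$ with $a,b<i$, whereas the lemma (as used in Corollary~\ref{cor:  DegThree}) concerns all neighbours \emph{visible} before time $i$, including visible-but-unrevealed ones; your argument covers these verbatim, since such a neighbour is likewise not a child of $v_i$ (children first become visible at time $i$) and so its edge to $v_i$ is again either the unique parent edge or a forbidden cross edge, but you should state it for visible neighbours rather than revealed ones.
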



\begin{corollary}\label{cor:  DegThree}
If $deg(v_i) \geq 3$ then $v_i \in S$.  
\end{corollary}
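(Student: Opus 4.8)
The plan is to argue directly from the selection rule of 2-DOMINATE together with the structural fact, just recorded in Lemma~\ref{Lemma : Tree Cross Edges}, that a tree input produces no cross edges. Recall that 2-DOMINATE selects $v_i$ as soon as it has at least two undominated visible neighbours, i.e. whenever $|N(v_i) \cap U_i| \geq 2$. So it suffices to exhibit two neighbours of $v_i$ that lie in $U_i$ at the moment $v_i$ is revealed.

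First I would use Lemma~\ref{Lemma : Tree Cross Edges} to see that every edge incident on $v_i$ other than the (at most one) edge joining $v_i$ to its parent is a tree edge leading to a child of $v_i$. Hence if $\deg(v_i) \geq 3$ then $v_i$ has at least two children $c, c'$, and since $c, c' \in N[v_i] \subseteq N[R_i] = V_i$, both are visible at time $i$.

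Next I would check that each child $c$ of $v_i$ is undominated immediately before decision $d_i$, i.e. $c \notin D_{i-1} = N[S_{i-1}]$. Again because there are no cross edges, every neighbour of $c$ other than $v_i$ is itself a child of $c$ and hence revealed strictly after $v_i$, while $v_i$ and $c$ are revealed at time $i$ or later; therefore $N[c] \cap R_{i-1} = \emptyset$. Since $S_{i-1} \subseteq R_{i-1}$, no vertex of $N[c]$ lies in $S_{i-1}$, so $c \notin D_{i-1}$, that is $c \in U_i$. Applying this to $c$ and $c'$ yields $|N(v_i) \cap U_i| \geq 2$, so condition~(1) of 2-DOMINATE fires and $v_i \in S$.

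I do not expect a genuine obstacle here; the argument is essentially an unpacking of Lemma~\ref{Lemma : Tree Cross Edges}. The only points needing a little care are the boundary case $i = 1$ (there $v_1$ has no parent, all its $\geq 3$ neighbours are children, and they are trivially undominated since $S_0 = \emptyset$) and keeping the distinction between \emph{visible} and \emph{revealed} straight, so that the children of $v_i$ are correctly placed in $V_i$ yet shown to be absent from $D_{i-1}$.
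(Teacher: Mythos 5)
Your argument is correct and is exactly the unpacking the paper intends: the corollary is stated without proof as an immediate consequence of Lemma~\ref{Lemma : Tree Cross Edges}, namely that $v_i$ has at most one neighbour (its parent) before revelation, so $\deg(v_i)\geq 3$ forces at least two children, which are newly visible and hence undominated, triggering rule~(1) of 2-DOMINATE. The only remark is that your middle step is slightly more elaborate than needed: a child $c$ of $v_i$ lies in $V_i\setminus V_{i-1}$, so no vertex of $N[c]$ is in $R_{i-1}\supseteq S_{i-1}$ directly from the definition of visibility, without invoking the cross-edge lemma a second time.
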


Now, we are ready to present formal details of the above plan. We spread the charges according to the following rule: 

Consider any $v_i \in S$ with $X_i = N[v_i] \cap U_i$.  Remarking that $X_i \neq \emptyset$ we then give each vertex in $X_i$ an equal charge of $\frac{1}{|X_i|}$.  That is, a vertex selected by $2$-DOMINATE spreads its charge evenly to all the newly dominated vertices in its closed neighbourhood. We say that each vertex in $X_i$ is charged by $v_i$.

\begin{observation}\label{obs: Charged Once}
Every vertex is charged by exactly one vertex.
\end{observation}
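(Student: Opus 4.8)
The statement asks that the spreading rule assigns to every vertex of $V$ exactly one ``charger'' in $S$, so the plan is to prove existence and uniqueness separately. Before counting, I would first record that the rule is well-defined, i.e.\ $X_i = N[v_i] \cap U_i \neq \emptyset$ whenever $v_i \in S$: indeed $2$-DOMINATE selects $v_i$ either because $|N(v_i) \cap U_i| \ge 2$, which already exhibits an element of $X_i$, or because $v_i$ saves some vertex $v_\ell$, in which case the Saviour Lemma (Observation~\ref{lemma: Saviour Lemma}), applied to the neighbour $v_i \in N[v_\ell]$, gives $v_\ell \in N[v_i] \cap U_i = X_i$.

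\emph{Every vertex is charged at least once.} Fix $v_j \in V$. Since $2$-DOMINATE outputs a dominating set — if no neighbour of $v_j$ is ever selected earlier, then the last-revealed vertex of $N[v_j]$ is forced into $S$ by clause (2) of the algorithm — the set $N[v_j] \cap S$ is nonempty; let $v_i$ be its member of least index. Then $v_j \in N[v_i]$, and since $v_i$ is revealed by time $i$, $v_j$ is visible at time $i$, i.e.\ $v_j \in V_i$. By minimality of $i$, no vertex of $N[v_j]$ with index $< i$ belongs to $S$, so $v_j \notin N[S_{i-1}] = D_{i-1}$; hence $v_j \in V_i \setminus D_{i-1} = U_i$ and therefore $v_j \in N[v_i] \cap U_i = X_i$. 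Thus $v_j$ is charged by $v_i$.

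\emph{Every vertex is charged at most once.} Suppose $v_j$ were charged by two distinct selected vertices $v_i, v_{i'}$ with $i < i'$. From $v_j \in X_i$ we get $v_j \in N[v_i]$, and since $i \le i' - 1$ we have $v_i \in S_i \subseteq S_{i'-1}$, whence $v_j \in N[S_{i'-1}] = D_{i'-1}$. But $v_j \in X_{i'} \subseteq U_{i'} = V_{i'} \setminus D_{i'-1}$ forces $v_j \notin D_{i'-1}$, a contradiction. Combining the two halves proves that every vertex is charged by exactly one vertex.

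I do not expect a genuine obstacle here: the argument is a careful unwinding of the definitions, and the only thing to stay disciplined about is keeping the three timelines straight — revealed ($R_i$), visible ($V_i$), and dominated ($D_i$) — together with the monotonicity $S_i \subseteq S_{i'-1}$ of the selected sets and the feasibility of the output of $2$-DOMINATE (the latter being exactly what guarantees the ``at least once'' direction).
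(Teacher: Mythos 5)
Your proof is correct, and it supplies exactly the definitional unwinding that the paper leaves implicit (the observation is stated there without proof, as an immediate consequence of the charging rule): existence via the minimal-index selected vertex in $N[v_j]$ together with feasibility of $2$-DOMINATE, and uniqueness via the monotonicity $S_i \subseteq S_{i'-1}$ forcing $v_j \in D_{i'-1}$, which is incompatible with $v_j \in U_{i'}$. No gaps; the preliminary check that $X_i \neq \emptyset$ via the Saviour observation is also the justification the paper tacitly invokes.
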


The preceding observation immediately implies that any vertex has charge at most $1$.  This observation is tight in the sense that, on certain inputs, there are vertices with charge equal to $1$.  A vertex with charge $1$ is a rather special case though.  In particular, if $v_i$ has charge $1$ then it must be saved by some vertex $v_{j}$ where $X_{j} = \{ v_i \}$ (this does not exclude the possibility that $v_i = v_{j}$).  If $v_i$ does not meet this condition then it must have charge at most $\frac{1}{2}$. 

\begin{lemma}\label{Tree Common Neighbour}
If $v_i$ and $v_j$ both have charge equal to $1$ then they share no common neighbours.
\end{lemma}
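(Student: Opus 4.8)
The plan is a proof by contradiction resting on three facts already in hand: Observation~\ref{obs: Charged Once} (each vertex is charged by exactly one vertex), Observation~\ref{lemma: Saviour Lemma}, and the remark preceding the lemma that a vertex of charge $1$ is saved and is the sole member of the $X$-set of the vertex that charges it. Suppose toward a contradiction that two distinct vertices $v_i$ and $v_j$, both of charge $1$, have a common neighbour $w$; then $w \notin \{v_i, v_j\}$ and $v_i, v_j \in N(w)$.

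First I would extract the consequences of ``charge $1$''. By Observation~\ref{obs: Charged Once}, $v_i$ is charged by a unique vertex $a$, which contributes to it exactly $\frac{1}{|X_a|}$; equating this to $1$ forces $|X_a| = 1$, i.e. $X_a = \{v_i\}$. Moreover, by the remark preceding the lemma, $v_i$ is saved, and by the same reasoning so is $v_j$.

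The key step — and the one I expect to be the only nontrivial point — is to prove $w \in S$. Since $v_i$ is saved and $w \in N[v_i]$, Observation~\ref{lemma: Saviour Lemma} tells us that at the step $t$ when $w = v_t$ is revealed, $v_i$ is visible and still undominated: $v_i \in N[w] \cap U_t$. Applying the same observation to $v_j$ gives $v_j \in N[w] \cap U_t$. As $v_i$ and $v_j$ are distinct vertices of $N(w)$, we get $|N(w) \cap U_t| \ge 2$, so condition~(1) in the definition of $2$-DOMINATE (with $k = 2$) is triggered at step $t$ and $w$ is selected.

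Finally I would derive the contradiction from the charging rule: a selected vertex spreads its charge over all of $X_w = N[w] \cap U_t$, and $v_i \in X_w$, so $w$ charges $v_i$. But $v_i$ has a unique charger, namely $a$, so $w = a$ and hence $X_w = X_a = \{v_i\}$; this contradicts $v_j \in X_w$ together with $v_j \ne v_i$. Everything outside the step ``$w \in S$'' is pure bookkeeping with Observations~\ref{obs: Charged Once} and~\ref{lemma: Saviour Lemma}, so the care needs to go into invoking the Saviour Lemma correctly to certify that $v_i$ \emph{and} $v_j$ are simultaneously undominated exactly when $w$ is processed.
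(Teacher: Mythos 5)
Your proposal is correct and follows essentially the same route as the paper's proof: use the remark that charge $1$ implies being saved, apply Observation~\ref{lemma: Saviour Lemma} to both $v_i$ and $v_j$ at the common neighbour $w$ to conclude $|N(w)\cap U_t|\ge 2$ and hence $w\in S$, and then observe that $v_i,v_j\in X_w$ contradicts charge $1$. Phrasing the final contradiction via the uniqueness of the charger rather than via "charge at most $\frac{1}{2}$" is only a cosmetic difference.
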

\begin{proof}
Suppose for the sake of deriving a contradiction that $v_{i'}$ were a common neighbour of $v_i$ and $v_j$.  Since $v_i$ is saved, by Observation $\ref{lemma: Saviour Lemma}$ it must be that $v_i \in N(v_i') \cap U_{i'}$.  Similarly, we have that $v_j \in N(v_i') \cap U_{i'}$.  That is, $|N(v_i') \cap U_{i'}| \geq 2$ and thus $v_i' \in S$.  Moreover, $X_{i'} = N[v_{i'}] \cap U_{i'}$ contains $v_i$ and $v_j$.  In particular, we have that $|X_{i'}| \geq 2$ with $v_i, v_j \in X_{i'}$ and therefore $v_i$ and $v_j$ receive charge no larger than $\frac{1}{2}$, a contradiction.
\end{proof}

\begin{lemma}\label{lemma: Tree Charge}
If $v_i$ and $v_j$ both have charge equal to $1$ then they are not adjacent.
\end{lemma}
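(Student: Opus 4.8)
Assume, for contradiction, that $v_i$ and $v_j$ are adjacent and both have charge $1$, and write the indices so that $i<j$. The plan is to exploit the structural consequence noted right after Observation~\ref{obs: Charged Once}: a vertex of charge $1$ must be saved by a vertex whose set $X$ is a singleton. So fix $v_a$ saving $v_i$ with $N[v_a]\cap U_a=\{v_i\}$ and $v_b$ saving $v_j$ with $N[v_b]\cap U_b=\{v_j\}$; by the definition of ``saves'' we also have $a=\max\{k: v_k\in N[v_i]\}$ and $b=\max\{k: v_k\in N[v_j]\}$.

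First I would nail down the tree structure. By Lemma~\ref{Lemma : Tree Cross Edges} there are no cross edges, so $\{v_i,v_j\}$ is a tree edge, and since $i<j$ this forces $v_i$ to be the parent of $v_j$; more generally, every neighbour of a vertex other than its parent is a child, and hence is revealed strictly later. Two easy consequences follow: since $v_i\in N[v_j]$ is revealed before $v_j$ we get $b\ge j>i$, hence $v_b\neq v_i$; and since $v_j\in N[v_i]$ we get $a\ge j$. The argument then splits on whether $v_j$ is itself the savior of $v_i$, i.e.\ on the cases $a=j$ and $a>j$.

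In the case $a=j$ the vertex $v_j$ saves $v_i$, so $v_j\in S$ (the $2$-DOMINATE rule selects anything that saves), and $N[v_j]\cap U_j=N[v_a]\cap U_a=\{v_i\}$; in particular $v_b\neq v_j$, so $v_b$ is a child of $v_j$ with $b>j$, whence $v_j\in S_{b-1}$. But then $v_j$ is already dominated before decision $d_b$, contradicting that $v_b$ saves $v_j$. In the case $a>j$ the vertex $v_a$ is a child of $v_i$ distinct from $v_j$, and because $v_a$ saves $v_i$ the vertex $v_j\in N[v_i]\setminus\{v_a\}$ lies outside $S_{a-1}$, so $v_j\notin S$ at all. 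Then the savior $v_b$ of $v_j$ satisfies $v_b\in S$, so $v_b\neq v_j$, and also $v_b\neq v_i$, so $v_b$ is a child of $v_j$; from $N[v_b]\cap U_b=\{v_j\}$ and $v_b\neq v_j$ we get $v_b\in D_{b-1}$, so some $w\in N[v_b]\cap S_{b-1}$, and $w$ is neither $v_b$ (revealed at time $b$) nor $v_j$ (since $v_j\notin S$), hence $w$ is a child of $v_b$ revealed strictly after time $b$ — impossible. The main obstacle is precisely this second case: one has to carefully exclude the coincidences $v_b=v_i$ and $v_b=v_j$ and then use the no-cross-edge structure to locate $v_b$'s own dominator $w$ among its later-revealed children, turning it into a timing contradiction; the rest is routine once the revelation-tree structure is set up.
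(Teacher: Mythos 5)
Your proof is correct, and it reaches the conclusion by a genuinely different route than the paper's. The paper first shows that neither $v_i$ nor $v_j$ can lie in $S$ (via the fact that a saved vertex has exactly one selected vertex in its closed neighbourhood), then notes that the savior $v_{i'}$ of $v_i$ satisfies $i<j<i'$, so $v_j$ and $v_{i'}$ are both children of $v_i$ and hence both lie in $N(v_i)\cap U_i$; the degree-threshold rule of $2$-DOMINATE then forces $v_i\in S$, a contradiction. You instead split on whether $v_i$'s savior is $v_j$ itself, and in both cases locate the contradiction at $v_j$'s savior $v_b$: when $a=j$, the already-selected $v_j$ lies in $S_{b-1}$ and violates the saving condition for $v_b$; when $a>j$, the condition $X_b=\{v_j\}$ with $v_b\neq v_j$ forces $v_b\in D_{b-1}$, yet in a tree the only neighbour of $v_b$ revealed before time $b$ is its parent $v_j\notin S$, so no dominator $w\in S_{b-1}$ can exist. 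Both arguments rest on the same toolkit (charge $1$ implies a singleton-$X$ savior, Observation~\ref{obs: Charged Once} and the remark following it, Lemma~\ref{Lemma : Tree Cross Edges}, and parent/child timing), but you never invoke rule (1) of $2$-DOMINATE, only the saving rule, at the cost of a slightly longer case analysis; one cosmetic slip is your justification of $b\ge j$, which follows simply from $v_j\in N[v_j]$ rather than from $v_i$ being revealed earlier. A comparative note: your case $a>j$ exploits that $v_b$ has a unique earlier-revealed neighbour, which is precisely what fails once a single cross edge is permitted, whereas the paper's strategy of concentrating the contradiction at $v_i$ (two undominated children force selection) is the one that extends, with additional cases, to the cactus setting in Lemma~\ref{lemma: osbervations}.
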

\begin{proof}
It is easy to see that $v_1$ cannot have charge $1$ on any input with at least $2$ vertices.  Therefore we safely assume that $1 < i < j$ such that both $v_i$ and $v_j$ have a parent.  We assume for the sake of deriving a contradiction that $v_i$ and $v_j$ are adjacent.

Now, since both $v_i$ and $v_j$ have charge $1$ it follows that they are both saved vertices.  First we show that both $v_i, v_j \notin S$.  Notice that any saved vertex $v_k$ has the property that $|N[v_k] \cap S| = 1$.  Therefore, if we assume by way of contradiction that $v_i \in S$ we obtain that $N[v_i] \cap S = N[v_j] \cap S = \{ v_i \}$ and therefore $v_i$ saves itself and $v_j$.  This yields that $X_i = N[v_{i}] \cap U_{i}$ contains $v_i$ and $v_j$.  In particular, we have that $|X_{i}| \geq 2$ with $v_i, v_j \in X_{i}$ and therefore $v_i$ and $v_j$ receive charge no larger than $\frac{1}{2}$, a contradiction.  An identical argument will yield that $v_j \notin S$.

Therefore it must be that $v_i$ is saved by some vertex $v_{i'}$ with $i' \notin \{ i, j \}$.  Moreover, we must have $i < j < i' $ since $i < j$ by assumption and $i' = \max \{ k \mid v_k \in N[v_i] \}$.  This implies that both $v_j, v_{i'}$ are children of $v_i$ by Observation $\ref{Lemma : Tree Cross Edges}$ yielding that $|N(v_i) \cap U_i| \geq 2$ but $v_i$ cannot be in $S$. 
\end{proof}

From the two preceding lemmas we have the immediate corollary.

\begin{corollary}\label{Cor : Neighbourhood Charge Trees}
For any vertex $v_i$, at most one vertex in $N[v_i]$ has charge $1$.
\end{corollary}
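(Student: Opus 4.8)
\textbf{Proof proposal for Corollary~\ref{Cor : Neighbourhood Charge Trees}.}
The plan is to argue by contradiction, assuming that two \emph{distinct} vertices of $N[v_i]$ both carry charge $1$, and then to dispatch the situation by a short case analysis that reduces everything to the two preceding lemmas. Since $N[v_i]$ is the closed neighbourhood, the two offending vertices are of one of two types: either one of them is $v_i$ itself and the other is a genuine neighbour of $v_i$, or both are genuine neighbours of $v_i$ distinct from $v_i$.

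In the first case, I would observe that we have an adjacent pair $v_i$, $u$ with $u \in N(v_i)$, both of charge $1$, which directly contradicts Lemma~\ref{lemma: Tree Charge} (no two adjacent vertices can both have charge $1$). In the second case, writing the two neighbours as $u$ and $w$ with $u \ne w$ and $u, w \ne v_i$, I would note that $v_i$ is then a common neighbour of $u$ and $w$, and both $u$ and $w$ have charge $1$; this contradicts Lemma~\ref{Tree Common Neighbour} (two vertices of charge $1$ share no common neighbour). Since the two cases are exhaustive, in either event we reach a contradiction, so at most one vertex of $N[v_i]$ can have charge $1$.

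I do not anticipate a real obstacle here: the corollary is essentially a bookkeeping consequence of Lemmas~\ref{Tree Common Neighbour} and~\ref{lemma: Tree Charge}, and the only thing requiring a moment of care is making sure the case split genuinely covers every way two distinct elements of the \emph{closed} neighbourhood could both be charge-$1$ vertices (in particular, remembering to handle the possibility that one of them is $v_i$ itself). Once that is spelled out, the argument is immediate.

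\begin{proof}
Suppose, for contradiction, that there are two distinct vertices in $N[v_i]$, each of charge $1$. If one of them is $v_i$ itself, then the other is some $u \in N(v_i)$ of charge $1$, so $v_i$ and $u$ are adjacent and both have charge $1$, contradicting Lemma~\ref{lemma: Tree Charge}. Otherwise both are genuine neighbours $u, w \in N(v_i)$ with $u \ne w$; then $v_i$ is a common neighbour of $u$ and $w$, and $u, w$ both have charge $1$, contradicting Lemma~\ref{Tree Common Neighbour}. In either case we obtain a contradiction, so at most one vertex of $N[v_i]$ has charge $1$.
\end{proof}
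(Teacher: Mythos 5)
Your proof is correct and is exactly the argument the paper has in mind: the corollary is stated there as an immediate consequence of Lemmas~\ref{Tree Common Neighbour} and~\ref{lemma: Tree Charge}, and your two-case split (one of the charge-$1$ vertices is $v_i$ itself versus both are neighbours of $v_i$) is just the spelled-out version of that deduction. No gaps.
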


Now, we finish the proof of $2$-competitiveness of $2$-DOMINATE on trees.

\begin{proof}[Proof of Theorem~\ref{thm : UB Trees}]
The lower bound follows from Theorem \ref{thm : LB Trees}.  Let $v_i \in OPT$ be an arbitrary vertex in $OPT$.  We consider two cases $\textbf{(1)}$ $deg(v_i) = 2$ or $\textbf{(2)}$ $deg(v_i) \geq 3$.

$\textbf{Case 1 :}$ Suppose that $deg(v_i) = 2$ and hence $|N[v_i]| = 3$.  By Corollary $\ref{Cor : Neighbourhood Charge Trees}$ it follows that at most one vertex in $N[v_i]$ has charge $1$.  If no vertices in $N[v_i]$ have charge $1$ then $ch(x) \leq \frac{1}{2}$ for each $x \in N[v_i]$ and we obtain that $\sum\limits_{x \in N[v_i]}ch(x) \leq 3\big(\frac{1}{2}\big) < 2$.  If there is exactly one vertex $x' \in N[v_i]$ with charge $1$ we therefore obtain that $\sum\limits_{x \in N[v_i]}ch(x) = \sum\limits_{x \in N[v_i] \setminus \{ x' \}}ch(x) + ch(x') \leq \frac{2}{2} + 1 = 2$.

$\textbf{Case 2 :}$ Suppose that $deg(v_i) \geq 3$.  By Corollary \ref{cor: DegThree} it follows that $v_i \in S$ with at least $2$ children.  Let $C_i = V_i \setminus V_{i - 1}$ denote the children of $v_i$ and remark that $C_i \subseteq X_i$.  That is, each child of $v_i$ is charged by $v_i$ and only $v_i$.  Therefore the children of $v_i$ can receive at most the full initial charge on $v_i$ and thus attribute a charge of at most $1$.  

Now we claim that any vertex in $N[v_i] \setminus C_i$ has a charge of at most $\frac{1}{2}$.  Indeed, suppose a vertex $v_{i'} \in N[v_i] \setminus C_i$ has charge $1$ then it must be saved by $v_i$ since $|N[v_{i'}] \cap S| = 1$ for any saved vertex $v_{i'}$.  That is, there is exactly one vertex in its closed neighbourhood that is selected and since $v_i$ is selected it must be $v_i$.  Thus, we must have that $v_{i'} \in X_i$ but since $C_i \subseteq X_i$ we know that $|X_i| \geq 2$ and thus $v_{i'}$ receives a charge of no more than $\frac{1}{2} < 1$, contradicting our assumption that $v_{i'}$ has charge $1$. 

Thus, by remarking that $|N[v_i] \setminus C_i| \leq 2$ we obtain that $\sum\limits_{x \in N[v_i]}ch(x) = \sum\limits_{v_{j} \in C_i}ch(v_j) + \sum\limits_{v_{i'} \in N[v_i] \setminus C_i}ch(v_{i'}) \leq 1 + 2\big(\frac{1}{2}\big) = 2$ as desired.
\end{proof} 

\subsection{Cactus Graphs}
\label{ssec:cactus}

A graph $G$ is said to be a cactus graph if it is connected and every edge lies on at most one cycle.  \cite{hedetniemi1986linear} provide an exact offline algorithm that runs in linear time for finding a minimum dominating set of a cactus graph.  Of course, an efficient offline algorithm does not guarantee that an online algorithm can perform well but fortunately, cactus graphs are a class of graphs for which an online algorithm can achieve constant competitive ratio.  In this section, we show that $2$-DOMINATE is $\frac{5}{2}$-competitive when inputs are restricted to cactus graphs, and that this is as well as any algorithm can perform.

Before presenting a lower bound of $\frac{5}{2}$ on all online algorithms we describe a gadget that is used in the proof.  The gadget itself is a cactus graph on $3 \leq n \leq 4$ vertices with the property that $OPT$ selects exactly $1$ vertex and any algorithm $ALG$ selects at least $2$ vertices.  Consider revealing a root vertex $r$ with $2$ children $c$ and $c'$.  If $ALG$ does not select $r$ then  both $c, c'$ are revealed as only adjacent to $r$ and $ALG$ must select both whereas $OPT$ selects only $r$.  If $ALG$ does select $r$ then $c$ is revealed as adjacent to $c'$, and $c'$ is revealed with an additional child $x$.  The vertex $x$ is adjacent only to $c'$ and thus $ALG$ must select at least one of $c', x$ whereas $OPT$ selects only $c'$ (both cases are depicted in figure \ref{fig:cactus-2-gadget}).  Given any input cactus graph with a visible vertex $r$ not yet revealed this gadget can be constructed with $r$ as the root.  Within the proof of the lower bound we call this a $2$-gadget. 

\begin{figure}[h]
\centering
\includegraphics[scale=0.5]{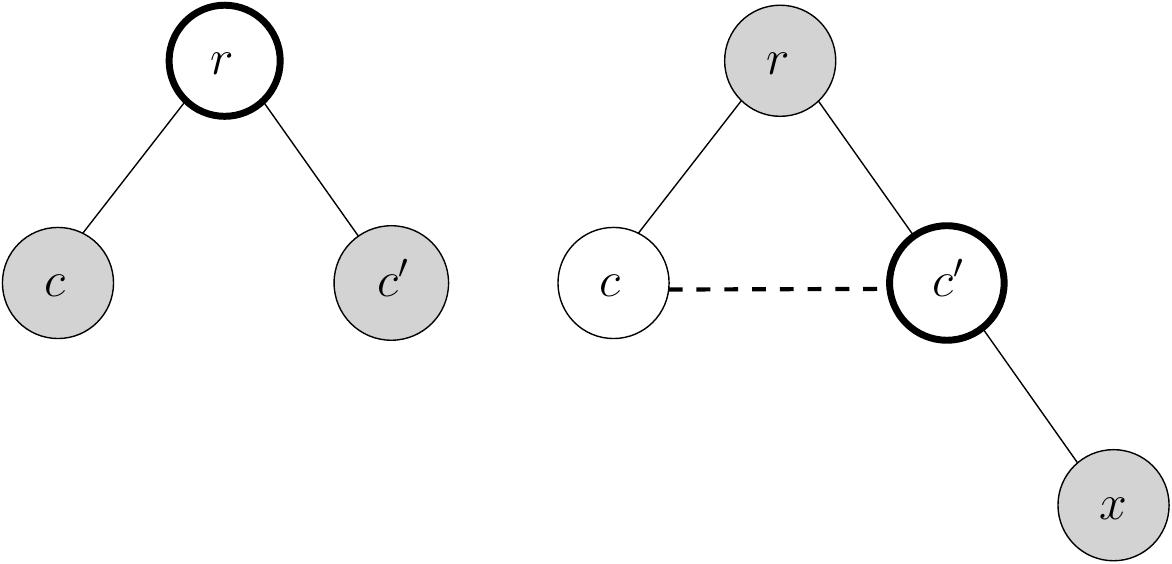}
\caption{The cactus $2$-gadget : The leftmost figure depicts the case where $ALG$ does not select the root $r$ and rightmost depicts the case where $ALG$ selects $r$.}\label{fig:cactus-2-gadget}
\end{figure}

\begin{theorem}\label{THM : LB Cactus}
$\rho(ALG, CACTUS) \geq \frac{5}{2}$ for any algorithm $ALG$.
\end{theorem}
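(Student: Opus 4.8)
\textbf{Proof proposal for Theorem~\ref{THM : LB Cactus}.}

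The plan is to mimic the structure of the tree lower bound (Theorem~\ref{thm : LB Trees}), but replace the recursively-grown ``chain of degree-2 vertices'' with a chain of $2$-gadgets, each of which forces a $2$-for-$1$ loss, and to supplement this with the extra leverage that cycles afford in a cactus. First I would set up a root $v_1$ with $k = \lceil c/\epsilon \rceil$ children $c_1, \ldots, c_k$ for a suitable constant $c$, exactly as in the tree proof, and run an analogous adaptive process at each $c_i$ in turn: at $c_i$ the adversary attaches a fresh $2$-gadget with $c_i$ playing the role of the root $r$; whenever $ALG$ selects a vertex inside the current gadget, the adversary hangs the next $2$-gadget off an appropriate not-yet-revealed visible vertex of the previous one, continuing until $ALG$ either stops selecting (case (i)) or has selected $k$ vertices along this branch (case (ii)). Because each $2$-gadget guarantees $ALG$ pays at least $2$ while $OPT$ pays only $1$ on that gadget, after $j_i$ gadgets in the $i$-th branch we have $ALG \ge 2 j_i$ there, and the key point is that $OPT$ can dominate the entire $i$-th branch using roughly $j_i$ vertices \emph{plus a constant}, by choosing the ``middle'' vertex of each consecutive pair of linked gadgets so that one well-placed vertex dominates the tail of one gadget and the head of the next. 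The ratio then tends to $2$ from this part alone, so to push past $2$ I need the cycle structure.

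The cycle leverage is where the improvement from $2$ to $5/2$ comes from, and I expect \emph{this} to be the main obstacle: designing a cactus gadget in which a single $OPT$ vertex dominates a cycle's worth of vertices while $ALG$ is forced to spend proportionally more. Concretely, I would use the following idea. Reveal a vertex $u$ together with a cycle $u, w_1, w_2, \ldots, w_m, u$ where $OPT$ will place $u$ (dominating $w_1$ and $w_m$) and can finish the cycle cheaply, but an online algorithm, seeing the cycle revealed incrementally from one side, is repeatedly coerced via the $2$-gadget trick attached to interior cycle vertices into spending more than the cycle's minimum domination number. By balancing the number of such ``cycle gadgets'' against the number of ``path gadgets'', and arranging that every forced $ALG$-selection inside a cycle buys only a $5/3$-ish local rate while every forced selection inside the linking paths buys a $2$ local rate, a weighted combination yields an amortized ratio approaching $5/2$. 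The delicate part is verifying that $OPT$ really can get away with the claimed small number of vertices \emph{globally} — i.e., that the optimal dominating set of the whole constructed cactus is not forced up by the interplay of the cycles and the attachment points — and that the connectivity constraint on the revelation order ($\langle R_i\rangle$ connected) is respected throughout the adaptive construction.

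To finish, I would do the two-case analysis exactly as in Theorem~\ref{thm : LB Trees}: \textbf{Case 1}, every branch terminates with $j_i < k$ (so $ALG$ voluntarily stopped, meaning it genuinely paid the full forced cost on each completed gadget), and \textbf{Case 2}, some branch reaches $j_\ell = k$ (so the adversary caps that branch and reveals the remaining $c_j$, $j > \ell$, as leaves of $v_1$, costing $OPT$ nothing extra). In each case I sum the per-branch bounds $ALG \ge (5/3)\,(\text{cycle part}) + 2\,(\text{path part})$ and $OPT \le (\text{matching smaller counts}) + O(1)$, add the single possible $OPT$-vertex $v_1$, and choose $k$ large enough (in terms of $1/\epsilon$) that the additive constants wash out, giving $ALG/OPT \ge 5/2 - \epsilon$. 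Since $\epsilon>0$ was arbitrary, the asymptotic lower bound $\rho(ALG, CACTUS) \ge 5/2$ follows. I would also include a figure, analogous to Figure~\ref{fig:cactus-2-gadget}, showing one cycle-gadget with both of the adversary's responses, to make the forced-selection argument transparent.
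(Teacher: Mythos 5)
Your outer framework is indeed the same as the paper's: a root $v_1$ with $k$ children, an adaptive trap run branch by branch, a two-case termination analysis as in Theorem~\ref{thm : LB Trees}, the $2$-gadget of Figure~\ref{fig:cactus-2-gadget} as a subroutine, and amortization of the additive slack over $k$ rounds. But the heart of the theorem --- an explicit adaptive construction in which each round costs $ALG$ at least $5$ while $OPT$ pays at most $2$ --- is missing from your proposal, and the mechanism you sketch in its place cannot supply it. You propose that selections coerced inside long ``cycle gadgets'' have a local rate of roughly $\frac{5}{3}$ while selections along the linking paths have rate $2$, and that a weighted combination yields $\frac{5}{2}$; this fails already at the level of arithmetic, since any amortized mixture of pieces with ratios $\frac{5}{3}$ and $2$ is at most $2$. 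Moreover the cycle gadget itself and the global $OPT$ accounting are left unspecified --- precisely the part you flag as the main obstacle --- so the proposal as written only re-derives a bound of $2$.

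For contrast, the paper's trap gets $\frac{5}{2}$ without long cycles: in each round the current chain vertex is revealed with a few children, one of which is revealed adjacent to a sibling (a single cross edge, which is all the ``cycle leverage'' needed to stay a cactus). If $ALG$ declines that vertex, its children become leaves and a sibling becomes the root of a $2$-gadget, forcing a local ratio of at least $\frac{5}{2}$ outright; if $ALG$ accepts it, the accepted vertex is made worthless to $OPT$ (it is dominated by a vertex $OPT$ takes anyway, thanks to the cross edge or the parent), the remaining siblings are turned into roots of two $2$-gadgets, and the trap recurses on the accepted vertex --- so the round contributes $1+2+2=5$ to $ALG$ against $2$ for $OPT$. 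Some such explicit per-round $5$-versus-$2$ mechanism, with the cross edges doing the work of letting a single $OPT$ vertex cover both a gadget root and the wasted chain vertex, is what your argument needs and does not yet contain.
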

\begin{proof}
We prove this lower bound on the asymptotic competitive ratio. We construct a sequence of graphs, one for each $k$, such that the competitive ratio approaches $5/2$ as $k$ goes to infinity. We start by revealing the first vertex $v_1$ with $k$ children. Then we run an adversarial process starting with each child of $v_1$ in order.  The process consists of several rounds, each round increases the output of $OPT$ and $ALG$. The process might terminate for one of two reasons: either (i) we guarantee strict competitive ratio at least $5/2$ on the subcactus rooted at the child, or (ii) we ran the process for sufficiently long time, i.e., $k$ rounds. Each round increases $OPT$ by a multiple of $2$ while increasing $ALG$ by a multiple of $5$. However, due to initial set up of the process $ALG$ might be off by additive $1$ from the intended multiple of $5$. When the process terminates according to (i), it means that $ALG$ made a mistake and this ``off by 1'' is corrected to give a strict competitive ratio. When the process terminates according to (ii), it means that the process ran for sufficient duration that the ``off by 1'' has been amortized and the ratio approaches $5/2$ asymptotically. After the first child of $v_1$  that terminates according to (ii) (if it exists), the rest of the children of $v_1$ are revealed as leaves. The formal analysis is analogous to that done in Theorem~\ref{thm : LB Trees} and is omitted. We present the process which constitutes the crux of the argument.

Using each child of $v_1$, we construct a subcactus for which $\frac{ALG}{OPT}$ approaches $\frac{5}{2}$. Let $c$ be a child of $v_1$ and reveal $c$ with $3$ children.  If $ALG$ does not select $c$ then each child of $c$ is revealed with no additional neighbours and $ALG$ must select all $3$ children whereas $OPT$ selects $c$.  Suppose then that $ALG$ selects $c$ and let $c_{1, 1}, c_{1, 2}, c_{1, 3}$ be the three children of $c$.  Reveal $c_{1, 1}$ as adjacent to $c_{1, 2}$ along with $2$ additional children.  If $ALG$ does not select $c_{1, 1}$ then the children of $c_{1, 1}$ are revealed as leaves, forcing $ALG$ to select them and $c_{1, 3}$ is revealed as the root of a $2$-gadget ($c_{1, 2}$ is revealed with no additional neighbours).  Thus, $\frac{ALG}{OPT} \geq \frac{5}{2}$ in this case (see Figure \ref{fig:cactus-second-rejected}).  If instead $ALG$ selects $c_{1, 1}$ then $c_{1, 2}$ and $c_{1, 3}$ are revealed as the roots of two distinct $2$-gadgets and since $c$ is dominated by $v_1$ (we assume that $v_1 \in OPT$) we have that $\frac{ALG}{OPT} \geq \frac{5}{2}$ on this subcactus (excluding $c_{1, 1}$) thus far and we continue the trap with $c_{1, 1}$ as the root.  

At this point, $c_{1, 1}$ is selected by $ALG$ we let $c_{2, 1}, c_{2, 2}$ be the $2$ children of $c_{1, 1}$ and we reveal $c_{2, 1}$ as adjacent to $c_{2, 2}$ with $2$ children $c_{3, 1}, c_{3, 2}$.  If $ALG$ does not select $c_{2, 1}$ then $c_{3, 1}, c_{3, 2}$ are revealed as leaves and $ALG$ selects $c_{1, 1}, c_{3, 1}, c_{3, 2}$ and $OPT$ can select $c_{2, 1}$ for a performance of $3$ along with the running performance of $\frac{5}{2}$ (see Figure \ref{fig:cactus-third-rejected}).  If $ALG$ does select $c_{2, 1}$ then $c_{3, 1}$ is revealed as adjacent to $c_{3, 2}$ with two children $c_{4, 1}, c_{4, 2}$.  If $ALG$ does not select $c_{3, 1}$ then $c_{4, 1}, c_{4, 2}$ are revealed as leaves and $c_{2, 2}$ is revealed with an additional leaf neighbour $l_{2, 2}$ so that $ALG$ must select at least one of $c_{2, 2}, l_{2, 2}$.  Thus, $ALG$ here selects $c_{1, 1}, c_{2, 1}, c_{4, 1}, c_{4, 2}$ and at least one of $c_{2, 2}, l_{2, 2}$ whereas $OPT$ can select $c_{3, 1}$ and $c_{2, 2}$ for a performance of $\frac{5}{2}$ (see Figure \ref{fig:cactus-fourth-rejected}).  If instead $ALG$ selects $c_{3, 1}$ (thus far $c_{1, 1}, c_{2, 1}$ and $c_{3, 1}$ are all selected) then $c_{2, 2}$ is revealed with an additional leaf neighbour $l_{2, 2}$ so that $ALG$ must select at least one of $c_{2, 2}, l_{2, 2}$, and $c_{3, 2}$ is revealed as the root of a $2$-gadget so that $\frac{ALG}{OPT} \geq \frac{5}{2}$ on the subcactus thus far (excluding $c_{3, 1}$) and we repeat the trap with $c_{3, 1}$ as the selected root (see Figure \ref{fig:cactus-fourth-accepted}).

\end{proof}

\begin{figure}[h]
\centering
\includegraphics[scale=0.6]{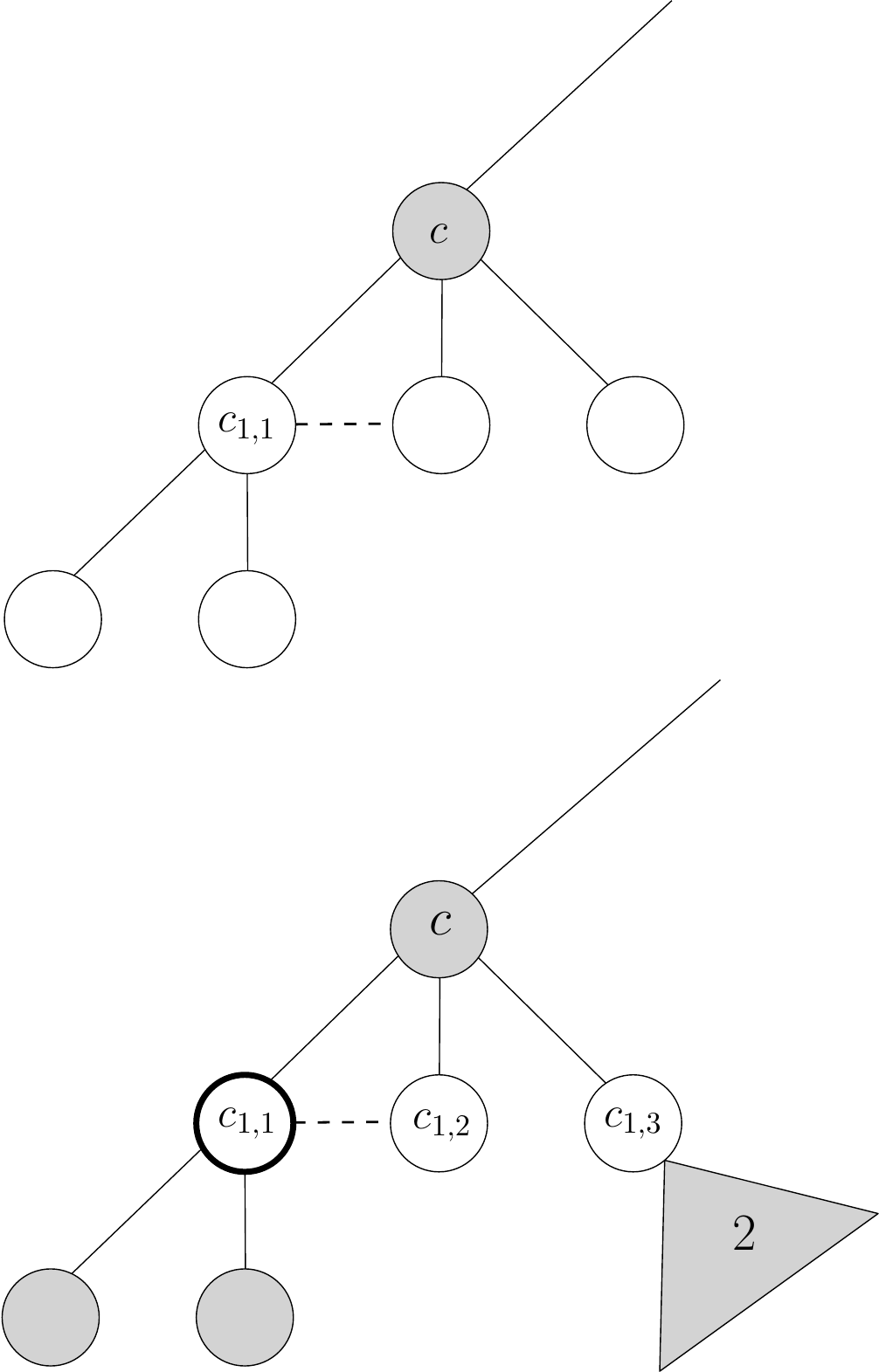}
\caption{The case described in Theorem~\ref{THM : LB Cactus} where $ALG$ does not select $c_{1, 1}$.  }\label{fig:cactus-second-rejected}
\end{figure}

\begin{figure}[h]
\centering
\includegraphics[scale=0.6]{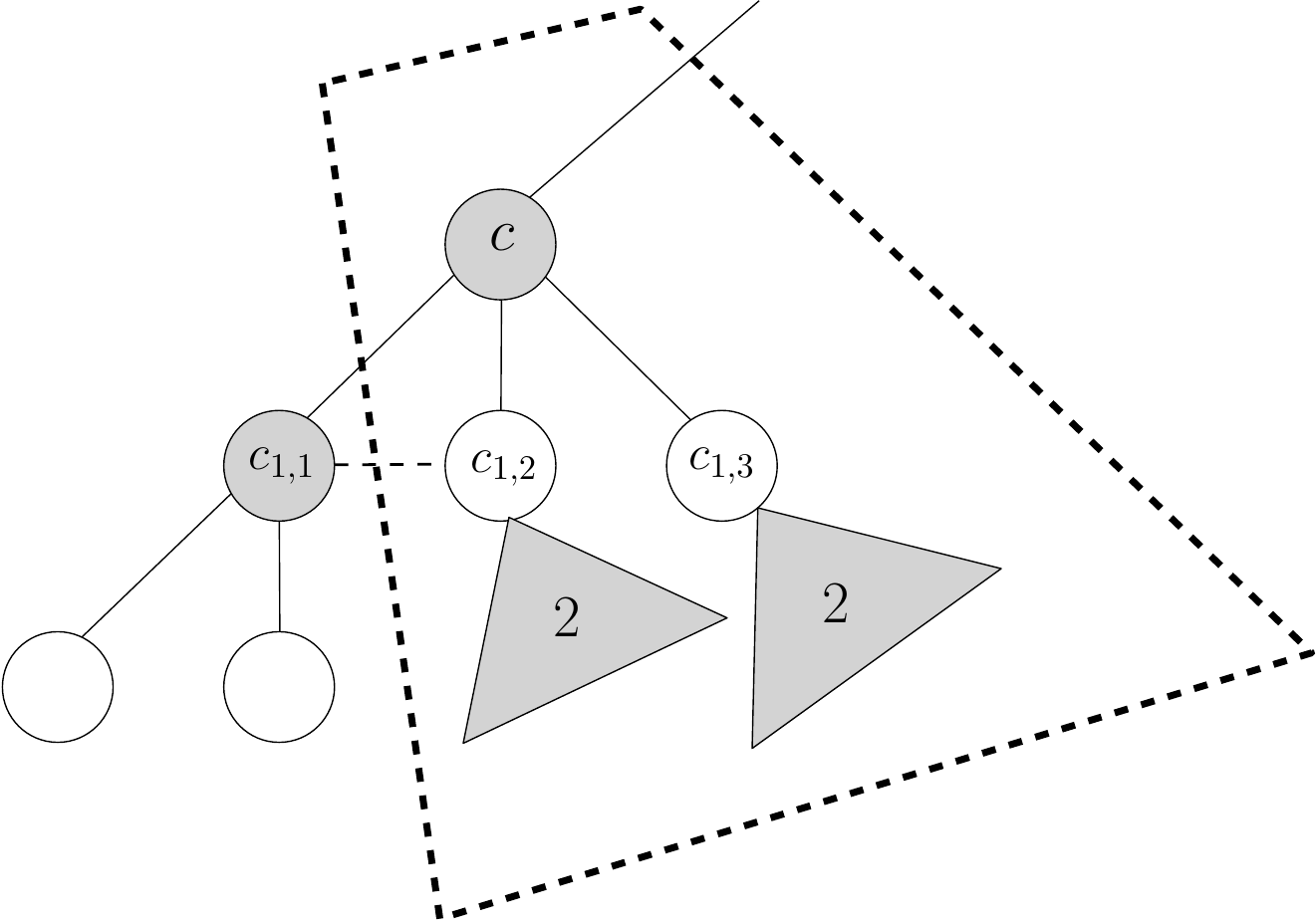}
\caption{The case described in Theorem~\ref{THM : LB Cactus} where $ALG$ does select $c_{1, 1}$.  The enclosed region contributes a performance of $\frac{5}{2}$.  A trap is continued in this case with the root $c_{1, 1}$. }\label{fig:cactus-second-accepted}
\end{figure}

\begin{figure}[h]
\centering
\includegraphics[scale=0.6]{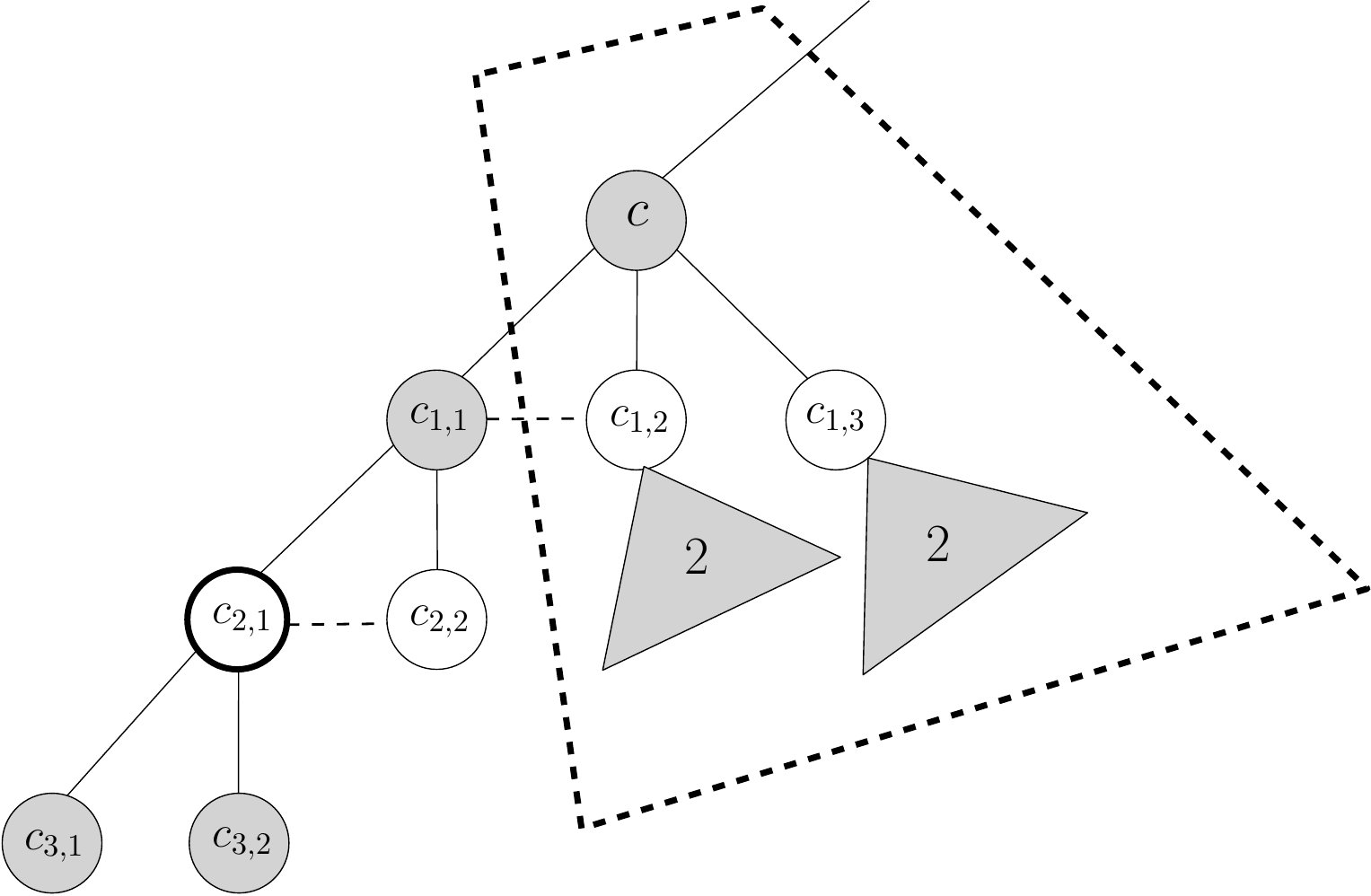}
\caption{The case described in Theorem~\ref{THM : LB Cactus} where $ALG$ does not select $c_{2, 1}$.}\label{fig:cactus-third-rejected}
\end{figure}

\begin{figure}[h]
\centering
\includegraphics[scale=0.6]{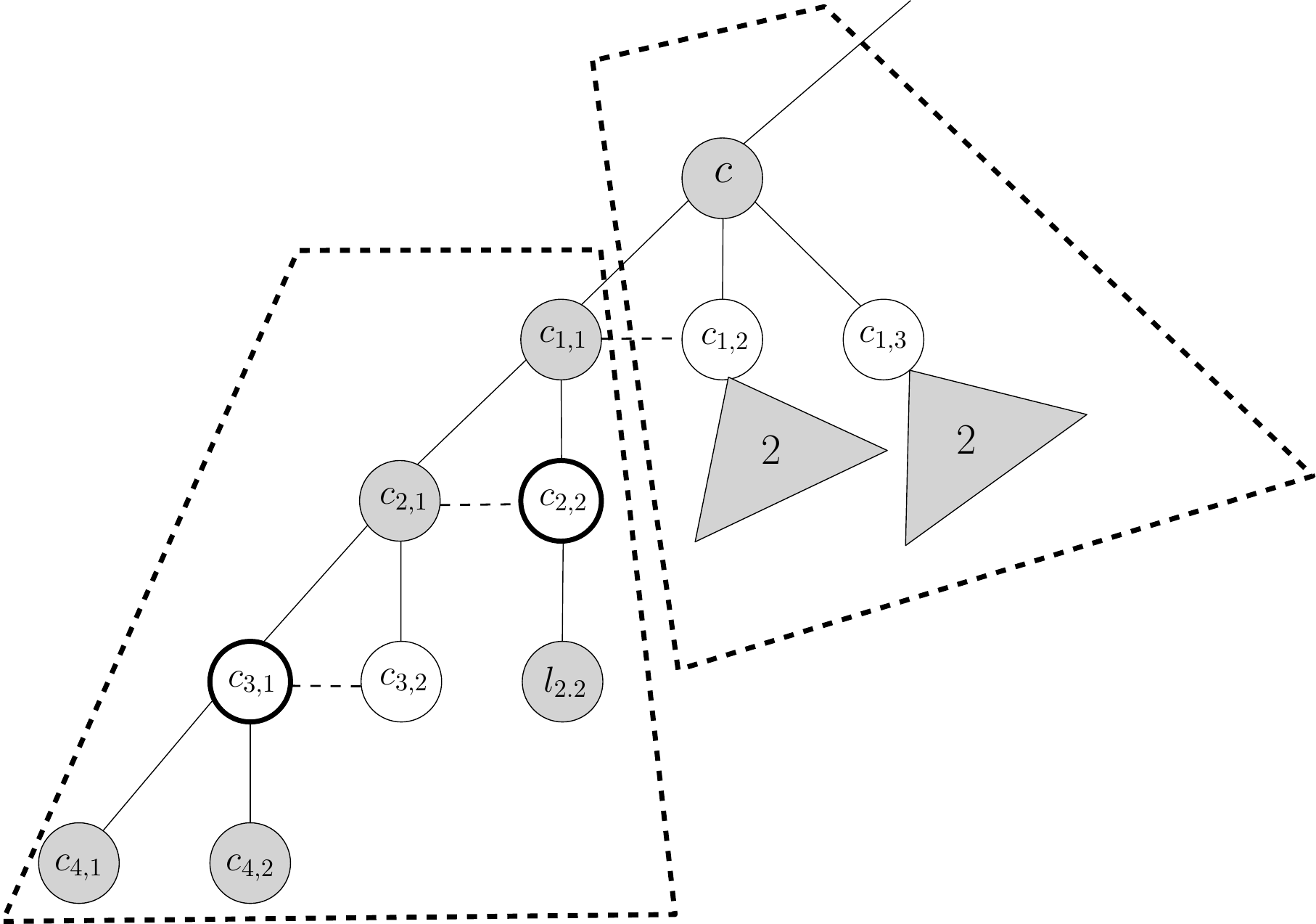}
\caption{The case described in Theorem~\ref{THM : LB Cactus} where $ALG$ does not select $c_{3, 1}$.  The enclosed regions each contribute a performance of $\frac{5}{2}$.}\label{fig:cactus-fourth-rejected}
\end{figure}

\begin{figure}[h]
\centering
\includegraphics[scale=0.6]{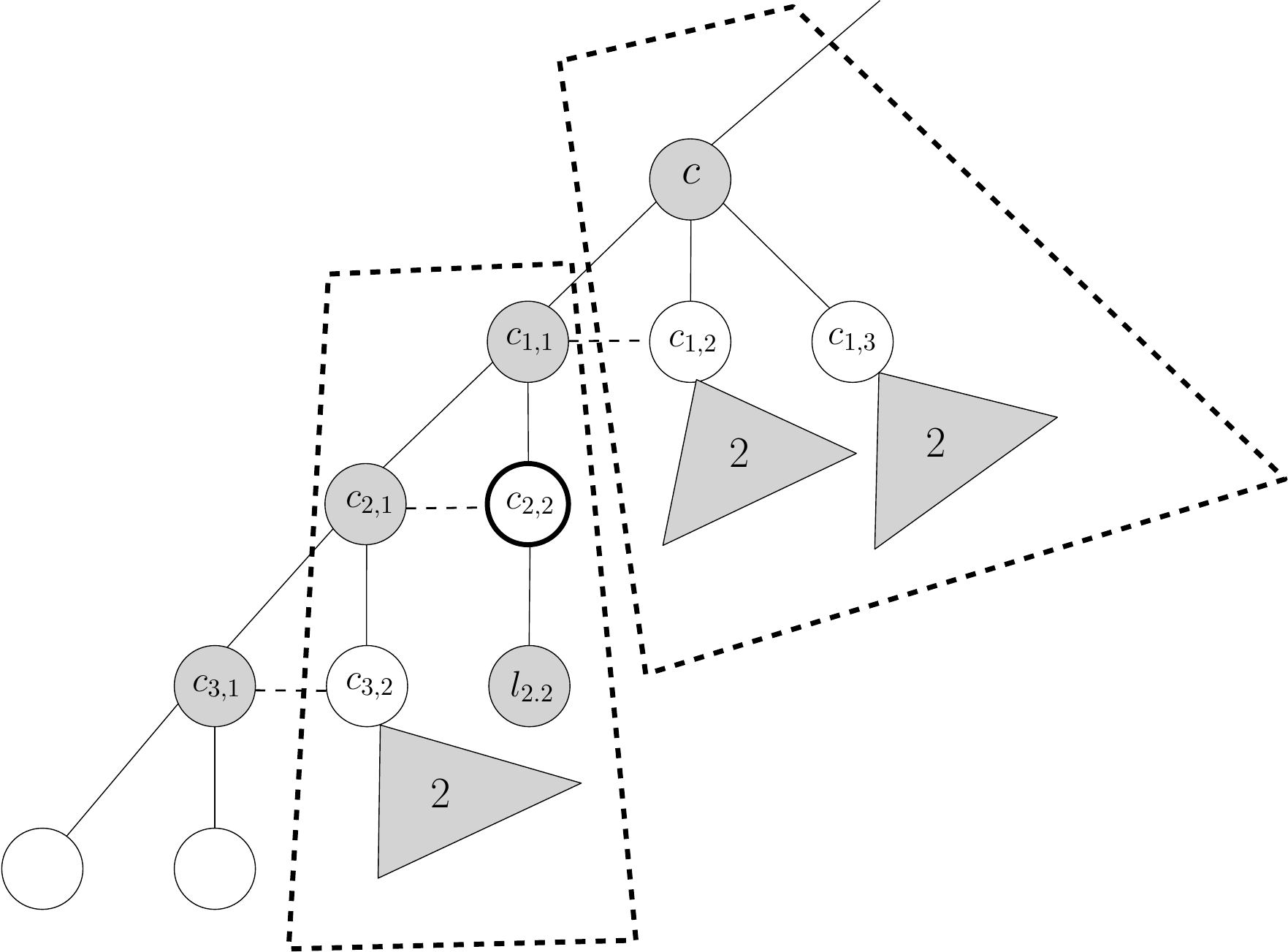}
\caption{The case described in Theorem~\ref{THM : LB Cactus} where $ALG$ does select $c_{3, 1}$.  The enclosed regions each contribute a performance of $\frac{5}{2}$.  The trap used on a selected root $c_{1, 1}$ is repeated on the root $c_{3, 1}$.}\label{fig:cactus-fourth-accepted}
\end{figure}

\begin{theorem}\label{thm: 2domUB}
$\rho($2-DOMINATE, CACTUS$) = \frac{5}{2}$.
\end{theorem}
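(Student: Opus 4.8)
The lower bound is Theorem~\ref{THM : LB Cactus}, so it remains to prove that $2$-DOMINATE is strictly $\tfrac52$-competitive on cactus graphs, and the plan is to reuse the charging framework of Theorem~\ref{thm : UB Trees}. Assign charge $1$ to every vertex of $S$ and $0$ to every other vertex, and let each $v_i\in S$ redistribute its unit of charge evenly among the vertices of $X_i=N[v_i]\cap U_i$. Conservation is immediate, and the proof of Observation~\ref{obs: Charged Once} carries over verbatim, so after redistribution every vertex has charge equal to $1/|X_j|$ for the unique $v_j\in S$ that charged it; in particular its charge is at most $1$, with charge exactly $1$ only when it is saved by some $v_j$ whose $X_j$ is a singleton. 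Everything then reduces to the $OPT$-concentration inequality $ch^*(N[v])\le\tfrac52$ for every $v\in OPT$ (after the usual reduction to $\deg(v)\ge 2$ and checking the finitely many small cactus graphs), since together with $\sum_{v\in OPT}ch^*(N[v])\ge\sum_u ch^*(u)=|S|$ this gives $|S|\le\tfrac52\,OPT$.

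The first step is to establish the cactus counterpart of Lemma~\ref{Lemma : Tree Cross Edges}: at the moment $v_i$ is revealed it is incident to at most one cross edge (a second one would put some edge of $\langle R_{i-1}\rangle\cup\{v_i\}$ on two distinct cycles, contradicting the cactus property), hence $v_i$ has at most two neighbours of smaller index -- its parent and at most one "back" cross-neighbour -- while every neighbour of larger index is either a revelation-tree child or is joined to $v_i$ by a cross edge, and those larger-index neighbours are all visible and undominated at time $i$. This yields the decomposition $N[v]=\bigl(C_v\cup F_v\bigr)\ \sqcup\ \{v\}\cup\{\text{parent}\}\cup\{\text{back cross-neighbour}\}$, where $C_v$ are the children and $F_v$ the "forward" cross-neighbours of $v$. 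One then splits into two cases. If $v\in S$, then $C_v\cup F_v\subseteq X_v$, so by Observation~\ref{obs: Charged Once} the total charge on $C_v\cup F_v$ is $|C_v\cup F_v|/|X_v|\le 1$; moreover each of $v$, the parent and the back cross-neighbour is dominated by $v$, hence cannot be saved by a vertex other than $v$, so none of them has charge $1$ unless $v$ itself sends a full unit to exactly one of them -- which forces $X_v$ to be that singleton and hence $C_v\cup F_v=\emptyset$ -- and in every sub-case the total is at most $2$. If $v\notin S$, then $v$ has at most one larger-index neighbour (two would trigger rule~(1)) and saves nothing, so $|N[v]|\le 4$; using that a charge-$1$ vertex is saved with a singleton $X_j$, the cactus form of Lemma~\ref{Tree Common Neighbour} (whose proof, via Observation~\ref{lemma: Saviour Lemma}, uses no tree structure) together with a short extra argument shows that $N[v]$ contains at most one charge-$1$ vertex, whence $ch^*(N[v])\le 1+3\cdot\tfrac12=\tfrac52$. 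The worst case is realised exactly by the $2$-gadget used in Theorem~\ref{THM : LB Cactus}, so the bound is tight even against the charging.

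I expect the main obstacle to be the $v\notin S$ case, precisely because the clean "$X_v$ absorbs the extras'' trick is unavailable there: one must rule out two charge-$1$ vertices inside $N[v]$ directly, which amounts to re-proving cactus analogues of Lemma~\ref{lemma: Tree Charge} and Corollary~\ref{Cor : Neighbourhood Charge Trees}. In the tree proof, Lemma~\ref{lemma: Tree Charge} leans on "every vertex has at most one neighbour before it is revealed''; in a cactus this must be replaced by the at-most-one-cross-edge structure, and adjacent charge-$1$ vertices must be excluded by separate arguments according to whether the connecting edge is a bridge or lies on a cycle (triangles being immediately handled by the common-neighbour lemma, longer cycles by tracking the savior of each charge-$1$ vertex and the latest-revealed neighbour of $v$). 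The remaining work -- bookkeeping with the decomposition of $N[v]$, the case $v\in S$, and the small cactus graphs -- mirrors Section~\ref{ssec:trees}.
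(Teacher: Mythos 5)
Your overall plan is the paper's plan (same charging scheme, same cactus structural fact about cross edges), but the heart of the matter is missing. The step you wave at as ``a short extra argument'' --- that two charge-$1$ vertices cannot be adjacent, hence $N[v]$ contains at most one charge-$1$ vertex --- is exactly where the paper's proof does its real work (part~\ref{lemma: Cact Charge} of Lemma~\ref{lemma: osbervations}): one first shows neither vertex is in $S$, that each is saved by a distinct later neighbour, then uses Corollary~\ref{Cact DegFour} to pin the degree to $3$, determines which incident edges are tree edges, and in the remaining configuration exhibits two distinct cycles through the edge joining $v_i$ to its parent, contradicting the cactus property. Your sketch (``bridge vs.\ on a cycle, track the savior and the latest-revealed neighbour'') does not carry this out, and it is not routine; without it your $v\notin S$ case, where you invoke ``the cactus form'' of Corollary~\ref{Cor : Neighbourhood Charge Trees}, is unproven. (Note the paper needs this lemma for every $v\in OPT$ of degree at most $3$, not only for $v\notin S$; your case split on membership in $S$ does let you bypass it when $v\in S$, but not otherwise.)

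Two of your intermediate claims are also wrong as stated, though repairable. First, $C_v\cup F_v\subseteq X_v$ is false: a forward cross-neighbour of $v$ was visible before $v$ was revealed and may already be dominated (e.g.\ by its selected parent), so it need not lie in $X_v$; what is true is that such a vertex cannot be saved (any candidate savior is revealed after $v\in S$), hence has charge at most $\tfrac12$, and the at-most-one-cross-edge property bounds how many such ``extra'' vertices exist. With that repair the $v\in S$ case yields $ch^*(N[v])\le 1+3\cdot\tfrac12=\tfrac52$, not the $2$ you claim --- harmless for the theorem, but your accounting as written does not support the stated bound. Second, for $v\notin S$ the assertion ``two larger-index neighbours would trigger rule~(1)'' fails for the same reason (an already-dominated forward cross-neighbour does not count toward $|N(v)\cap U_v|$); the correct route to $|N[v]|\le 4$ is: at most one child (else rule~(1) fires), one parent, and at most one cross-neighbour in total. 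So: right framework, fixable bookkeeping slips, but the decisive lemma ruling out adjacent charge-$1$ vertices in a cactus is asserted rather than proved.
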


The proof can be viewed as an adaptation of our proof for trees to cactus graphs. We use a charging argument similar to the one given in the section on trees.  Initially, a charge of $1$ is given for each $v \in S$, the charge on each vertex is then spread to certain neighbours, and we then show that $\sum\limits_{x \in N[v_i]}ch(x) \leq \frac{5}{2}$ for each $v_i \in OPT$.  We spread the charge according to the same rule given in the preceding section and recall that Observation~\ref{obs: Charged Once} (each vertex receives a new charge from one other vertex) still holds.  In the analysis of how the charge gets reallocated, the structure of the underlying graph is of paramount importance. We begin with 
an analogue to Lemma~\ref{Lemma : Tree Cross Edges}.

\begin{lemma}\label{Cact Cross Edge}
In cactus graphs, there is at most one cross edge incident on any $v_i$. In particular, $v_i$ has at most $2$ neighbours before it is revealed.
\end{lemma}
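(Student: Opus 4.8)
The plan is to mirror the proof of Lemma~\ref{Lemma : Tree Cross Edges}, but to account for the one extra freedom cactus graphs give the adversary: an edge may lie on a cycle. First I would recall the structural fact that underlies both statements: for any $i \ge 2$, by Observation~\ref{obs : Parent} every vertex of $V_i \setminus V_{i-1}$ has $v_i$ as its unique neighbour in $\langle V_i\rangle$, so the only way a vertex $v_j$ with $j > i$ can already be adjacent to something at time $i$ is through its parent $v_{\mathrm{par}(j)}$ plus possibly cross edges. A cross edge incident on $v_i$ is an edge $\{v_i, v_j\}$ with $j < i$ (a "back" cross edge) that is not the parent edge; any such edge, together with the unique path in the revelation tree from $v_j$ to $v_i$, closes a cycle in $G$.

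The key step is the following counting argument. Suppose, for contradiction, that $v_i$ has two distinct cross edges $\{v_i, v_a\}$ and $\{v_i, v_b\}$ with $a, b < i$ (the argument for cross edges to later vertices is symmetric, using that such an edge would have been a back cross edge from the later endpoint's perspective, or can be handled directly via Observation~\ref{obs : Parent}). Let $p = \mathrm{par}(v_i)$. Each of $v_a, v_b$ is joined to $p$ by a path in the revelation tree (the tree is connected on $R_i$), and since $\langle R_i \rangle$ is connected these paths exist entirely among revealed vertices. I would then exhibit \emph{two} distinct cycles through a common edge: one cycle $C_1$ formed by the cross edge $\{v_i, v_a\}$, the tree edge $\{p, v_i\}$, and the tree path from $p$ to $v_a$; and a second cycle $C_2$ formed by $\{v_i, v_b\}$, the edge $\{p, v_i\}$, and the tree path from $p$ to $v_b$. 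Both $C_1$ and $C_2$ contain the edge $\{p, v_i\}$, and since $v_a \neq v_b$ they are distinct cycles — contradicting the defining property of a cactus that every edge lies on at most one cycle. Hence $v_i$ has at most one incident cross edge. Since before $v_i$ is revealed its only possible neighbours in $V_{i-1}$ are its parent (one tree edge) and the endpoints of incident back cross edges (at most one, just shown), $v_i$ has at most $2$ neighbours before it is revealed, and the "in particular" clause follows.

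The main obstacle I anticipate is making the cycle-distinctness rigorous when the two tree paths $p \leadsto v_a$ and $p \leadsto v_b$ share a prefix, or when $v_a$ or $v_b$ happens to lie on the path from $p$ to the other. In degenerate cases the two "cycles" I described might not be simple cycles, or might coincide after cancelling shared edges. To handle this cleanly I would instead argue at the level of a fixed simple cycle: pick the cross edge $\{v_i, v_a\}$ together with the unique tree path from $v_i$ to $v_a$ — this is a genuine simple cycle $C_1$ since the tree path is simple and the cross edge reconnects its endpoints; do likewise for $\{v_i, v_b\}$ to get $C_1'$ (note one should take the tree path through, e.g., $v_i$'s parent — but actually the tree path from $v_i$ to $v_a$ in the revelation tree is uniquely determined and necessarily uses the edge $\{p,v_i\}$ unless $v_a$ is an ancestor-free branch, which cannot happen since $v_i$ has a unique tree-parent). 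Then $C_1$ and $C_1'$ are two simple cycles both containing the tree edge incident to $v_i$ on the path; if they were equal, their symmetric difference of edges would be empty, forcing the tree path to $v_a$ to equal the tree path to $v_b$ and hence $v_a = v_b$. This contradiction with the cactus property completes the proof. Once this edge-on-two-cycles lemma is nailed down, the rest is immediate bookkeeping with Observation~\ref{obs : Parent}.
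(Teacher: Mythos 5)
Your proposal is correct and takes essentially the same route as the paper: assuming two cross edges at $v_i$, both arguments exhibit two distinct simple cycles through the parent edge of $v_i$ (each formed by one cross edge together with a tree path back to the parent) and contradict the defining property of a cactus. The only point to tighten is that a cross edge at $v_i$ may join it to a vertex that is visible but not yet revealed at time $i-1$ (the paper takes the endpoints to be merely visible rather than insisting on $a,b<i$), but your construction covers this unchanged, since such an endpoint already has a revealed parent and hence a tree path to the parent of $v_i$ that avoids $v_i$.
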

\begin{proof}
Suppose that $v_i \neq v_1$ since the statement is clearly true for $v_i = v_1$. Suppose for the sake of deriving a contradiction that, at time $i - 1$, $v_i$ has three neighbours $v_h, v_{i_1}, v_{i_2}$ where $v_h$ is the parent of $v_i$ and $\{ v_i, v_{i_1} \}, \{ v_i, v_{i_2} \}$ are cross edges.  Notice that $v_{i_1}, v_{i_2}$ are both visible at time $i - 1$ as otherwise would imply that $\{ v_i, v_{i_1} \}$ were a tree edge.  Thus, at time $i - 1$, $v_{i_1}$ is visible and there is only one tree edge incident on $v_i$.  In particular, this implies that there is a path consisting entirely of tree edges from $v_{i_1}$ to $v_h$ where said path does not contain the edge $\{ v_h, v_i \}$ since it does not pass through $v_i$ nor does it contain the edges $\{ v_i, v_{i_1} \}, \{ v_i, v_{i_2} \}$ since they are cross edges.  Thus, by adding edges $\{ v_h, v_i \}, \{ v_i, v_{i_1} \}$ to this path we obtain a cycle (in the completely revealed input graph) that contains the edge $ \{ v_h, v_i \}$ but does not contain the edge $\{ v_i, v_{i_2} \}$.  A similar argument yields that there is a path consisting of tree edges from $v_{i_1}$ to $v_h$ that does not contain the edges $\{ v_h, v_i \}, \{ v_i, v_{i_1} \}, \{ v_i, v_{i_2} \}$ and hence by adding edges $\{ v_h, v_i \}, \{ v_i, v_{i_1} \}$ we obtain a cycle which contains the edge $\{ v_h, v_i \}$ but does not contain the edge $\{ v_i, v_{i_1} \}$.  That is, two distinct cycles that share the common edge $\{ v_h, v_i \}$, a contradiction.
\end{proof}

Since $v_i$ has at most $2$ neighbours before it is revealed then it has at least $deg(v_i) - 2$ children.  
The following is analogous to Corollary~\ref{cor:  DegThree} for trees.

\begin{corollary}\label{Cact DegFour}
If $deg(v_i) \geq 4$ then $v_i \in S$.
\end{corollary}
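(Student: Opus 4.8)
The plan is to derive this directly from Lemma~\ref{Cact Cross Edge}, together with the observation that any child of a vertex is necessarily undominated at the moment its parent is revealed, and then to appeal to the defining rule of $2$-DOMINATE. This is the verbatim cactus analogue of the tree argument behind Corollary~\ref{cor:  DegThree}, with "at most one neighbour before revelation" replaced by "at most two".

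First I would recall that, since $v_i$ is revealed together with its entire closed neighbourhood, every neighbour $v_j$ of $v_i$ that is not visible at time $i-1$ satisfies $v_j \in V_i \setminus V_{i-1}$, and hence by Observation~\ref{obs : Parent} $v_i$ is the parent of $v_j$. In other words, the children of $v_i$ are precisely those neighbours of $v_i$ lying outside $V_{i-1}$. By Lemma~\ref{Cact Cross Edge}, at most $2$ neighbours of $v_i$ are visible at time $i-1$, so $v_i$ has at least $deg(v_i) - 2$ children.

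Next I would observe that each such child $v_j$ lies in $U_i$: we have $v_j \in V_i$, and since $D_{i-1} = N[S_{i-1}] \subseteq N[R_{i-1}] = V_{i-1}$ while $v_j \notin V_{i-1}$, it follows that $v_j \notin D_{i-1}$, whence $v_j \in V_i \setminus D_{i-1} = U_i$. Therefore $|N(v_i) \cap U_i| \geq deg(v_i) - 2 \geq 2$ whenever $deg(v_i) \geq 4$, and condition (1) in the definition of $2$-DOMINATE forces $v_i \in S$.

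There is essentially no serious obstacle here; the only point requiring a moment's care is the containment $D_{i-1} \subseteq V_{i-1}$, which is precisely what guarantees that a not-yet-visible child of $v_i$ cannot already have been dominated, and is thus exactly what makes the degree bound bite.
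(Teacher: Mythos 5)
Your argument is correct and is exactly the paper's (largely implicit) reasoning: Lemma~\ref{Cact Cross Edge} gives at most two neighbours visible before $v_i$ is revealed, hence at least $deg(v_i)-2 \geq 2$ children, which are undominated since $D_{i-1} \subseteq V_{i-1}$, so rule (1) of $2$-DOMINATE selects $v_i$. The paper leaves this as an immediate consequence of the remark following the lemma, so your write-up just makes the same steps explicit.
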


\begin{lemma}\label{lemma: osbervations}
\begin{enumerate}
    \item\label{Cact Common Neighbour} If $v_i$ and $v_j$ both have charge equal to $1$ then they share no common neighbours.
    \item\label{lemma: Cact Charge} If $v_i$ and $v_j$ both have charge equal to $1$ then they are not adjacent.
    \item\label{Cor : Neighbourhood Charge} For any vertex $v_i$, at most one vertex in $N[v_i]$ has charge $1$.
\end{enumerate}
\end{lemma}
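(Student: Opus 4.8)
The plan is to prove the three parts in order, recycling the tree arguments essentially verbatim and substituting Lemma~\ref{Cact Cross Edge} for Lemma~\ref{Lemma : Tree Cross Edges} at the one place where the tree proofs used acyclicity. Throughout I will use three facts already at hand: every vertex is charged exactly once (Observation~\ref{obs: Charged Once}); a saved vertex lies in $N[v_k]\cap U_k$ for each of its neighbours $v_k$ (Observation~\ref{lemma: Saviour Lemma}); and, as remarked after Observation~\ref{obs: Charged Once}, a vertex of charge $1$ must be saved by a vertex $v_k$ with $X_k=\{v_i\}$, so in particular $v_k\in S$.

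Part~\ref{Cact Common Neighbour} is immediate, since the proof of Lemma~\ref{Tree Common Neighbour} never uses that the input is a tree: if $v_{i'}\notin\{v_i,v_j\}$ is adjacent to two distinct charge-$1$ vertices $v_i,v_j$, then Observation~\ref{lemma: Saviour Lemma} places both $v_i$ and $v_j$ in $N(v_{i'})\cap U_{i'}$, so $v_{i'}$ has at least two undominated neighbours and is selected by $2$-DOMINATE; then $\{v_i,v_j\}\subseteq X_{i'}$ with $|X_{i'}|\ge 2$, so $v_{i'}$ gives $v_i$ charge at most $\frac{1}{2}$, contradicting (via Observation~\ref{obs: Charged Once}) that $v_i$ has charge $1$. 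Part~\ref{Cor : Neighbourhood Charge} follows from the other two exactly as Corollary~\ref{Cor : Neighbourhood Charge Trees} follows from the tree lemmas: two distinct charge-$1$ vertices in $N[v_i]$ are either adjacent, contradicting part~\ref{lemma: Cact Charge} (this also handles the subcase that one of them equals $v_i$), or have $v_i$ as a common neighbour, contradicting part~\ref{Cact Common Neighbour}.

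The work is in part~\ref{lemma: Cact Charge}. I would open as in Lemma~\ref{lemma: Tree Charge}: $v_1$ has charge $<1$ when $n\ge 2$, so assume $2\le i<j$ with $v_i\sim v_j$, both of charge $1$; then $v_i,v_j\notin S$ (if $v_i\in S$ then, being saved, $v_i$ is the unique selected vertex of both $N[v_i]$ and $N[v_j]$, hence saves both, forcing $|X_i|\ge 2$ and charge $\le\frac{1}{2}$). Let $v_{i'}$ be the savior of $v_i$; since $v_j\in N[v_i]$ and $v_j\notin S$ we get $i<j<i'$ and $X_{i'}=\{v_i\}$. Because $v_j$ is saved with a savior of index $>j>i$, no vertex of $N[v_j]$ is selected before time $i$, so $v_j\in N(v_i)\cap U_i$; hence if $v_{i'}\in U_i$ as well, then $v_i$ has two undominated neighbours and is selected, a contradiction. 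Otherwise $v_{i'}$ is visible at time $i$ but dominated, so it has a selected neighbour of index $<i$, so it was visible before time $i$, so $\{v_i,v_{i'}\}$ is a cross edge; by Lemma~\ref{Cact Cross Edge} it is $v_i$'s \emph{only} cross edge, whence $\{v_i,v_j\}$ is a tree edge ($v_j$ a child of $v_i$) and, in fact, $v_j$ is the \emph{only} child of $v_i$ (any further child would also lie in $N(v_i)\cap U_i$). The parallel analysis at time $j$ (using that $v_i\in N(v_j)\cap U_j$, since $v_i$ is saved with savior index $i'>j$) shows the savior edge $\{v_j,v_{j'}\}$ is $v_j$'s only cross edge, $v_j$ has no children, and $v_i$ is $v_j$'s only tree-neighbour. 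Finally I would read off a contradiction from the cycle structure: if $v_{i'}=v_{j'}$ then this vertex is a common neighbour of $v_i$ and $v_j$, impossible by part~\ref{Cact Common Neighbour}; otherwise the unique cycle through the cross edge $\{v_i,v_{i'}\}$ must leave $v_i$ along the tree edge to its parent $p$ (it cannot leave along $\{v_i,v_j\}$, since $v_j$ is then a leaf of the revelation tree whose sole cross edge does not lie on that cycle), and likewise the unique cycle through $\{v_j,v_{j'}\}$ must pass through $v_i$ and leave it along $\{v_i,p\}$ — so the tree edge $\{v_i,p\}$ lies on two distinct cycles, contradicting the definition of a cactus graph.

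I expect the cross-edge branch of part~\ref{lemma: Cact Charge} to be the main obstacle: one has to show that this branch forces both $v_i$ and $v_j$ to be (near-)leaves of the revelation tree carrying prescribed cross edges, and then convert ``two adjacent vertices trapped in small cycles'' into an actual violation of the cactus property; everything else carries over verbatim from the tree analysis.
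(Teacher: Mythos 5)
Your proposal is correct and follows essentially the same route as the paper: parts~\ref{Cact Common Neighbour} and~\ref{Cor : Neighbourhood Charge} are handled identically, and for part~\ref{lemma: Cact Charge} you arrive at the same configuration ($v_i,v_j\notin S$, $v_j$ a leaf of the revelation tree below $v_i$, the two saviors attached by cross edges) and the same final contradiction of two distinct cycles sharing the parent edge $\{v_h,v_i\}$, with your dichotomy ``$v_{i'}\in U_i$ or not'' playing the role of the paper's case split on which incident edge is the tree edge. The only loose spot is your parenthetical claim that the sole cross edge of $v_j$ does not lie on the cycle through $\{v_i,v_{i'}\}$; the paper supplies exactly the missing justification by building both cycles explicitly from tree paths of the spanning revelation tree (so the unique cycle through a cross edge is its fundamental cycle and contains no second cross edge), a one-line repair of your sketch.
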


\begin{proof}
\begin{enumerate}
    \item Follows identically to the proof of Lemma $\ref{Tree Common Neighbour}$.

\item First, note that $v_1$ cannot have charge $1$ on any input with at least $2$ vertices.  Therefore we safely assume that $1 < i < j$ such that both $v_i$ and $v_j$ have a parent.  We assume for the sake of deriving a contradiction that $v_i$ and $v_j$ are adjacent.

Now, since both $v_i$ and $v_j$ have charge $1$ it follows that they are both saved vertices.  We first argue that both $v_i, v_j \notin S$.  Notice that any saved vertex $v_k$ has the property that $|N[v_k] \cap S| = 1$.  Therefore, if we assume by way of contradiction that $v_i \in S$ we obtain that $N[v_i] \cap S = N[v_j] \cap S = \{ v_i \}$ and therefore $v_i$ saves itself and $v_j$.  This yields that $X_i = N[v_{i}] \cap U_{i}$ contains $v_i$ and $v_j$.  In particular, we have that $|X_{i}| \geq 2$ with $v_i, v_j \in X_{i}$ and therefore $v_i$ and $v_j$ receive charge no larger than $\frac{1}{2}$, a contradiction.  An identical argument will yield that $v_j \notin S$.

Thus, we assume that $v_i$ is saved by a neighbour $v_{i'}$ and $v_j$ is saved by a neighbour $v_{j'}$ where $i', j' \notin \{ i, j \}$.  Moreover, $i' \neq j'$ since $v_i$ and $v_j$ can share no common neighbours by part $\ref{Cact Common Neighbour}$.  Thus, we have that $i, j, i', j'$ are all distinct with $i < j < i'$ and $i < j < j'$ since $i' = \max \{ k \mid v_k \in N[v_i] \}$ and $j' = \max \{ k \mid v_k \in N[v_j] \}$.  As mentioned above $v_i$ must have a parent $v_{h}$ where $h < i < j < i'$.  Therefore, $deg(v_i) \geq 3$ and since $v_i \notin S$ it follows by Corollary $\ref{Cact DegFour}$ that $deg(v_i) = 3$.

We are now in the situation where $v_i, v_j \notin S$ and $v_i$ is incident on exactly $3$ edges $\{ v_h, v_i \}$, $\{ v_i, v_{j} \}$, $\{ v_i, v_{i'} \}$ where exactly one of the edges $\{ v_i, v_{j} \}, \{ v_i, v_{i'} \}$ is a tree edge (and the other a cross edge).  We finish the proof by examining the two cases where $\textbf{(1) :}$ $\{ v_i, v_{i'} \}$ is a tree edge or $\textbf{(2) :}$ $\{ v_i, v_{j} \}$ is a tree edge.

$\textbf{Case 1 :}$ Suppose $\{ v_i, v_{i'} \}$ is a tree edge so that $v_{i'}$ is a child of $v_i$.  Therefore, $v_{i'} \in C_{i} \subseteq N(v_i) \cap U_i$, that is, $v_{i'}$ is an undominated neighbour of $v_i$ when $v_i$ is revealed. Since $v_j$ is saved then by Observation $\ref{lemma: Saviour Lemma}$ it follows that $v_j \in N(v_i) \cap U_i$, that is, $v_j$ is also an undominated neighbour of $v_i$ when $v_i$ is revealed.  That is, both $v_{i'}, v_j \in N(v_i) \cap U_i$ implying that $|N(v_i) \cap U_i| \geq 2$ but $v_i \notin S$, a contradiction.

$\textbf{Case 2 :}$ Suppose $\{ v_i, v_{j} \}$ is a tree edge so that $v_j$ is a child of $v_i$.  First notice that $\{ v_{i}, v_{j} \}$ is the only tree edge incident on $v_j$.  Indeed, if there were a tree edge $\{ v_j, v_l \}$ then $v_{l}$ would be the child of $v_j$.  Since $v_i$ is saved we have $v_i \in N(v_j) \cap U_j$ by Observation $\ref{lemma: Saviour Lemma}$ implying that $|N(v_j) \cap U_j| \geq 2$ but $v_j \notin S$.  Thus, we are in the situation depicted in Figure \ref{fig:cactus-upper-bound-final-case} where $\{ v_i, v_j \}$ is the only tree edge incident on $v_j$ and by assumption $\{ v_h, v_i \}, \{ v_i, v_j \}$ are the only two tree edges incident on $v_i$.  Therefore we have a path from $v_{i'}$ to $v_h$ consisting of tree edges where said path does not contain the edges $\{ v_h, v_i \}, \{ v_i, v_{i'} \}, \{ v_i, v_j \}, \{ v_j, v_{j'} \}$.  Thus, by adding edges $\{ v_h, v_i \}, \{ v_i, v_i' \}$ to this path we obtain a cycle (in the completely revealed input) that contains the edge $\{ v_h, v_i \}$ but does not contain the edge $\{ v_j, v_{j'} \}$.  Similarly, there is a path from $v_{j'}$ to $v_h$ consisting of tree edges where said path does not contain the edges $\{ v_h, v_i \}, \{ v_i, v_{i'} \}, \{ v_i, v_j \}, \{ v_j, v_{j'} \}$ and by adding edges $\{ v_h, v_i \}, \{ v_i, v_j \}, \{ v_j, v_{j'} \}$ we obtain a cycle (in the completely revealed input) that contains the edge $\{ v_h, v_i \}$ but does not contain the edge $\{ v_i, v_{i'} \}$.  That is, two distinct cycles that share the common edge $\{ v_h, v_i \}$, a contradiction.

\item Follows immediately from the previous parts.
\end{enumerate}
\end{proof}

\begin{figure}[h]
\centering
\includegraphics[scale=0.6]{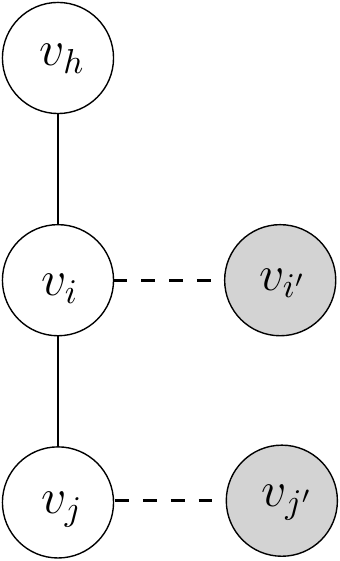}
\caption{Case $2$ of the second part of Lemma~\ref{lemma: osbervations}.}\label{fig:cactus-upper-bound-final-case}
\end{figure}

\begin{figure}[h]
\centering
\includegraphics[scale=0.6]{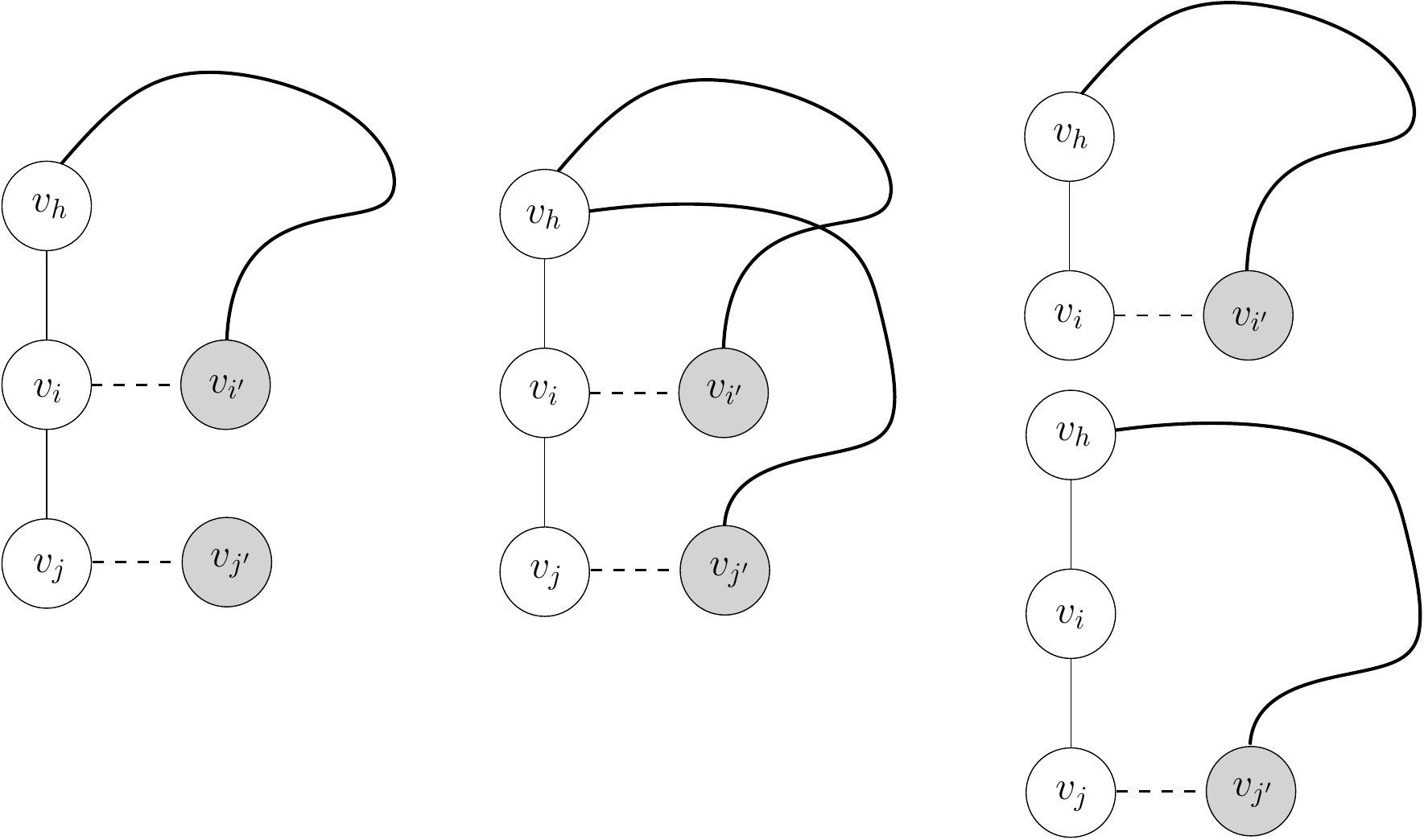}
\caption{Resolution of the preceding case in Figure \ref{fig:cactus-upper-bound-final-case}.  Two cycles sharing the common edge $\{ v_h, v_i \}$.}\label{fig:cactus-upper-bound-final-case-resolution}
\end{figure}

Now, we are ready to prove the upper bound for Theorem~\ref{thm: 2domUB}.

\begin{proof}[Proof of Theorem~\ref{thm: 2domUB}]
The lower bound follows from Theorem \ref{THM : LB Cactus}.  Let $v_i \in OPT$ be an arbitrary vertex in $OPT$.  We consider two cases $\textbf{(1)}$ $deg(v_i) \leq 3$ or $\textbf{(2)}$ $deg(v_i) \geq 4$.

$\textbf{Case 1 :}$ Suppose that $deg(v_i) \leq 3$ and hence $|N[v_i]| \leq 4$.  By Lemma~\ref{lemma: osbervations} part $\ref{Cor : Neighbourhood Charge}$ it follows that at most one vertex in $N[v_i]$ has charge $1$.  If no vertices in $N[v_i]$ have charge $1$ then $ch(x) \leq \frac{1}{2}$ for each $x \in N[v_i]$ and we obtain that $\sum\limits_{x \in N[v_i]}ch(x) \leq 4\big(\frac{1}{2}\big) = 2 < \frac{5}{2}$.  If there is exactly one vertex $x' \in N[v_i]$ with charge $1$ we therefore obtain that $\sum\limits_{x \in N[v_i]}ch(x) = \sum\limits_{x \in N[v_i] \setminus \{ x' \}}ch(x) + ch(x') \leq \frac{3}{2} + 1 = \frac{5}{2}$.

$\textbf{Case 2 :}$ Suppose that $deg(v_i) \geq 4$.  By Corollary \ref{Cact DegFour} it follows that $v_i \in S$ with at least $2$ children.  Let $C_i = V_i \setminus V_{i - 1}$ denote the children of $v_i$ and remark that $C_i \subseteq X_i$.  That is, each child of $v_i$ is charged by $v_i$ and only $v_i$.  Therefore the children of $v_i$ can receive at most the full initial charge on $v_i$ and thus attribute a charge of at most $1$.  

Now we claim that any vertex in $N[v_i] \setminus C_i$ has a charge of at most $\frac{1}{2}$.  Indeed, suppose a vertex $v_{i'} \in N[v_i] \setminus C_i$ has charge $1$ then it must be saved by $v_i$ since $|N[v_{i'}] \cap S| = 1$ for any saved vertex $v_{i'}$.  That is, there is exactly one vertex in its closed neighbourhood that is selected and since $v_i$ is selected it must be $v_i$.  Thus, we must have that $v_{i'} \in X_i$ but since $C_i \subseteq X_i$ we know that $|X_i| \geq 2$ and thus $v_{i'}$ receives a charge of no more than $\frac{1}{2} < 1$, contradicting our assumption that $v_{i'}$ has charge $1$. Thus, by remarking that $|N[v_i] \setminus C_i| \leq 3$ we obtain that $\sum\limits_{x \in N[v_i]}ch(x) = \sum\limits_{v_{j} \in C_i}ch(v_j) + \sum\limits_{v_{i'} \in N[v_i] \setminus C_i}ch(v_{i'}) \leq 1 + 3\big(\frac{1}{2}\big) = \frac{5}{2}$ as desired.

\end{proof} 

\subsection{Graphs of Bounded Degree}
\label{ssec:degree}

We study the problem when the inputs are restricted to graphs of bounded degree.  That is, a positive integer $\Delta \geq 2$ is provided to the algorithm beforehand and the adversary is restricted to presenting graphs where every vertex has degree no larger than $\Delta$.  The problem of bounded degree graphs was explored in \cite{OnlineDominatingSet3} although within the vertex arrival model described earlier.  The authors show that a greedy strategy obtains a competitive ratio no larger than $\Delta$ and, when inputs are further restricted to be ``always-connected'' (i.e. each prefix of the input is connected) they provide a lower bound of $\Delta - 2$ for any algorithm.  

By definition, any input belonging to our setting is ``always-connected'' yet the lower bound of $\Delta - 2$ does not apply.  In particular, we show that $\big \lceil \sqrt{\Delta} \big \rceil$-DOMINATE is $3\sqrt{\Delta}$-competitive along with a lower bound of $\Omega(\sqrt{\Delta})$ for any online algorithm, essentially closing the problem in our setting.  As previously mentioned, the authors in \cite{OnlineDominatingSet1} consider a setting similar to ours where their adversary is not required to reveal visible vertices and they assume that an algorithm has additional knowledge of input size $n$.  In this setting they provide an algorithm that achieves competitive ratio of $\Theta(\sqrt{n})$ for arbitrary graphs.  For the upper bound below we follow a proof nearly identical to theirs modulo some minor details and definitions.  

\begin{definition}
A vertex $v_i \in S$ is said to be heavy if $|N(v_i) \cap U_i| \geq \big \lceil \sqrt{\Delta} \big \rceil$ and light otherwise.  We let $H$ and $L$ denote the set of heavy and light vertices in $S$ so that $|S| = |H| + |L|$.
\end{definition}

To establish that $\big \lceil \sqrt{\Delta} \big \rceil$-DOMINATE is $3\sqrt{\Delta}$-competitive we use a charging argument, but it is quite different from the arguments in Sections~\ref{ssec:trees} and~\ref{ssec:cactus}.  Initially, let $ch(v) = 1$ for each $v \in S$ so that $|S| = \sum\limits_{v \in S}ch(v)$.  Then spread the charge from $S$ strictly to vertices in $OPT$ so that $\sum\limits_{v \in S}ch(v) = \sum\limits_{v \in OPT}ch^{*}(v)$ where $ch^{*}(v)$ is the new charge on a vertex in $OPT$.  We then show that $ch^{*}(v) \leq 2\sqrt{\Delta}$ for all $v \in OPT$ and thus $|S| = \sum\limits_{v \in S}ch(v) = \sum\limits_{v \in OPT}ch^{*}(v) \leq |OPT|2\sqrt{\Delta}$ and the result then follows.  We spread the charge from $S$ to $OPT$ according to the following rules:

\begin{enumerate}
    \item If $v_i \in S \cap OPT$ then $v_i$ keeps its full initial charge. 
    
    \item If $v_i \in H \setminus OPT$ then its spread its initial charge evenly over all vertices in $OPT$.  That is, each $v \in OPT$ obtains an additional charge of $\frac{1}{|OPT|}$ from $v_i$.
    
    \item For each $v_i \in L \setminus OPT$, let $s(v_i)$ denote the set of vertices saved by $v_i$.  Given a vertex $v_{i'} \in s(v_i)$ let $opt(v_{i'}) = v_{i'}$ if $v_{i'} \in OPT$ and $opt(v_{i'}) = \min \{ k \mid v_k \in N(v_{i'}) \cap OPT \}$ otherwise.  For each $v_{i'} \in s(v_i)$, $v_i$ spreads $\frac{1}{|s(v_i)|}$ to $opt(v_{i'})$.
\end{enumerate}

\begin{lemma}\label{lightChargeLemma}
If $v_i \in OPT$ then it receives charge from at most $\lceil \sqrt{\Delta} \rceil$ light vertices.  
\end{lemma}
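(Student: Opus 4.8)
The plan is to count, for a fixed $v_i \in OPT$, how many light vertices can route charge to $v_i$ under rule (3), and to bound this by the number of distinct ``slots'' through which that charge must pass. A light vertex $v_\ell \in L \setminus OPT$ sends charge to $v_i$ only if there is some saved vertex $v_{i'} \in s(v_\ell)$ with $opt(v_{i'}) = v_i$; in particular $v_i \in N[v_{i'}]$, so $v_{i'} \in N[v_i]$. First I would argue that each such saved vertex $v_{i'}$ is ``claimed'' by a unique light vertex: by Observation~\ref{obs: Charged Once} every vertex is charged by exactly one vertex, and if $v_{i'}$ is saved by $v_\ell$ then $v_\ell$ is that unique charging vertex of $v_{i'}$ (since a saved vertex has $|N[v_{i'}] \cap S| = 1$, namely its savior). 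Hence the map $v_{i'} \mapsto (\text{its savior}) = v_\ell$ is well-defined, and distinct light vertices sending charge to $v_i$ must do so via disjoint sets of saved vertices in $N[v_i]$.

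Next I would bound the number of saved vertices lying in $N[v_i]$ that can have $opt(v_{i'}) = v_i$. The key point is $|N[v_i]| \le \Delta + 1$, which a priori only gives $\Delta+1$, not $\lceil\sqrt\Delta\rceil$. So the real leverage must come from the fact that each contributing light vertex $v_\ell$ is \emph{light}: by definition $|N(v_\ell) \cap U_\ell| < \lceil \sqrt{\Delta}\rceil$, and by Observation~\ref{lemma: Saviour Lemma} every vertex $v_\ell$ saves lies in $N[v_\ell] \cap U_\ell$. Thus $v_\ell$ saves at most $\lceil\sqrt\Delta\rceil$ vertices total (counting $v_\ell$ itself if it saves itself), i.e. $|s(v_\ell)| \le \lceil\sqrt\Delta\rceil$. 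Combining: if $v_i$ receives charge from light vertices $v_{\ell_1}, \dots, v_{\ell_m}$, each contributes at least one saved vertex $v_{i'} \in N[v_i]$ with a \emph{distinct} savior, and — here is the step I expect to need care — I want to show these saviors are themselves \emph{neighbors of $v_i$ or equal to $v_i$}, or more precisely that all the relevant saved vertices $v_{i'}$ together with their saviors $v_{\ell_j}$ fit inside a bounded neighborhood structure of $v_i$.

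The cleanest route, which I would pursue: a light vertex $v_{\ell_j}$ sends charge to $v_i$ through $v_{i'} \in s(v_{\ell_j})$ with $opt(v_{i'}) = v_i$; since $v_{i'}$ is saved by $v_{\ell_j}$, we have $v_{i'} \in N[v_{\ell_j}]$, and since $opt(v_{i'})=v_i$ we have $v_i \in N[v_{i'}]$, so $v_{\ell_j} \in N[N[v_i]]$ — but more usefully, each $v_{i'}$ is a distinct element of $N[v_i]$, and the map sending $v_{i'}$ to its unique savior $v_{\ell_j}$ is injective on the chosen representatives. So $m \le \#\{\text{distinct saviors of vertices in } N[v_i] \text{ routing to } v_i\} \le \#\{v_{i'} \in N[v_i] : v_{i'} \text{ is saved and } opt(v_{i'}) = v_i\}$. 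To get $\lceil\sqrt\Delta\rceil$ rather than $\Delta+1$, I would invoke that at most $\lceil\sqrt\Delta\rceil$ of the vertices in $N[v_i]$ can be undominated at the moment $v_i$ is revealed — because if $v_i \in S$ it's heavy or saves someone, contradicting... — actually the right argument is: all such saved $v_{i'} \in N[v_i]$ satisfy $v_{i'} \in U_{j_0}$ where $j_0$ is the revelation time of $v_i$'s savior, or via $v_i$ itself — I'd show they all lie in $N(v_\ast) \cap U_\ast$ for a single light or boundedly-many vertices, capping the count at $\lceil\sqrt\Delta\rceil$. The main obstacle is precisely nailing down \emph{which} single vertex's undominated-neighborhood contains all these $v_{i'}$'s; I expect it to be $v_i$ or its savior, using that $v_i \in OPT$ together with Corollary-type degree bounds, and once that vertex is identified its lightness (or the $\lceil\sqrt\Delta\rceil$ threshold in the $k$-DOMINATE rule) immediately yields the bound $\lceil\sqrt\Delta\rceil$.
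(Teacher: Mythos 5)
Your setup is right and matches the first half of the paper's argument: charge reaches $v_i$ only through saved vertices $v_{i'}$ with $opt(v_{i'}) = v_i$, hence $v_{i'} \in N[v_i]$, and since every vertex has a unique savior, distinct contributing light vertices correspond to distinct such $v_{i'}$. But the proposal then stalls at exactly the step that does the work, and the leverage you first reach for is a dead end: bounding $|s(v_\ell)| \le \lceil\sqrt{\Delta}\rceil$ for each contributing light vertex controls how much charge each one spreads, not how many of them there are, so it cannot yield the lemma. The correct leverage is the one you only gesture at in the final sentences: the single vertex whose undominated neighborhood captures all the intermediaries is $v_i$ itself. By Observation~\ref{lemma: Saviour Lemma}, any saved vertex $v_{i'} \in N(v_i)$ satisfies $v_{i'} \in N(v_i) \cap U_i$, i.e.\ it was undominated at the moment $v_i$ was revealed. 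One then splits on whether $v_i \in S$, and the bound comes from the algorithm's selection rule, not from a degree bound: if $v_i \notin S$, then $\lceil\sqrt{\Delta}\rceil$-DOMINATE not selecting $v_i$ forces $|N(v_i) \cap U_i| \le \lceil\sqrt{\Delta}\rceil - 1$, so at most $\lceil\sqrt{\Delta}\rceil - 1$ light vertices charge $v_i$ through saved neighbors, plus at most one more if $v_i$ is itself saved (it has at most one savior), giving $\lceil\sqrt{\Delta}\rceil$ in total.

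The case $v_i \in S$, which you do not address, needs a separate short argument rather than the same counting: if $v_i \in S$ and some $v_{i'} \in N(v_i)$ were saved by a light vertex $v_k \ne v_i$, then $v_{i'}$ would have two selected vertices in its closed neighborhood ($v_i$ and its savior), contradicting the fact that a saved vertex has $|N[v_{i'}] \cap S| = 1$; so a selected $v_i$ can receive light charge only from itself, which is at most one light charger and well within the bound. Without this case split and without invoking the selection rule at $v_i$, the argument as written only gives the trivial $|N[v_i]| \le \Delta + 1$ bound you yourself flag, so the proof is incomplete as it stands.
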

\begin{proof}
We consider two cases; $\textbf{(1)}$ $v_i \in S$ or $\textbf{(2)}$ $v_i \notin S$.

$\textbf{Case 1 : }$ Suppose that $v_i \in S$, we show that $v_i$ then it receives no charge from a distinct light vertex (therefore it receives charge from at most one light vertex, itself).  Since $v_i \in S$ this implies that it is not saved by any $v_j, j \neq i$.  Thus, if $v_i$ were to receive charge from a light vertex it must be that $v_i = opt(v_{i'})$ for some $v_{i'}$ that is saved by some $v_k \in L$ different from $v_i$.  More precisely, $v_i$ must be adjacent to some $v_{i'}$ that is saved by some $v_k$ with $k \neq i$.  Yet, if $v_{i'} \in N(v_i)$ is saved then $N[v_{i'}] \cap S = \{ v_i \}$ so this cannot be the case.

$\textbf{Case 2 : }$ Assume that $v_i \notin S$ and first remark that $v_i$ is saved by at most one vertex so that it receives at most one charge from a light vertex in this way.  If $v_i$ receives charge from any other light vertex $v_{k} \in L$, it must be that $v_i$ is adjacent to some vertex $v_{i'}$ that is saved by $v_k$.  By Observation $\ref{lemma: Saviour Lemma}$ it must be that $v_{i'} \in N(v_i) \cap U_i$, that is, is undominated when $v_i$ is revealed.  All this to say, that any light vertex that charges $v_i$ determines at least one neighbor of $v_i$ that is undominated at time $i$.  Since $v_i \notin S$ we have $|N(v_i \cap U_i| \leq \lceil \sqrt{\Delta} \rceil - 1$ and thus accounting for possibly one light vertex that charges $v_i$ there are at most $\lceil \sqrt{\Delta} \rceil$ light vertices that charge $v_i$.
\end{proof}

\begin{lemma}\label{lemma : HeavyCharge}
$\frac{|H|}{|OPT|} \leq \sqrt{\Delta} + \frac{1}{\sqrt{\Delta}}$.
\end{lemma}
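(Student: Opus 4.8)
The plan is to bound $|H|$, the number of heavy vertices, by counting undominated neighbors globally. Each heavy vertex $v_i \in H$ has, by definition, at least $\lceil \sqrt{\Delta} \rceil$ undominated neighbors at the time it is revealed, i.e. $|N(v_i) \cap U_i| \geq \lceil \sqrt{\Delta}\rceil$. First I would observe that the sets $N(v_i) \cap U_i$ (or at least a suitable injection from each heavy vertex into such a set) are essentially disjoint across distinct heavy vertices: once $v_i$ is selected at time $i$, the vertices in $N[v_i] \cap U_i$ become dominated, so they leave $U_j$ for all $j > i$; hence a vertex $w$ can be counted as an undominated neighbor for at most one heavy vertex revealed after $w$ becomes visible. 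The one subtlety is that $w$ itself could be heavy and revealed after the heavy vertices that have $w$ as an undominated neighbor, but $w$ accounts only for $N(w)\cap U_w$, a set not containing $w$; so the accounting by ``undominated neighborhoods'' still gives disjointness up to the standard $\pm 1$ per vertex bookkeeping.

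The key chain of inequalities I would carry out: since every vertex in each $N(v_i)\cap U_i$ is visible, and every visible vertex is dominated once $\lceil\sqrt{\Delta}\rceil$-DOMINATE finishes (indeed $V_n \subseteq D_n = N[S]$, and moreover $|OPT| \ge |V_n|/(\Delta+1)$ is far too weak — instead use that each $v \in OPT$ dominates at most $\Delta + 1$ vertices, so $n \le (\Delta+1)|OPT|$). Summing the heavy-vertex lower bounds,
\[
|H| \cdot \lceil \sqrt{\Delta} \rceil \;\le\; \sum_{v_i \in H} |N(v_i) \cap U_i| \;\le\; n \;\le\; (\Delta + 1)|OPT|,
\]
where the middle inequality uses the disjointness of the counted undominated neighborhoods (each vertex of $V$ is counted at most once). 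Dividing by $\lceil\sqrt{\Delta}\rceil \ge \sqrt{\Delta}$ gives
\[
\frac{|H|}{|OPT|} \;\le\; \frac{\Delta + 1}{\sqrt{\Delta}} \;=\; \sqrt{\Delta} + \frac{1}{\sqrt{\Delta}},
\]
which is exactly the claimed bound.

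The main obstacle I anticipate is making the disjointness step fully rigorous: I need to argue carefully that $\sum_{v_i \in H} |N(v_i)\cap U_i| \le n$, i.e. that no vertex $w \in V$ lies in $N(v_i)\cap U_i$ for two distinct heavy $v_i$. Suppose $w \in N(v_i)\cap U_i$ and $w \in N(v_j)\cap U_j$ with $i < j$, both $v_i, v_j$ heavy. Then $v_i \in S_i$ and $w \in N(v_i)$, so $w \in N[S_i] = D_i \subseteq D_{j-1}$, contradicting $w \in U_j = V_j \setminus D_{j-1}$. Thus each vertex is counted at most once and the sum is at most $n$. (Note this even shows the counted neighborhoods are pairwise disjoint, so the argument is clean.) Everything else is the short arithmetic shown above; combined with Lemma~\ref{lightChargeLemma} this will feed into the overall $3\sqrt{\Delta}$ bound.
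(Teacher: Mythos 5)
Your proof is correct and follows essentially the same route as the paper: the paper bounds $|H| \le \lfloor n/\lceil\sqrt{\Delta}\rceil\rfloor$ (implicitly using exactly the disjointness of the sets $N(v_i)\cap U_i$ that you spell out) and combines it with Berge's bound $|OPT| \ge \lceil n/(\Delta+1)\rceil$, which is the same inequality as your $n \le (\Delta+1)|OPT|$. Your explicit argument that no vertex lies in $N(v_i)\cap U_i$ for two distinct heavy vertices is a welcome rigorization of a step the paper leaves implicit, but it is not a different approach.
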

\begin{proof}
Since every vertex in $H$ is selected because it dominated at least $\lceil \sqrt{\Delta} \rceil$ undominated vertices it follows that $|H| \leq \big \lfloor \frac{n}{\lceil \sqrt{\Delta} \rceil} \big \rfloor$.  Moreover, by a standard result, first proved by Berge~\cite{Berge62}, a lower bound on $OPT$ is $|OPT| \geq \big \lceil \frac{n}{\Delta + 1} \big \rceil$.  Ultimately this yields that $$ \frac{|H|}{|OPT|} \leq 
\frac{\big \lfloor \frac{n}{\lceil \sqrt{\Delta} \rceil} \big \rfloor}{\big \lceil \frac{n}{\Delta + 1} \big \rceil} \leq 
\frac{\frac{n}{\lceil \sqrt{\Delta} \rceil}}{\frac{n}{\Delta + 1}} \leq 
\frac{\frac{n}{\sqrt{\Delta}}}{\frac{n}{\Delta + 1}} = 
\frac{\Delta + 1}{\sqrt{\Delta}} = 
\sqrt{\Delta} + \frac{1}{\sqrt{\Delta}}.$$
\end{proof}

\begin{theorem}
$\rho(\big \lceil \sqrt{\Delta} \big \rceil$-DOMINATE$, \Delta$-BOUNDED$) \leq 3\sqrt{\Delta}$.
\end{theorem}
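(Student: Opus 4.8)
The plan is to combine the two preceding lemmas with the charging rules to bound $ch^*(v)$ for an arbitrary $v \in OPT$, and then convert the per-vertex bound into the global competitive ratio. First I would fix $v_i \in OPT$ and decompose the charge it receives according to the three spreading rules: (1) at most $1$ from itself if $v_i \in S \cap OPT$; (2) the aggregate contribution from heavy vertices not in $OPT$, which is $|H \setminus OPT| \cdot \frac{1}{|OPT|} \le \frac{|H|}{|OPT|}$; and (3) the contribution from light vertices. For the light contribution, Lemma~\ref{lightChargeLemma} says at most $\lceil \sqrt{\Delta} \rceil$ light vertices charge $v_i$, and each such light vertex $v_k$ sends $v_i$ at most $\frac{1}{|s(v_k)|} \le 1$ (since if it charges $v_i$ at all then $s(v_k) \neq \emptyset$), so the light contribution is at most $\lceil \sqrt{\Delta} \rceil$. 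Adding these, $ch^*(v_i) \le 1 + \frac{|H|}{|OPT|} + \lceil \sqrt{\Delta} \rceil$.

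Next I would plug in Lemma~\ref{lemma : HeavyCharge} to get $ch^*(v_i) \le 1 + \sqrt{\Delta} + \frac{1}{\sqrt{\Delta}} + \lceil \sqrt{\Delta} \rceil$, and then bound this crudely by $3\sqrt{\Delta}$ using $\Delta \ge 2$ (so $\lceil \sqrt{\Delta}\rceil \le \sqrt{\Delta} + 1$, $\frac{1}{\sqrt{\Delta}} \le 1$, hence the bound is at most $2\sqrt{\Delta} + 3 \le 3\sqrt{\Delta}$ when $\sqrt{\Delta} \ge 3$; the small cases $\Delta \in \{2,\dots,8\}$ can be checked by hand, or the whole thing re-bounded without the ceiling getting in the way). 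The claimed intermediate target $ch^*(v) \le 2\sqrt{\Delta}$ from the paper's setup sketch is actually a slightly tighter statement than what this naive summation gives, so I would either tighten the light-vertex bound (noticing in Case 2 of Lemma~\ref{lightChargeLemma} that the "one light vertex that saves $v_i$" is counted separately from the $\le \lceil\sqrt\Delta\rceil - 1$ determined by undominated neighbors, and that Case 1 gives at most one) or simply absorb the slack into the $3\sqrt{\Delta}$ bound, which is the stated theorem. Finally, I would conclude by the conservation of charge: $|S| = \sum_{v \in S} ch(v) = \sum_{v \in OPT} ch^*(v) \le |OPT| \cdot 3\sqrt{\Delta}$, giving strict $3\sqrt{\Delta}$-competitiveness.

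The main obstacle I anticipate is not any single calculation but rather making the charging rules watertight: one must verify that rule (3) is well-defined (every saved vertex $v_{i'} \notin OPT$ has a neighbor in $OPT$, which holds because $OPT$ is dominating), that no charge "leaks" to vertices outside $OPT$ (i.e. $opt(v_{i'})$ always lands in $OPT$), and that the conservation property genuinely holds — every unit of initial charge on a vertex of $S$ is fully redistributed with none lost. The heavy-vertex rule distributes a full unit across all of $OPT$ so that is fine; the light-vertex rule distributes a full unit across $s(v_i)$ provided $s(v_i) \neq \emptyset$, which is exactly the defining condition for a light vertex to be selected (light vertices in $L$ are in $S$ only because they saved someone). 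So the bookkeeping closes, and the rest is the arithmetic above.
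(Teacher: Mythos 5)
Your proposal is correct and follows essentially the same route as the paper: fix $v_i \in OPT$, bound the self-charge by $1$, the heavy contribution by $|H|/|OPT| \le \sqrt{\Delta} + 1/\sqrt{\Delta}$ via Lemma~\ref{lemma : HeavyCharge}, the light contribution by $\lceil\sqrt{\Delta}\rceil$ via Lemma~\ref{lightChargeLemma}, and close with conservation of charge. Your extra care about the ceiling and small values of $\Delta$ (and the observation that the setup's stated target $2\sqrt{\Delta}$ is tighter than what the summation yields) is if anything more scrupulous than the paper's own one-line arithmetic, but it is the same argument.
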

\begin{proof}
Consider an arbitrary vertex $v_i \in OPT$.  In light of Lemma $\ref{lightChargeLemma}$ we see that it receives charge from at most $\lceil \sqrt{\Delta} \rceil$ light vertices, where each charge is no larger than $1$.  Moreover, by Lemma $\ref{lemma : HeavyCharge}$ the charge received by the heavy vertices is at most $\sqrt{\Delta} + \frac{1}{\sqrt{\Delta}}$ and $v_i$ possibly receives charge from itself (it may be a heavy or light vertex).  In particular we obtain that $$ ch(v_i) \leq \frac{|H|}{|OPT|} + \big\lceil \sqrt{\Delta} \big\rceil + 1 \leq \big(\sqrt{\Delta} + \frac{1}{\sqrt{\Delta}}\big) + \big\lceil \sqrt{\Delta} \big\rceil + 1 \leq 3\sqrt{\Delta}.$$
\end{proof}

We now prove a lower bound $\Omega(\sqrt{\Delta})$ for any online algorithm.  We should note that the adversarial input is bounded in size by a function of $\Delta$.   Although we have omitted the details, it is straightforward to extend the input so that the lower bound is in fact an asymptotic one. 

\begin{theorem}\label{thm: LB Delta}
$\rho(ALG, \Delta$-BOUNDED$) = \Omega(\sqrt{\Delta}).$
\end{theorem}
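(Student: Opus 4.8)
The plan is to construct, for each target degree bound $\Delta$, an adversarial instance on which any online algorithm $ALG$ is forced to select $\Omega(\sqrt{\Delta})$ vertices while $OPT$ needs only a constant number (in fact $O(1)$), which yields a ratio of $\Omega(\sqrt{\Delta})$. The natural guess for the gadget mirrors the structure of the upper bound, where the threshold is $\lceil \sqrt{\Delta} \rceil$: the adversary should present a vertex whose neighborhood is large enough to tempt $ALG$ into taking it (or to punish it for not doing so), and then iterate in a way that the ``savior'' mechanism forces repeated selections. Concretely, I would start by revealing a root $r$ with roughly $\sqrt{\Delta}$ children. If $ALG$ does not take $r$, then each of these children is revealed as a near-leaf (attached to a private gadget that forces $ALG$ to dominate it locally), so $ALG$ pays $\Omega(\sqrt{\Delta})$ while $OPT$ takes $r$ for cost $1$. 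If $ALG$ does take $r$, then the adversary continues from one of the children, now giving it a fresh batch of $\sqrt{\Delta}$ children, and recurses; the point is that each ``level'' of the recursion either costs $ALG$ a selection it cannot recover, or forces $ALG$ to dominate $\sqrt{\Delta}$ fresh leaves directly — and crucially the degree of every vertex stays within $\Delta$ because at each vertex we only ever attach about $\sqrt{\Delta}$ new neighbors plus $O(1)$ gadget vertices, so the product over two consecutive levels is the binding constraint that keeps us at $\Theta(\sqrt{\Delta})$ rather than $\Theta(\Delta)$.

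The key steps, in order, would be: (1) fix a parameter $m = \lceil \sqrt{\Delta} \rceil$ and describe one ``phase'' gadget rooted at a visible-but-unrevealed vertex $u$: reveal $u$ with $m$ children; depending on $ALG$'s choice on $u$, either (a) close off the gadget so that $ALG$ is forced to select $\Omega(m)$ of the $m$ children (because each is revealed with only one further private neighbor, a standard forcing trick as in the $2$-gadget of the cactus section) while $OPT$ selects $u$, or (b) if $ALG$ selected $u$, pick one child $u'$ and recurse the phase at $u'$, while revealing the remaining $m-1$ children as roots of small forcing gadgets that each cost $ALG$ an extra selection and $OPT$ a selection, but are charged against the progress already made; (2) bound the degrees: argue that every vertex ends up with degree at most $m + O(1) \le \Delta$ for $\Delta$ large enough, so the instance is legal; (3) do the accounting exactly as in Theorem~\ref{thm : LB Trees} and Theorem~\ref{THM : LB Cactus} — if the recursion ever terminates via branch (a), the subtree gives a ratio $\ge \Omega(\sqrt\Delta)$ outright; if it runs for ``many'' phases, the cost to $ALG$ is $\Omega(\sqrt{\Delta})$ per phase times the number of phases versus $O(1)$ per phase for $OPT$, so the ratio is still $\Omega(\sqrt{\Delta})$; (4) conclude $\rho(ALG, \Delta\text{-BOUNDED}) = \Omega(\sqrt{\Delta})$, and combined with the matching upper bound already proven, $\Theta(\sqrt{\Delta})$.

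I expect the main obstacle to be getting the forcing mechanism to extract $\Omega(\sqrt{\Delta})$ selections from a single phase \emph{without} ever needing a vertex of degree $\omega(\sqrt{\Delta})$. The naive approach — hang $\sqrt{\Delta}$ private leaves off a single vertex $v$ so $v$'s neighborhood has $\sqrt{\Delta}$ undominated vertices and $ALG$ is punished either way — runs into trouble because to later make those leaves ``expensive'' one typically gives each leaf its own extra neighbor, and keeping everything connected and degree-bounded while doing this simultaneously for $\sqrt{\Delta}$ leaves needs care. The resolution, which I would spell out, is that the private gadget attached to each of the $m$ children is itself only of constant size (a single pendant neighbor, or a tiny path), so each such child has degree $O(1)$ and the only vertices of degree $\approx \sqrt{\Delta}$ are the $O(1)$ per phase ``hub'' vertices along the recursion spine; since no two hubs are adjacent-with-overlapping-private-parts in a way that compounds, the degree bound $\Delta$ is respected. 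The rest — the amortized ``off by $O(1)$'' bookkeeping across phases — is routine and explicitly deferred to the analogous arguments in the earlier theorems, exactly as the paper does for the cactus lower bound.
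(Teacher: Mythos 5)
There is a genuine gap, and it sits exactly where you flagged the ``main obstacle'': your construction never makes $OPT$ cheap in the branch where $ALG$ keeps selecting hubs. In your branch (b) you reveal the $m-1$ non-recursed children as roots of small forcing gadgets that, by your own description, ``each cost $ALG$ an extra selection and $OPT$ a selection.'' Then in such a phase $ALG$ pays $1 + \Theta(m)$ but $OPT$ also pays $\Theta(m)$, so the per-phase ratio is a constant, not $\Omega(\sqrt{\Delta})$; the accounting claim in your step (3) (``$\Omega(\sqrt{\Delta})$ per phase for $ALG$ versus $O(1)$ per phase for $OPT$'') contradicts the gadget structure you defined, and no amount of amortization over phases repairs a constant per-phase ratio. (A smaller instance of the same problem appears in your branch (a): if each of the $m$ children gets a private pendant neighbor, $ALG$ can decline the children and pay one vertex per child--pendant pair, and $OPT$ must also pay one vertex per pair, so again the ratio is $O(1)$; the forcing you want there is to reveal the children as plain leaves, so that $ALG$ must take all $m$ of them while $OPT$ takes only the hub.)

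The missing idea is the aggregation trick the paper uses, and it requires vertices of degree $\Delta$, not $O(\sqrt{\Delta})$ as your degree-budget argument insists. The paper reveals $v_1$ with $\Delta$ children, each child with $\sqrt{\Delta}$ children. Children that $ALG$ rejects have their $\sqrt{\Delta}$ grandchildren revealed as leaves ($ALG$ pays $\sqrt{\Delta}$, $OPT$ pays $1$ there). The crucial step is for the $j$ children that $ALG$ \emph{did} select: their $j\sqrt{\Delta}$ grandchildren are partitioned into groups of $\Delta$ and each group is attached to a common new vertex $y_i$ of degree $\Delta$. This lets $OPT$ dominate all of those grandchildren with only $\lceil j/\sqrt{\Delta}\rceil$ vertices, so $ALG$'s $j$ hub selections are wasted, while $ALG$ must additionally spend $\lceil j/\sqrt{\Delta}\rceil$ selections just to dominate the $y_i$'s; this is what drives $ALG/OPT \ge \sqrt{\Delta}/2$ uniformly in $j$, with no recursion and with $OPT$ not constant at all (it can be as large as $\Delta+1$). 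Without some such mechanism letting $OPT$ cover $\Theta(\Delta)$ of $ALG$'s ``wasted'' neighborhood per optimal vertex, a phase-by-phase scheme with constant-size punishment gadgets cannot beat a constant ratio, so your proof as written does not establish the theorem.
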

\begin{proof}
For simplicity we assume that $\Delta$ is a perfect square.  Reveal $v_1$ with $\Delta$ children and reveal each child of $v_1$ with an additional $\sqrt{\Delta}$ children.  Of the $\Delta$ children of $v_1$, suppose that $ALG$ selects exactly $j$ where $0 \leq j \leq \Delta$.  For the $\Delta - j$ vertices not selected, their $\sqrt{\Delta}$ neighbours are revealed to have degree $1$ and $ALG$ is forced to select each of these $(\Delta - j)(\sqrt{\Delta})$ vertices of degree $1$.  

Let $S_j$ denote the set of the $j$ selected vertices in $N(v_1)$ and $X = \bigcup_{v_i \in S_j}N(v_i)$.  Since each vertex in $S_j$ has $\sqrt{\Delta}$ children, it follows that $|X| = j\sqrt{\Delta}$.  Partition the vertices of $X$ into $\lceil \frac{j\sqrt{\Delta}}{\Delta} \rceil = \lceil \frac{j}{\sqrt{\Delta}} \rceil$ parts of size $\Delta$ (with at most one part having size $< \Delta$).  Letting the parts be $X_1, X_2, ..., X_{\lceil \frac{j}{\sqrt{\Delta}} \rceil}$ we reveal each vertex in a given part to a common vertex $y_i$ (see figure \ref{fig:delta-bounded-lb-1} for an example).  $ALG$ must select at least one vertex for each part to dominate $y_i$ and therefore at least an additional $\lceil \frac{j}{\sqrt{\Delta}} \rceil$ vertices are selected.  

In total, $ALG$ selects at least $j + (\Delta - j)(\sqrt{\Delta}) + \frac{j}{\sqrt{\Delta}}$ whereas $OPT$ simply selects $v_1$, the $\Delta - j$ vertices in $N(v_1) \setminus S_j$ and the $\frac{j}{\sqrt{\Delta}}$ vertices with labels $y_i$.  Ultimately we have 
\begin{align*}
    \frac{ALG}{OPT} &\geq \frac{j + (\Delta - j)(\sqrt{\Delta}) + \frac{j}{\sqrt{\Delta}}}{1 + (\Delta - j) + \frac{j}{\sqrt{\Delta}}} = \frac{j + j \sqrt{\Delta} + (\Delta-j) \Delta} {j + \sqrt{\Delta}+(\Delta-j)\sqrt{\Delta}}\\
    &=\frac{\sqrt{\Delta}(j/\sqrt{\Delta} + j + (\Delta-j)\sqrt{\Delta})}{2(j/2 + \sqrt{\Delta}/2 + (\Delta-j)\sqrt{\Delta}/2)} \ge \frac{\sqrt{\Delta}}{2},
\end{align*}
where the last inequality follows from the fact that $j/2+\sqrt{\Delta}/2 + (\Delta-j)\sqrt{\Delta}/2 \le j/\sqrt{\Delta}+j + (\Delta-j)\sqrt{\Delta}$, since $\sqrt{\Delta}/2 \le j/\sqrt{\Delta}+j/2 + (\Delta-j)\sqrt{\Delta}/2$, which can be seen since when $j < \Delta$ then the last term on the right hand side already is at least as large as the left hand side and when $j = \Delta$ then the middle term on the right hand side is at least the left hand side.

\end{proof}

\begin{figure}[h]
\centering
\includegraphics[scale=0.6]{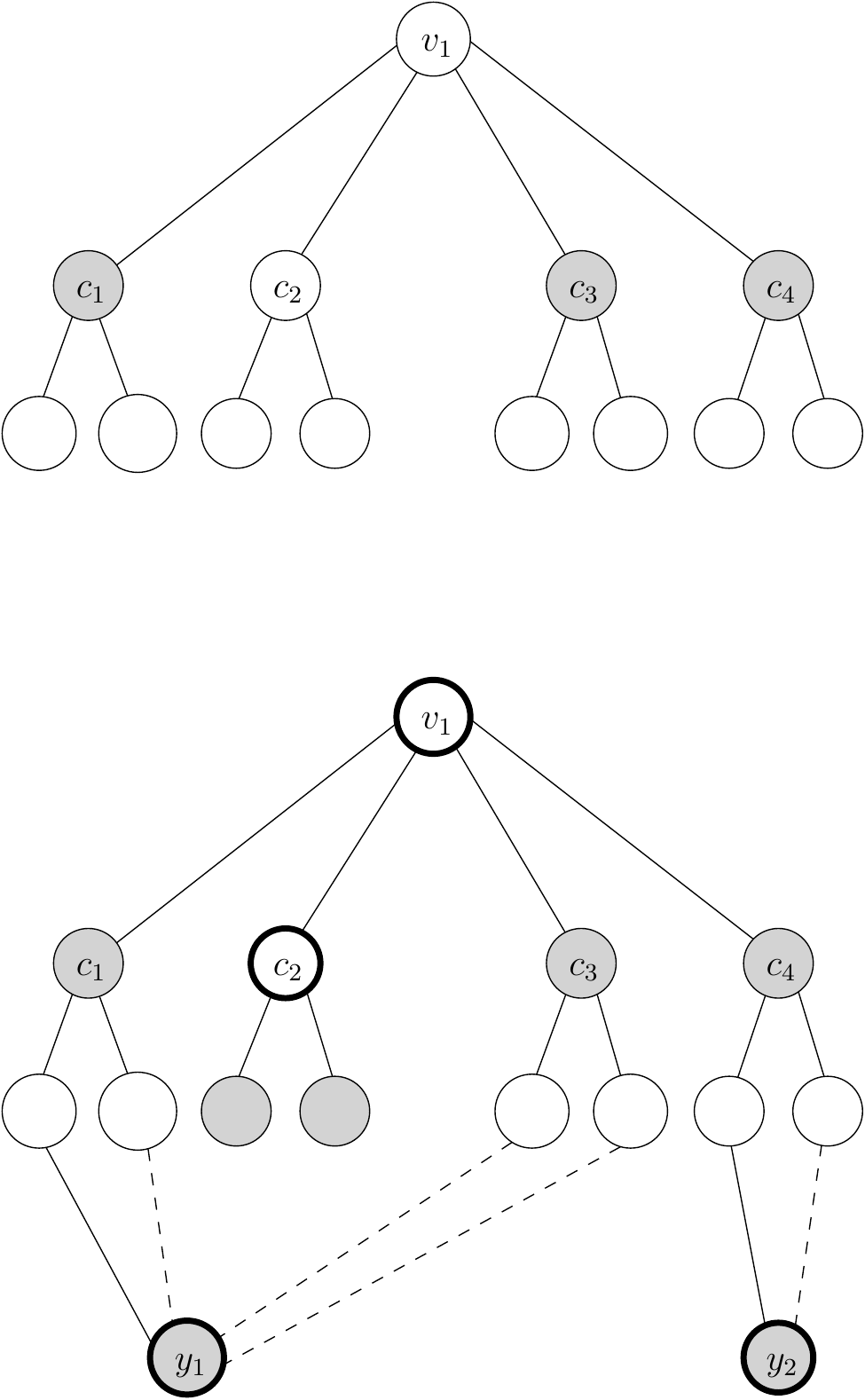}
\caption{An instance described in the proof of Theorem~\ref{thm: LB Delta} with $\Delta = 4$.  The top depicts the graph after the children of $v_1$ have been revealed.  Assuming that $ALG$ selects $\{ c_1, c_3, c_4 \}$ above, the bottom depicts the completely revealed graph.}\label{fig:delta-bounded-lb-1}
\end{figure}

\subsection{Graphs with Bounded Claws}
\label{ssec:claws}

Let $t \geq 3$, a graph $G$ is said to be $K_{1, t}$-free if it contains no induced subgraph isomorphic to $K_{1, t}$.  When $t = 3$, this is the well-studied class of claw-free graphs.  In this section we study $K_{1,t}$-free graphs, which we also refer to as graphs with bounded ``claws''.

From the preceding sections one might notice that the existence of an induced subgraph $K_{1, t}$ poses challenges for an algorithm.  This section suggests that this intuition holds more than just a grain of truth.  We show that, when inputs are restricted to $K_{1, t}$-free graphs, the competitive ratio of every algorithm is bounded below by $t - 1$ and there is an algorithm that achieves competitive ratio $t - 1$.  The upper bounds that we have demonstrated so far were all based on the $k$-DOMINATE algorithm for a suitable choice of parameter $k$. Interestingly, our upper bound on $K_{1,t}$-free graphs is based on a conceptually simpler GREEDY algorithm. The analysis is no longer based on a charging scheme, but follows from combinatorial properties of graphs with bounded claws. 

\begin{theorem}\label{thm : LB T-Claw-Free Graphs}
$\rho(ALG, K_{1, t}$-FREE$) \geq t - 1.$
\end{theorem}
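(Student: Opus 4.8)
The plan is to fix an arbitrary online algorithm $ALG$ and reveal a $K_{1,t}$-free instance, adaptively, on which $ALG$ is forced to select $t-1$ vertices for essentially every vertex of an optimal solution, so that $ALG/OPT$ can be pushed to $t-1$. As in the tree and cactus lower bounds, the adversary runs a process that either punishes $ALG$ immediately for a bad irrevocable choice, or runs long enough that the start-up cost is negligible.

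The basic gadget is a \emph{forced star}: reveal a vertex $w$ together with $t-1$ pairwise non-adjacent children $c_1,\dots,c_{t-1}$. At this moment $\langle N[w]\rangle \cong K_{1,t-1}$, so the revealed graph is still $K_{1,t}$-free, and moreover $w$ can never acquire a further pairwise non-adjacent neighbour without creating a $K_{1,t}$. Branch on $ALG$'s decision for $w$. If $ALG$ rejects $w$, reveal every $c_i$ as a leaf adjacent only to $w$; since $ALG$ has irrevocably passed on $w$, it must take all $t-1$ of the leaves to dominate them, while $OPT$ dominates the whole gadget with the single vertex $w$. This is a local ratio of exactly $t-1$, and the value $t-1$ is no accident: it is exactly the largest independent set a vertex can have in its neighbourhood in a $K_{1,t}$-free graph, which is also what makes $GREEDY$ $(t-1)$-competitive on these graphs.

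The work is in the other branch, when $ALG$ accepts $w$. Then $w$ already dominates $c_1,\dots,c_{t-1}$, so the adversary must continue rather than cash in. It chooses one child, say $c_1$, to carry the process and reveals it; by the revelation rules the subsequent children of $c_1$ must be non-adjacent to every \emph{already revealed} vertex, but cross edges to the still-unrevealed siblings $c_2,\dots,c_{t-1}$ are allowed. Using those cross edges the adversary arranges that, in the sub-instance grown at $c_1$, a single vertex (a ``support vertex'' that $OPT$ will use) dominates $w,c_1,\dots,c_{t-1}$ together with a batch of fresh leaves that $ALG$ is separately forced to take later. The effect is that every centre $ALG$ accepts becomes wasteful from $OPT$'s point of view, while being charged roughly $t-1$ vertices on $ALG$'s side: one for the centre itself and $t-2$ for the leaves that its support vertex covers for free. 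The process ends either when $ALG$ finally rejects a centre -- the forced star closes that branch at local ratio $t-1$ and all the earlier accepted centres are pure loss for $ALG$ -- or, after a fixed number of rounds if we want a finite instance, with a small closing gadget in the spirit of the cactus $2$-gadget; summing the per-round contributions yields $ALG/OPT \ge t-1$, which is the asserted (strict) lower bound.

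I expect the accounting in the ``$ALG$ accepts'' branch to be the main obstacle. Concretely one must (i) lay out the cross edges so the induced subgraph is $K_{1,t}$-free at every step of the revelation, (ii) certify that one support vertex always suffices for $OPT$ to absorb the centre and its accompanying $t-1$ leaf picks, and (iii) make sure the additive loss from the initial set-up is dominated by the length of the spine. Once the recursive step is pinned down, capping the process and performing the summation are routine in the style of Theorems~\ref{thm : LB Trees} and~\ref{THM : LB Cactus}.
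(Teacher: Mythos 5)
Your opening move (the forced star: reveal $w$ with $t-1$ children, punish a rejection by turning the children into leaves) is exactly the paper's first step, and it is fine. The gap is the accept branch, which is the entire difficulty, and your plan for it both defers the hard points and, as sketched, cannot be made to work with the accounting you propose. A vertex can be forced onto $ALG$ only if, at the moment its last neighbour is revealed, nothing in its closed neighbourhood has been selected; so a batch of fresh leaves that ``the support vertex covers for free'' can be forced only in a round where $ALG$ has \emph{rejected} that support vertex -- i.e.\ in a reject round. In rounds where $ALG$ keeps accepting, nothing is forced, so your claimed per-accepted-round charge of $t-1$ to $ALG$ against $1$ to $OPT$ has no mechanism behind it. There are also structural problems you flag but do not resolve: since $N[w]$ is fixed when $w$ is revealed, no later vertex can be made adjacent to $w$, so the support vertex must be one of $w$'s own children; and if the siblings and the fresh leaves are kept pairwise non-adjacent (as your forced-star description commits to), a vertex adjacent to all of them is itself the centre of a $K_{1,t}$. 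Avoiding this forces the adversary to add many edges -- precisely the clique structure that your item (i) postpones -- at which point the construction is no longer the one your accounting assumes.

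The paper's proof takes a different and much simpler route that sidesteps amortization entirely: it keeps $OPT=1$ throughout by revealing each successive candidate $c_j$ adjacent to \emph{every} visible vertex (so $c_j$ alone dominates the current prefix) together with a few fresh children; if $ALG$ rejects some $c_j$, those children become pendant leaves and $ALG$ is immediately pushed to at least $j+(t-3)\ge t-1$ while $OPT=\{c_j\}$; if $ALG$ accepts all of $c_1,\dots,c_{t-2}$, it has already selected $t-1$ vertices against $OPT=\{c_{t-2}\}$. $K_{1,t}$-freeness is certified by revealing $N[v_1]$ and the set of grandchildren as cliques, so every neighbourhood decomposes into two cliques plus at most $t-3$ leaves. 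Because $OPT=1$ on the whole instance, no summation over rounds (and no closing gadget in the style of Theorems~\ref{thm : LB Trees} and~\ref{THM : LB Cactus}) is needed -- which is also why this particular lower bound in the paper is stated as a strict-ratio bound. To repair your write-up you would essentially have to import this ``adjacent to everything visible'' trick and the clique certificates, i.e.\ reconstruct the paper's argument; the per-round $t-1$ versus $1$ bookkeeping should be abandoned.
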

\begin{proof}
Reveal $v_1$ with $t - 1$ children.  If $ALG$ does not select $v_1$ then the input terminates as a star on $t$ vertices (i.e. the $t - 1$ neighbours of $v_1$ are revealed with no additional neighbours).  Any feasible algorithm must select the $t - 1$ neighbours of $v_1$ whereas $OPT$ selects $v_1$ and the statement then follows.  Suppose that $ALG$ selects $v_1$ and let $c_i, 1 \leq i \leq t - 1$ be the children of $v_1$.  Reveal $c_1$ as adjacent to each child of $v_1$ and with an additional $t - 2$ children. If $ALG$ does not select $c_1$ then the children of $c_1$ are revealed as leaves whereas the rest of the input is revealed to be a clique.  That is, $N[v_1]$ is a clique and only $c_1$ has children.  $ALG$ selected $v_1$ and is forced to select the $t - 2$ children of $c_1$ whereas $OPT$ selects only $c_1$ as a single dominating vertex.  It is not hard to see that this input is $K_{1, t}$-free and the result then follows (see Figure~\ref{fig:bounded-claw-first-rejected} for an example). 

Suppose that $ALG$ selects $c_1$, the input then continues in the following way; For each $2 \leq j \leq t - 2$, (as long as $ALG$ is accepting $c_j$) we reveal $c_j$ as adjacent to every visible vertex and with an additional $t - 3$ children.  That is, $c_j$ is adjacent to each child $c_i, i \neq j$ of $v_1$ and the grandchildren of $v_1$ (i.e. the children of all the $c_i$ with $1 \leq i \leq j$) so that $c_j$ is a single dominating vertex of this prefix.  

$\textbf{Case 1 :}$ If there is some $2 \leq j \leq t - 2$ such that $ALG$ does not select $c_j$ then the $t - 3$ children of $c_j$ are revealed as leaves, $N[v_i]$ is revealed as a clique, and the $(t - 2) + \sum\limits_{i = 2}^{j}(t - 3) = j(t - 3) + 1$ grandchildren of $v_1$ are revealed to form a clique.  At this point, $ALG$ has selected $\{ v_1, c_1, ..., c_{j - 1} \}$ and is now forced to select the $t - 3$ children of $c_j$ for an output of at least $j + (t - 3) \geq 2 + (t - 3) = t - 1$ whereas $OPT$ selects only $c_j$ so that $\frac{ALG}{OPT} \geq \frac{t - 1}{1}$.  

We now argue that this input is $K_{1, t}$-free.  Notice that for all $v$ in this input we have $N(v) \subseteq N(c_j)$ so that if there is a an induced $K_{1, t}$ with central vertex $v$ then there is a claw with central vertex $c_j$.  Therefore it is sufficient to show that is no claw with central vertex $c_j$ to finish the claim.  Suppose for contradiction's sake that there were an induced $K_{1, t}$ where $c_j$ is the central vertex and the $t$ neighbors of $c_j$ are all pairwise non-adjacent.  Let $G$ denote the grandchildren of $v_1$ and remark that any neighbor of $c_j$ is either a child of $c_j$, a grandchild of $v_1$, or a vertex from $N[v_1] \setminus \{ c_j \}$.  Since there are $t$ vertices and $c_j$ only has $t - 3$ children by the pigeonhole principle we must have at least two vertices $u, v$ that both are grandchildren of $v_1$ or both belong $N[v_1] \setminus \{ c_j \}$.  Yet, both the set of grandchildren of $v_1$ and $N[v_1] \setminus \{ c_j \}$ are cliques.  Therefore we have that $u$ and $v$ are adjacent, contradicting our assumption.  

$\textbf{Case 2 :}$ If $ALG$ selects each $c_i, 1 \leq i \leq t - 2$ then the $(t - 2)(t - 3) + 1$ grandchildren of $v_1$ are then revealed to form a clique ($N[v_1]$ has already been revealed as a clique).  $ALG$ has already selected $\{ v_1, c_1, ..., c_{t-2} \}$ and therefore has an output of at least $t - 1$ whereas $OPT$ selects only $c_{t - 2}$.  An argument similar to the one above will yield that this input is $K_{1, t}$-free and the result then follows.
\end{proof}

\begin{figure}[h]
\centering
\includegraphics[scale=0.6]{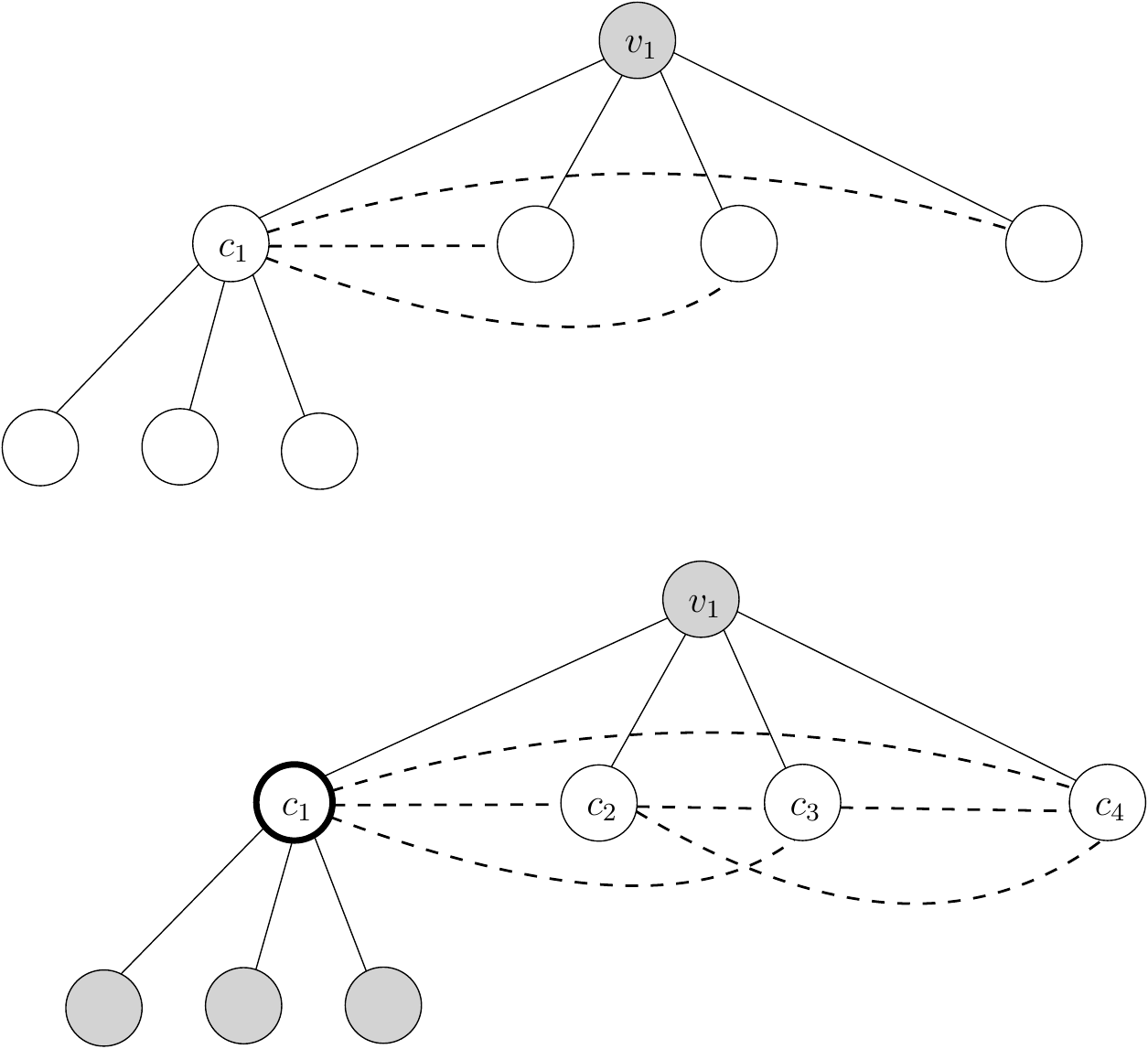}
\caption{An instance described in Theorem~\ref{thm : LB T-Claw-Free Graphs}  with $t = 5$ where ALG does not select $c_1$.  The top depicts the graph at the moment $c_1$ was revealed and the bottom depicts the completely revealed graph.}\label{fig:bounded-claw-first-rejected}
\end{figure}

\begin{figure}[h]
\centering
\includegraphics[scale=0.6]{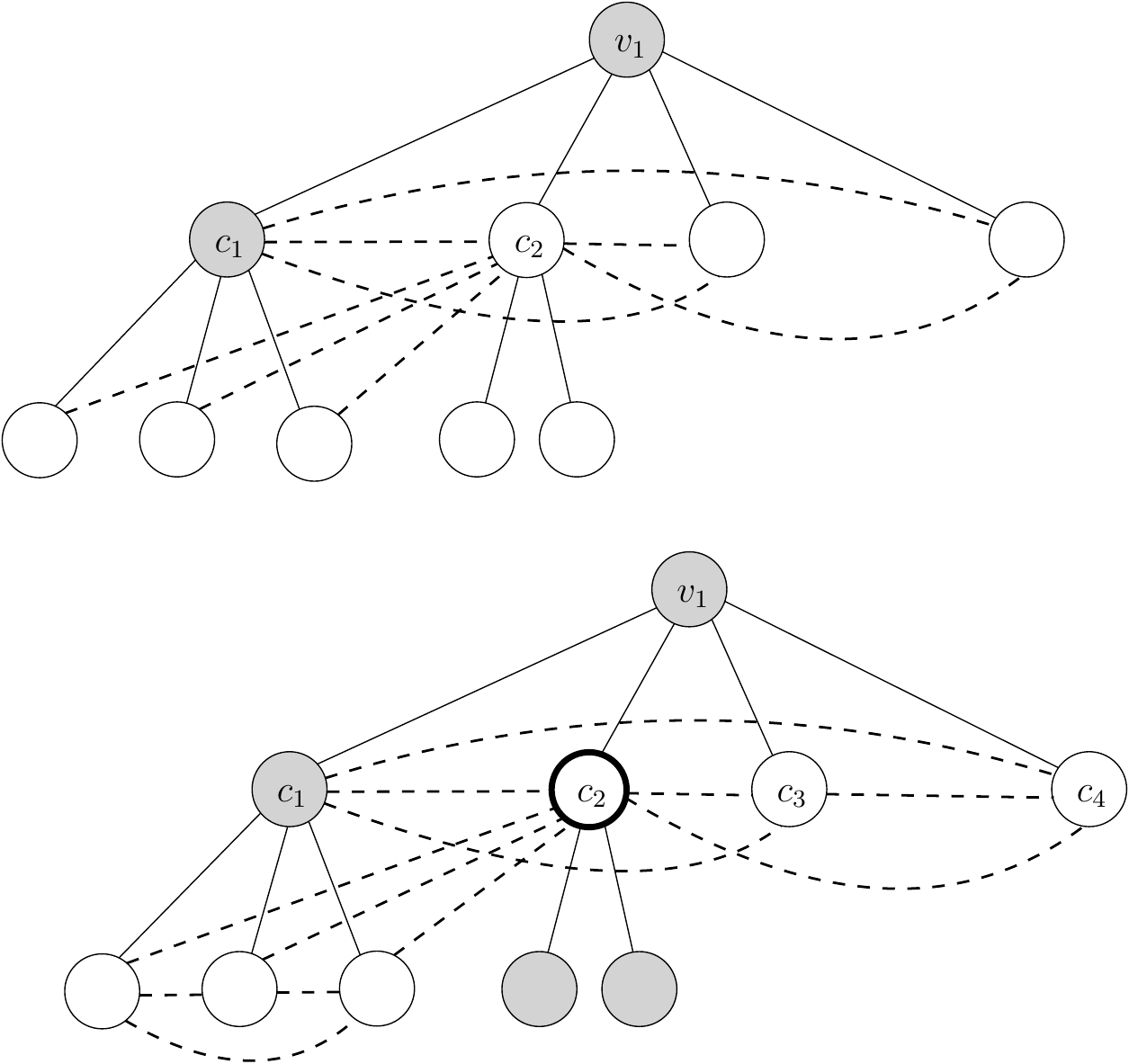}
\caption{An instance described in Theorem~\ref{thm : LB T-Claw-Free Graphs}  with $t = 5$ where ALG does not select $c_2$.  The top depicts the graph at the moment $c_1$ was revealed and the bottom depicts the completely revealed graph.}\label{fig:bounded-claw-second-rejected}
\end{figure}

\begin{figure}[h]
\centering
\includegraphics[scale=0.6]{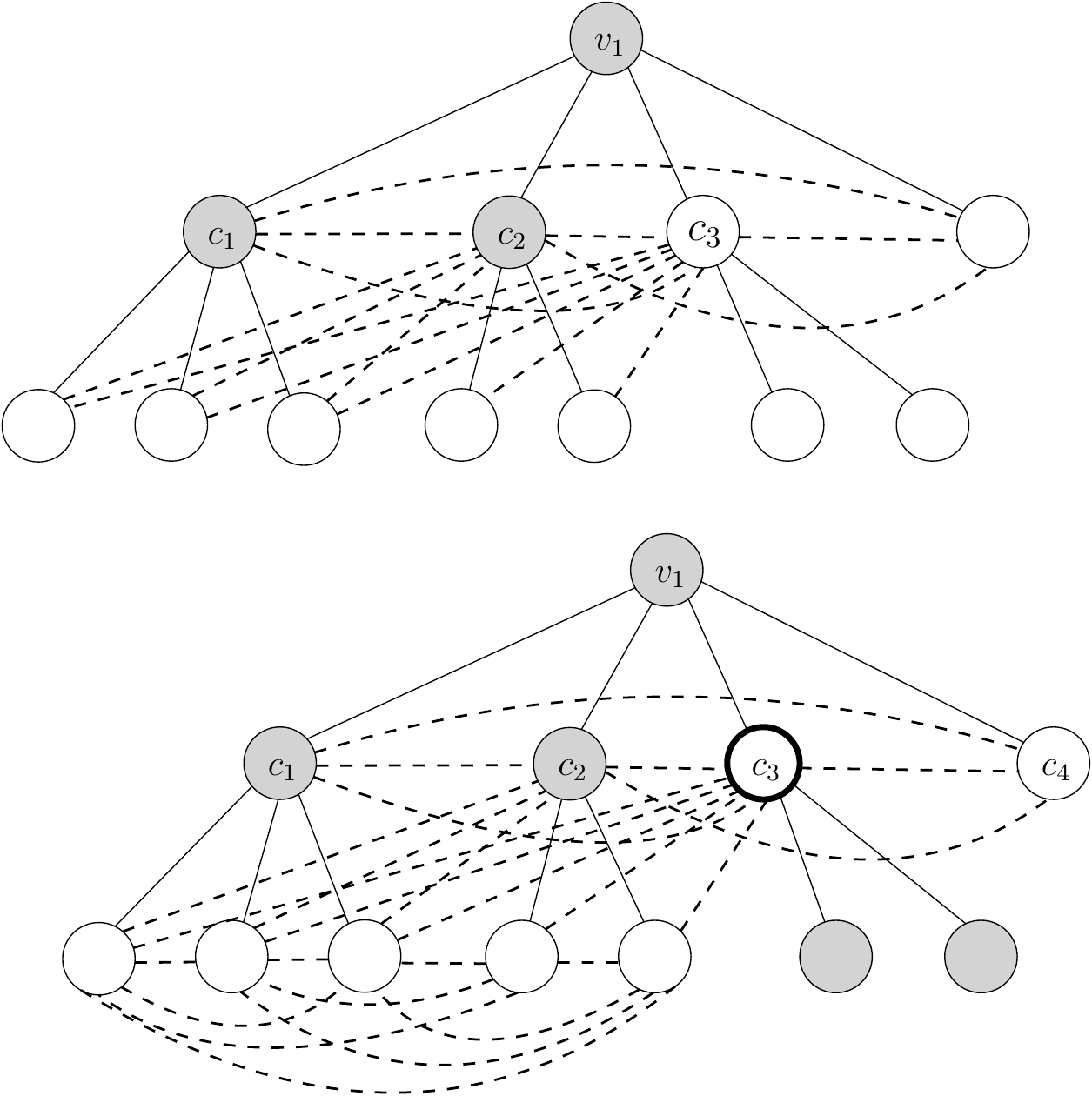}
\caption{An instance described in Theorem~\ref{thm : LB T-Claw-Free Graphs}  with $t = 5$ where ALG does not select $c_3$.  The top depicts the graph at the moment $c_3$ was revealed and the bottom depicts the completely revealed graph.}\label{fig:bounded-claw-third-rejected}
\end{figure}

When inputs are restricted to $K_{1, t}$-free graphs, we show that the online algorithm GREEDY is $(t-1)$-competitive.  The crucial observation to make here is that the output of GREEDY is an independent set.  We provide a result below that is a straightforward generalization of one given in \cite{FPTClawFree}.  The simplicity of the result suggests that it may have appeared in earlier work.  

\begin{lemma}\label{lemma : I.S. Claw-Free}
Let $t \geq 3$, $G = (V, E)$ be a $K_{1, t}$-free graph and $I$ be any independent set in $G$.  Then $|D| \geq \frac{|I|}{t - 1}$ for any dominating set $D$ in $G$.
\end{lemma}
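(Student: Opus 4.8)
The plan is to show that each vertex of the dominating set $D$ can ``cover'' at most $t-1$ vertices of the independent set $I$, where by cover I mean dominate. Formally, since $D$ is a dominating set, every vertex of $I$ lies in $N[d]$ for some $d \in D$; fix such an assignment to get a function $f : I \to D$ with $u \in N[f(u)]$ for all $u \in I$. Then $|I| = \sum_{d \in D} |f^{-1}(d)|$, so it suffices to prove $|f^{-1}(d)| \le t-1$ for every $d \in D$. The key step is therefore a purely local claim about a single vertex $d$ together with the subset $f^{-1}(d) \subseteq I \cap N[d]$.

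To prove the local claim, I would fix $d \in D$ and let $W = f^{-1}(d)$. Every vertex of $W$ is either $d$ itself or a neighbor of $d$, and the vertices of $W$ are pairwise non-adjacent because $W \subseteq I$ and $I$ is independent. First handle the case $d \in W$: then $d$ is non-adjacent to every other vertex of $W$, but every other vertex of $W$ is in $N(d)$, which is impossible unless $W = \{d\}$, giving $|W| = 1 \le t-1$. Otherwise $d \notin W$, so $W \subseteq N(d)$ and $W$ is an independent set contained in the neighborhood of $d$. If $|W| \ge t$, then $d$ together with any $t$ vertices of $W$ induces a copy of $K_{1,t}$ (the center $d$ adjacent to all $t$ chosen vertices, and those $t$ vertices pairwise non-adjacent since $W \subseteq I$), contradicting that $G$ is $K_{1,t}$-free. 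Hence $|W| \le t-1$ in this case as well.

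Combining, $|f^{-1}(d)| \le t-1$ for all $d \in D$, so $|I| = \sum_{d \in D} |f^{-1}(d)| \le (t-1)|D|$, which rearranges to $|D| \ge \frac{|I|}{t-1}$, as claimed.

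I do not anticipate a serious obstacle here: the only mild subtlety is making sure the independent-set-in-a-neighborhood configuration genuinely yields an \emph{induced} $K_{1,t}$ (it does, precisely because independence of $W$ forbids exactly the extra edges that would spoil the induced subgraph), and correctly treating the boundary case where $d$ itself is the $I$-vertex it is responsible for. Both are dispatched in a line each, as above.
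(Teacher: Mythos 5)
Your proof is correct and uses essentially the same idea as the paper: a single vertex cannot dominate $t$ pairwise non-adjacent vertices in a $K_{1,t}$-free graph, since that would force an induced $K_{1,t}$ (with the degenerate case where the dominator itself lies in $I$ handled separately). The paper phrases this via contradiction and pigeonhole (if $|D| < |I|/(t-1)$, some $v \in D$ dominates at least $t$ vertices of $I$) rather than your explicit assignment $f : I \to D$ and direct count, but the substance is identical.
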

\begin{proof}
Suppose for the sake of deriving a contradiction that there is some dominating set $D$ in $G$ with $|D| < \frac{|I|}{t - 1}$.  Remarking that the vertices of $D$ dominate the vertices of $I$ as $D$ is a dominating set we notice that there is some vertex $v \in D$ that dominates at least $t$ vertices of $I$ (i.e. if every vertex of $D$ dominated at most $t - 1$ vertices then $D$ would dominate at most $(t - 1)|D| < |I|$ vertices).  Moreover, since $v$ is adjacent to at least one of the $t \geq 3$ vertices of $I$ it dominates, it cannot belong to $I$ as $I$ is independent.  Therefore, the vertices of $I$ dominated by $v \notin I$ are adjacent to $v$.  In particular, at least $t$ vertices of $I$, all pairwise non-adjacent, are neighbors of $v$ and this induces $K_{1, t}$ in $G$.
\end{proof}

The preceding lemma shows that for any independent set $I$ in a $K_{1, t}$-free graph $G$, $|I| \leq (t - 1)\gamma(G)$.  Given that GREEDY outputs an independent set we obtain the following result which is of interest to us.

\begin{theorem}\label{thm : UB Claw-free}
$\rho($GREEDY, $K_{1, t}$-FREE$) = t - 1$.
\end{theorem}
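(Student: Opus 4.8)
The plan is to combine the lower bound from Theorem~\ref{thm : LB T-Claw-Free Graphs} with an upper bound argument showing that GREEDY never selects more than $(t-1) \cdot OPT$ vertices on any $K_{1,t}$-free input. The lower bound half is immediate: Theorem~\ref{thm : LB T-Claw-Free Graphs} already gives $\rho(ALG, K_{1,t}\text{-FREE}) \geq t-1$ for every algorithm $ALG$, so in particular $\rho(\text{GREEDY}, K_{1,t}\text{-FREE}) \geq t-1$. It remains to prove the matching upper bound $\rho(\text{GREEDY}, K_{1,t}\text{-FREE}) \leq t-1$.

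For the upper bound, the key structural fact is that the set $S$ of vertices selected by GREEDY is an independent set. First I would establish this: GREEDY selects $v_i$ exactly when $v_i \in U_i$, i.e. when $v_i$ is not yet dominated at the time it is revealed. If two selected vertices $v_i, v_j$ with $i < j$ were adjacent, then at time $j$ the vertex $v_j$ would already be dominated by $v_i \in S_{j-1} \subseteq S_{i}$ (since $v_i \in N[v_j]$ and $v_i$ was selected at step $i \leq j-1$), contradicting $v_j \in U_j$. Hence $S$ is independent.

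Now apply Lemma~\ref{lemma : I.S. Claw-Free} with $I = S$ and $D = OPT$ (which is a dominating set of the whole graph $G$, and $G$ is $K_{1,t}$-free): this yields $|OPT| \geq \frac{|S|}{t-1}$, equivalently $|\text{GREEDY}| = |S| \leq (t-1)|OPT|$. Since this holds on every $K_{1,t}$-free input, GREEDY is strictly $(t-1)$-competitive, so $\rho(\text{GREEDY}, K_{1,t}\text{-FREE}) \leq t-1$. Combining with the lower bound gives equality. I should also briefly note the degenerate edge case: if $OPT$ dominates $G$ then $G$ is connected and nonempty, so $|OPT| \geq 1$ and the bound is meaningful; no separate small-case analysis is needed since Lemma~\ref{lemma : I.S. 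Claw-Free} is stated for arbitrary independent sets.

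The only mild subtlety—and thus the "main obstacle," though it is not a serious one—is making sure the independence argument is airtight with respect to the revelation order and the definition of $U_j$: one must be careful that a selected neighbor $v_i$ of $v_j$ indeed dominates $v_j$ strictly before decision $d_j$, i.e. that $v_i \in S_{j-1}$. This follows because $i < j$ implies $v_i \in S_i \subseteq S_{j-1}$, and $v_j \in N[v_i]$ means $v_j \in D_{j-1} = N[S_{j-1}]$, so $v_j \notin U_j$, contradicting that GREEDY selected $v_j$. Everything else is a direct invocation of the lemma. I would present the proof in essentially three short steps: (i) cite the lower bound; (ii) prove $S$ is independent; (iii) apply Lemma~\ref{lemma : I.S. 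Claw-Free}.
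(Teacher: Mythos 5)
Your proof is correct and follows the same route as the paper: cite the lower bound from Theorem~\ref{thm : LB T-Claw-Free Graphs}, observe that GREEDY's output is an independent set, and apply Lemma~\ref{lemma : I.S. Claw-Free} with $I = S$ and $D = OPT$. Your explicit verification that $S$ is independent just fills in a detail the paper leaves as a one-line remark.
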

\begin{proof}
The upper bound is a consequence of Proposition $\ref{lemma : I.S. Claw-Free}$ and the remarks that follow.  The lower bound follows from Theorem $\ref{thm : LB T-Claw-Free Graphs}$.
\end{proof}

\section{Noncompetitive Graph Classes}
\label{sec:noncompetitive}

Recall that the setting defined in \cite{OnlineDominatingSet1} is nearly identical to ours except that an algorithm knows the input size $n$ beforehand and the induced subgraph on the revealed vertices is not necessarily connected.  Within this setting the authors establish a lower bound of $\Omega(\sqrt{n})$ for arbitrary graphs.  Their proof can be augmented to show a lower bound of $\Omega(\sqrt{n})$ in our model, which is tight by our upper bound of $O(\sqrt{\Delta})$ on degree at most $\Delta$ graphs (applied to $\Delta = n-1$). Instead, we strengthen such a result in several ways by showing that the lower bound of $\Omega(\sqrt{n})$ applies to several restricted classes such as threshold graphs\footnote{With the caveat that, for threshold graphs, we instead consider the performance ratio as a function of input size.}, planar bipartite graphs, and series-parallel graphs.    

\subsection{Threshold Graphs}
\label{ssec:threshold}

The graph join operation applied to two graphs $G_1$ and $G_2$ takes the disjoint union of the two graphs and adds all possible edges between the two graphs to the result (in addition to retaining the edges of $G_1$ and $G_2$). The class of threshold graphs can be described recursively as follows:

\begin{enumerate}
    \item $K_1$ (i.e. a single isolated vertex) is a threshold graph.
    
    \item If $G$ is a threshold graph then the disjoint union $G \cup K_1$ is a threshold graph.
    
    \item If $G$ is a threshold graph then the graph join $G \oplus K_1$ is a threshold graph. 
    
\end{enumerate}

It is not hard to see that any connected threshold graph has a dominating set of size $1$.  Since our setting only allows for connected graphs we instead measure $ALG$ as a function of input size $n$ since $OPT \leq 1$ on every input.  In particular, we show that for any algorithm there is an infinite family of threshold graphs for which this algorithm selects $\Omega(\sqrt{n})$ vertices (where the input has $n$ vertices).  Although $OPT$ does not tend towards infinity, we consider this to be an asymptotic lower bound, but with input size $n$ tending to infinity.  In a sense, this is a stronger lower bound since the algorithm is guaranteed an input graph with a single dominating vertex, yet it still selects more than $\Omega(\sqrt{n})$ vertices in the input.

\begin{observation}\label{obs: StarThreshold}
The star on $n \geq 1$ vertices, that is, $K_{1, n - 1}$, is a threshold graph.
\end{observation}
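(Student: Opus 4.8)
The statement to prove is Observation~\ref{obs: StarThreshold}: the star $K_{1,n-1}$ on $n \geq 1$ vertices is a threshold graph.

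\medskip

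\noindent\textbf{Proof proposal.} The plan is to proceed by induction on $n$, directly unwinding the recursive definition of threshold graphs given just above the observation. For the base case, when $n = 1$ the graph $K_{1,0}$ is a single isolated vertex, which is $K_1$ and hence a threshold graph by rule~(1). For the inductive step, suppose $K_{1,n-1}$ is a threshold graph for some $n \geq 1$; I want to show $K_{1,n}$ is a threshold graph. The key observation is that $K_{1,n}$ is obtained from $K_{1,n-1}$ by adding one new leaf attached to the center. Equivalently, I claim $K_{1,n} = K_{1,n-1} \cup K_1$ with... no — that would leave the new vertex isolated, which is wrong. Instead I should build the star up from its leaves: note that $K_{1,n}$ can be seen as the graph join $(\overline{K_n}) \oplus K_1$, where $\overline{K_n}$ is the empty graph on $n$ vertices (the $n$ leaves) and the single $K_1$ is the center, since joining adds all edges from the center to every leaf and no edges among the leaves.

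\medskip

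\noindent So the cleaner induction is on the number of leaves, showing that the empty graph $\overline{K_m}$ on $m \geq 1$ vertices is a threshold graph: $\overline{K_1} = K_1$ is a threshold graph by rule~(1), and if $\overline{K_m}$ is a threshold graph then $\overline{K_{m+1}} = \overline{K_m} \cup K_1$ is a threshold graph by rule~(2). Hence $\overline{K_{n-1}}$ is a threshold graph for every $n \geq 2$, and then $K_{1,n-1} = \overline{K_{n-1}} \oplus K_1$ is a threshold graph by rule~(3). Combined with the $n=1$ case handled separately, this establishes the observation for all $n \geq 1$.

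\medskip

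\noindent\textbf{Main obstacle.} There is essentially no difficulty here — the only subtlety is choosing the right decomposition. The naive attempt to grow the star leaf-by-leaf via rule~(2) fails, because a disjoint union with $K_1$ leaves the new vertex with no edges, whereas each new leaf of a star must be adjacent to the center. The fix is to first assemble all the leaves as an independent (empty) set using rules~(1) and~(2), and only at the very end apply the join with $K_1$ in rule~(3) to attach the center to all of them simultaneously. Once that order of operations is identified, the verification is immediate and purely definitional.
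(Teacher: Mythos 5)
Your proof is correct and matches the approach the paper intends: the paper states this observation without an explicit proof, but its construction in the proof of the subsequent lemma (repeated disjoint unions with isolated vertices followed by a join with a single vertex) specializes, for $k=1$, to exactly your argument of assembling the $n-1$ leaves as $\overline{K_{n-1}}$ via rules (1)--(2) and then attaching the center via rule (3). Nothing further is needed.
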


Now we describe a slightly more complicated graph belonging to the class of threshold graphs.  Let $k \geq 2$ and $j_1, ..., j_{k}$ be non-negative integers.  Let $n = k + 1 + \sum\limits_{i = 1}^{k} j_i$ and consider the following graph $G$ on $n$ vertices; $V(G) = \{ u \} \cup C_k \cup I$, where $C_k = \{ v_1, ..., v_k \}$ and $I = I_{j_1} \cup I_{j_2} \cup ... \cup I_{j_k}$, with each $I_{j_i}$ having exactly $j_i$ vertices (each $I_{j_i}$ is possibly empty).  The set $\{ u \} \cup C_k$ is a clique on $k + 1$ vertices, and for each $i$, $I_{j_i}$ is an independent set where each $v \in I_{j_i}$ is adjacent only to vertices $v_i, ..., v_k$.  

\begin{lemma}\label{obs : Near-Clique With Leaves}
The graph described above is a threshold graph.  
\end{lemma}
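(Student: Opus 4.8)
The plan is to exhibit an explicit generating sequence for $G$ in the sense of the recursive definition of threshold graphs, using only the two permitted operations (disjoint union with $K_1$, and join with $K_1$). The key idea is to interleave the clique vertices with the pendant independent sets in order of increasing index: process $I_{j_1}$, then $v_1$, then $I_{j_2}$, then $v_2$, and so on. Concretely, I would fix the vertex ordering
\[
 u,\ I_{j_1},\ v_1,\ I_{j_2},\ v_2,\ \ldots,\ I_{j_k},\ v_k,
\]
where within each block $I_{j_i}$ the (pairwise non-adjacent) vertices are listed in an arbitrary order, and a block with $j_i = 0$ is simply omitted. Begin with the single vertex $u$, which is $K_1$; then add the vertices in the above order, inserting each vertex of some $I_{j_i}$ by the operation $\,\cdot \cup K_1\,$ and each $v_i$ by the operation $\,\cdot \oplus K_1$.

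To verify that this sequence really builds $G$, it suffices to check, for every vertex $x \neq u$, that the neighbours of $x$ in $G$ among the vertices preceding $x$ in the ordering are: none, when $x$ lies in some $I_{j_i}$; and all of them, when $x = v_i$. For $x \in I_{j_i}$: by construction $x$ is adjacent only to $v_i, v_{i+1}, \ldots, v_k$, every one of which occurs after the block $I_{j_i}$, and $x$ is adjacent to neither $u$ nor any other vertex of $I$; hence $x$ has no earlier neighbour, and the disjoint-union step is correct. For $x = v_i$: the vertices preceding $v_i$ are exactly $u$, $v_1, \ldots, v_{i-1}$, and $I_{j_1} \cup \cdots \cup I_{j_i}$; since $\{u\} \cup C_k$ is a clique, $v_i$ is adjacent to $u$ and to $v_1, \ldots, v_{i-1}$, and since each vertex of $I_{j_\ell}$ with $\ell \le i$ is adjacent to all of $v_\ell, \ldots, v_k$ (which includes $v_i$), $v_i$ is adjacent to every earlier pendant vertex as well; thus $v_i$ is universal among the previously placed vertices and the join step is correct. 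One should also note that the only edges ever created are between a newly added $v_i$ and all older vertices, and we have just checked that each such edge is present in $G$; conversely every edge of $G$ arises this way, so the constructed graph has precisely the adjacencies of $G$.

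Because every operation used is one of the two allowed by the recursive definition, and the graph obtained at the end is exactly $G$, it follows that $G$ is a threshold graph. There is no genuinely hard step here; the only point requiring care is the choice of the interleaving order, so that at the moment each $v_i$ is added it is adjacent to every vertex already present, while each pendant vertex, when added, is adjacent to none — and this is exactly what the increasing-index ordering guarantees. (Equivalently, one could run the argument in reverse: $v_k$ is universal in $G$, deleting it isolates the vertices of $I_{j_k}$, deleting those isolated vertices makes $v_{k-1}$ universal, and so on down to the single vertex $u$.)
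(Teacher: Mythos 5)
Your construction is correct and is essentially identical to the paper's own proof: both build $G$ by starting from the single vertex $u$ and, for $i = 1, \ldots, k$, taking disjoint unions with the vertices of $I_{j_i}$ and then joining $v_i$ (the paper writes this as $G_i = (G_{i-1} \cup I_{j_i}) \oplus v_i$). Your added verification that each step produces exactly the adjacencies of $G$ is a more detailed writeup of the same argument, not a different route.
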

\begin{proof}
We describe a construction using the recursive definition given above.  Initially, start with the single isolated vertex $u$.  For each $1 \leq i \leq k$, take the resulting graph from the previous step, disjoint union said graph with an independent set $I_{j_i}$ (i.e. repeatedly perform $j_i$ disjoint unions of with a single vertex) and then join the vertex $v_i$.  That is, let $G_0 = (\{ u \}, \emptyset)$ and for $1 \leq i \leq k$, $G_i = (G_{i - 1} \cup I_{j_i}) \oplus v_i$.
\end{proof}

We are now ready to prove a strong lower bound for any online algorithm.  Although we do not mention this explicitly in the proof, the adversarial inputs given are either $K_{1, k - 1}$ for some $k \geq 3$ or one that can be obtained by appropriately applying the recursive construction in Lemma~\ref{obs : Near-Clique With Leaves}.

\begin{theorem}\label{thm : LB Threshold Graphs}For infinitely many values of $n$ there is a threshold graph $G_n$ such that $$ ALG(G_n) = \Omega(\sqrt{n}).$$
\end{theorem}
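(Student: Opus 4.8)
The plan is to design an adversarial strategy that forces any online algorithm to select $\Omega(\sqrt{n})$ vertices while the revealed graph stays threshold and connected (so $OPT = 1$). The building block is the graph from Lemma~\ref{obs : Near-Clique With Leaves}: a clique $\{u\}\cup C_k$ with $C_k = \{v_1,\dots,v_k\}$, where each $v_i$ additionally has a private independent set $I_{j_i}$ of pendant-like vertices attached to $v_i,\dots,v_k$. The key structural feature we exploit is that at the time the adversary reveals $v_i$, it has not yet committed to the sizes $j_1,\dots,j_k$, and each $I_{j_i}$ is only adjacent to the \emph{later} clique vertices $v_i,\dots,v_k$; so the adversary can adaptively grow these independent sets in response to $ALG$'s choices, always staying within the recursive construction of Lemma~\ref{obs : Near-Clique With Leaves}.

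Concretely, I would have the adversary reveal $u$, then reveal clique vertices $v_1, v_2, \dots$ one at a time, each joined to all previously revealed vertices. When $v_i$ is revealed, $ALG$ must decide whether to take it. The adversary maintains the invariant that after processing $v_i$ it can, at any moment, "close off" the instance into a threshold graph on roughly $k^2$ vertices forcing $ALG$ to have made $\Omega(k)$ selections. The mechanism for the force is the independent sets: if $ALG$ declines too many clique vertices, then attaching a fresh batch of $k$ vertices to the still-undominated part (these attach only to current and future $v_i$'s, so they are legitimately addable at this stage) creates vertices that only $v_i$ or a later-revealed dominator could cover — but since the adversary controls revelation, it can reveal those batch-vertices with no further neighbors after $ALG$'s decision window passes, saving each of them and forcing $\Omega(k)$ selections among leaves. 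If instead $ALG$ accepts many clique vertices, it has directly selected $\Omega(k)$ vertices. Either way, after $k$ rounds $ALG$ has selected $\Omega(k)$ vertices; choosing the total number of pendant vertices to be $\Theta(k^2)$ gives $n = \Theta(k^2)$, hence $ALG(G_n) = \Omega(k) = \Omega(\sqrt{n})$.

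I would then formalize the two cases exactly as in Theorem~\ref{thm : LB Trees}: \textbf{Case 1}, $ALG$ rejects a constant fraction of the $v_i$'s — each rejection is paired with a batch of pendant vertices that $ALG$ is forced to dominate by selecting them individually (they become leaves), contributing $\Omega(1)$ forced selections per rejected $v_i$ and $\Theta(k)$ pendant vertices, so over $k$ rounds $ALG \geq \Omega(k)$ and $n = \Theta(k^2)$; \textbf{Case 2}, $ALG$ accepts a constant fraction, so $ALG \geq \Omega(k)$ outright, and the adversary closes off with a final clique on the grandchildren so the graph is the Lemma~\ref{obs : Near-Clique With Leaves} graph (or $K_{1,k-1}$ in the degenerate case). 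In both cases one verifies $OPT = 1$ (the last accepted/rejected clique vertex, or a suitable $v_i$, dominates everything) and that every intermediate revealed graph is connected and threshold — the latter by checking it matches the recursive construction $G_i = (G_{i-1}\cup I_{j_i})\oplus v_i$.

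The main obstacle I anticipate is the bookkeeping of \emph{which} pendant vertices can be attached \emph{when}: a vertex of $I_{j_i}$ may only be adjacent to $v_i,\dots,v_k$, so the adversary must reveal it only after $v_i$ is revealed but may still choose to enlarge $I_{j_i}$ at later steps — and the "force a leaf" trick requires that, at the moment $ALG$ makes its decision on a batch vertex, that vertex's neighborhood is already fully determined (a single neighbor $v_i$ that is already dominated or not). Threading the revelation order so that (i) connectivity of $\langle R_i\rangle$ is preserved, (ii) the saving condition genuinely fires, and (iii) the final graph is exactly of the Lemma~\ref{obs : Near-Clique With Leaves} form, is the delicate part; everything else is the same amortized counting as in the tree lower bound.
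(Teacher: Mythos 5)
Your high-level setup (grow a clique spine one vertex at a time, keep the graph inside the recursive construction of Lemma~\ref{obs : Near-Clique With Leaves}, punish with pendant batches, aim for $n=\Theta(k^2)$ and $ALG=\Omega(k)$) is the same as the paper's, but your Case~1 accounting has a genuine gap that your own ``delicate part'' remark gestures at without resolving. In a threshold graph you cannot pair \emph{each} rejected clique vertex with its own private batch of leaves: two leaves pendant to two distinct clique vertices $v_i,v_{i'}$ induce a $P_4$ (leaf--$v_i$--$v_{i'}$--leaf), so only the \emph{last} joined clique vertex can carry vertices adjacent to it alone; in the Lemma's construction every earlier set $I_{j_i}$ must also be adjacent to all later clique vertices $v_{i+1},\dots,v_k$. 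But if the batch attached after a rejection is adjacent to all future clique vertices (to stay threshold), the punishment is not irrevocable: the moment $ALG$ accepts any later clique vertex, that entire batch becomes dominated and forces nothing. Concretely, an algorithm that rejects rounds $1,\dots,k-1$ and accepts only the final clique vertex defeats your amortized scheme: it pays $1$ while $n=\Theta(k^2)$. So the ``$\Omega(1)$ forced selections per rejected $v_i$ over $k$ rounds, exactly as in Theorem~\ref{thm : LB Trees}'' transfer does not go through --- the tree argument relies on being able to cash in a local punishment at every rejection, which thresholdness forbids.

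The paper's proof resolves this differently: the adversary terminates the instance at $ALG$'s \emph{first} rejection. While $ALG$ keeps accepting, each new clique vertex $c_j$ arrives joined to everything visible plus $k$ fresh children; at the first rejection, those $k$ children are revealed as leaves of $c_j$ (legal, since $c_j$ is then the last joined vertex) and the remaining visible vertices are closed off into the clique, forcing $k$ selections on top of the $j$ acceptances already made. The instance size is then \emph{adaptive}, $n=\Theta(jk)$, and the bound is $ALG\ge j+k\ge\sqrt{jk}=\Omega(\sqrt n)$; if $ALG$ never rejects, $n=k^2$ and $ALG\ge k=\sqrt n$. So the single leaf batch is spent exactly once, at the end, rather than amortized over many rejections --- this ``stop at the first mistake, let $n$ shrink with the stopping time'' step is the missing idea in your proposal.
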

\begin{proof}
Let $k \geq 3$ be an integer and reveal $v_1$ with $k - 1$ children.  If $ALG$ does not select $v_1$ then the input terminates as a star on $k$ vertices (i.e. the $k - 1$ neighbours of $v_1$ are revealed with no additional neighbours).  $ALG$ is forced to select the $k - 1$ neighbours of $v_1$ ($OPT$ selects only $v_1$).  In this case, the statement follows since $ALG = k - 1 \geq \sqrt{k} = \sqrt{n}$.  

Suppose that $ALG$ selects $v_1$ and let $c_i, 1 \leq i \leq k - 1$ be the children of $v_1$.  Reveal $c_1$ as adjacent to each child of $v_1$ and with an additional $k$ children. If $ALG$ does not select $c_1$ then the children of $c_1$ are revealed as leaves whereas the rest of the input is revealed to be a clique.  That is, $N[v_1]$ is a clique and only $c_1$ has children.  In this case, $ALG$ must select the $k$ children of $v_2$ yielding an output of $k + 1$ (see Figure \ref{fig:threshold-second-rejected} for an example).  Therefore, in this case the statement follows since $ALG \geq k + 1 = \frac{n}{2} + 1 \geq \sqrt{n}$.

Suppose that $ALG$ selects $c_1$, the input then continues in the following way; For each $2 \leq j \leq k - 1$, (as long as $ALG$ is accepting $c_j$) we reveal $c_j$ as adjacent to every visible vertex and with an additional $k$ children.  That is, $c_j$ is adjacent to each child $c_i, i \neq j$ of $v_1$ and the grandchildren of $v_1$ (i.e. the children of all the $c_i$ with $1 \leq i \leq j$) so that $c_j$ is a single dominating vertex of this prefix.

$\textbf{Case 1 :}$ If there is some $2 \leq j \leq k - 1$ such that $ALG$ does not select $c_j$ then the $k$ children of $c_j$ are revealed as leaves and $N[v_1]$ is revealed as a clique (see Figure \ref{fig:threshold-third-rejected} for an example).  The input has $n = 1 + (k - 1) + \sum\limits_{i = 2}^{j}ik = k + (j - 1)k = jk$ vertices.  At this point, $ALG$ has selected $\{ v_1, c_1, ..., c_{j - 1} \}$ and is now forced to select the $k$ children of $c_j$ ($OPT$ selects only $c_j$) for an output of at least $j + k \geq 2 + k \geq \sqrt{n}$ since $j < k$ and $n = jk$.  

$\textbf{Case 2 :}$ If $ALG$ selects each $c_i, 1 \leq i \leq k - 1$ then the input is terminated with $n = 1 + (k - 1) + \sum\limits_{i = 2}^{j}ik = k + (k - 1)k = k^{2}$ vertices after revealing $c_{k - 1}$.  $ALG$ has already selected $\{ v_1, c_1, ..., c_{k - 1} \}$ and therefore $ALG \geq k = \sqrt{n}$. 
\end{proof}

\begin{figure}[h]
\centering
\includegraphics[scale=0.6]{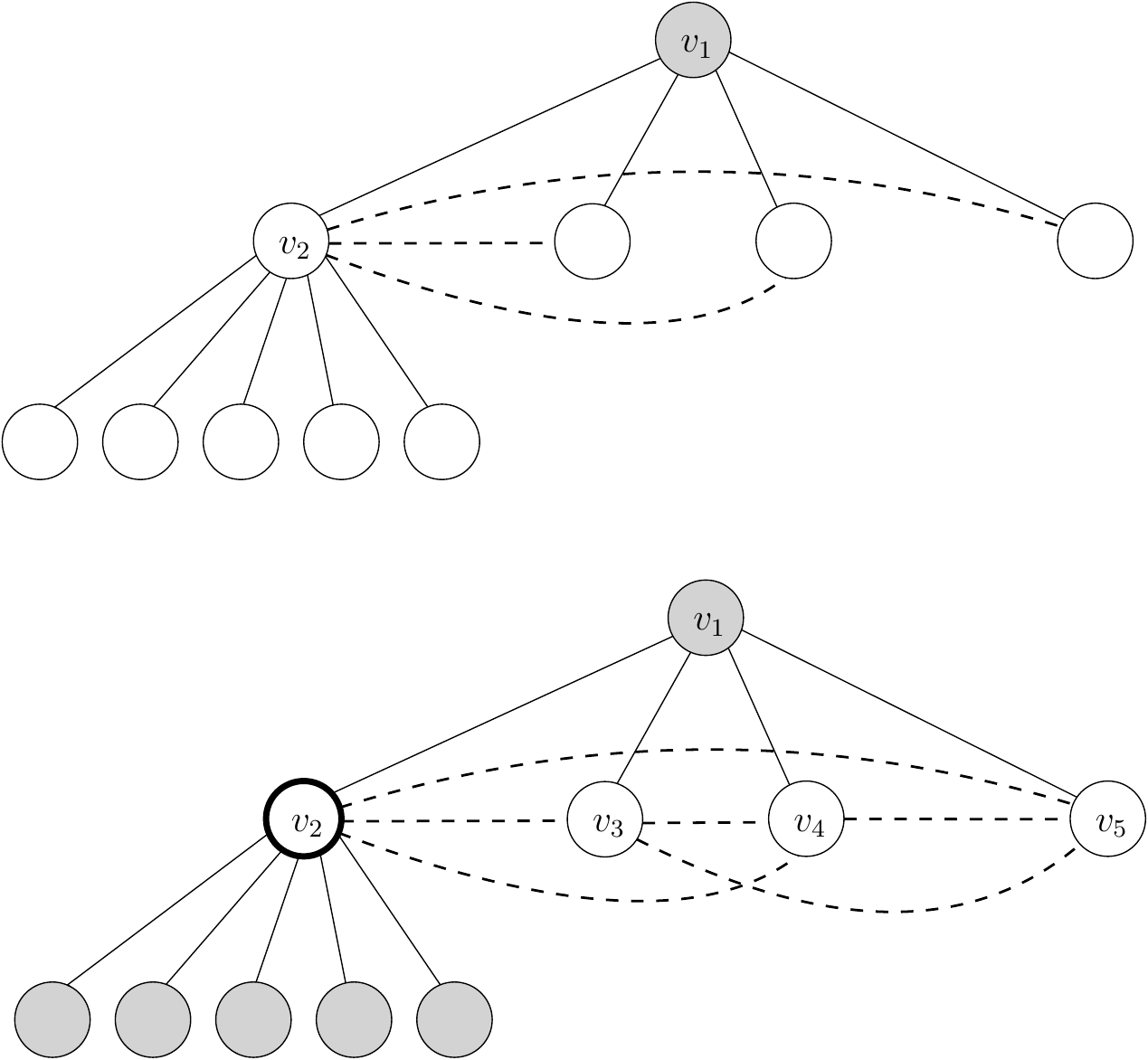}
\caption{An instance described in Theorem~\ref{thm : LB Threshold Graphs}  with $k = 5$ where $ALG$ does not select $v_2$.  The top depicts the graph at the moment $v_2$ was revealed and the bottom depicts the completely revealed graph.}\label{fig:threshold-second-rejected}
\end{figure}

\begin{figure}[h]
\centering
\includegraphics[scale=0.6]{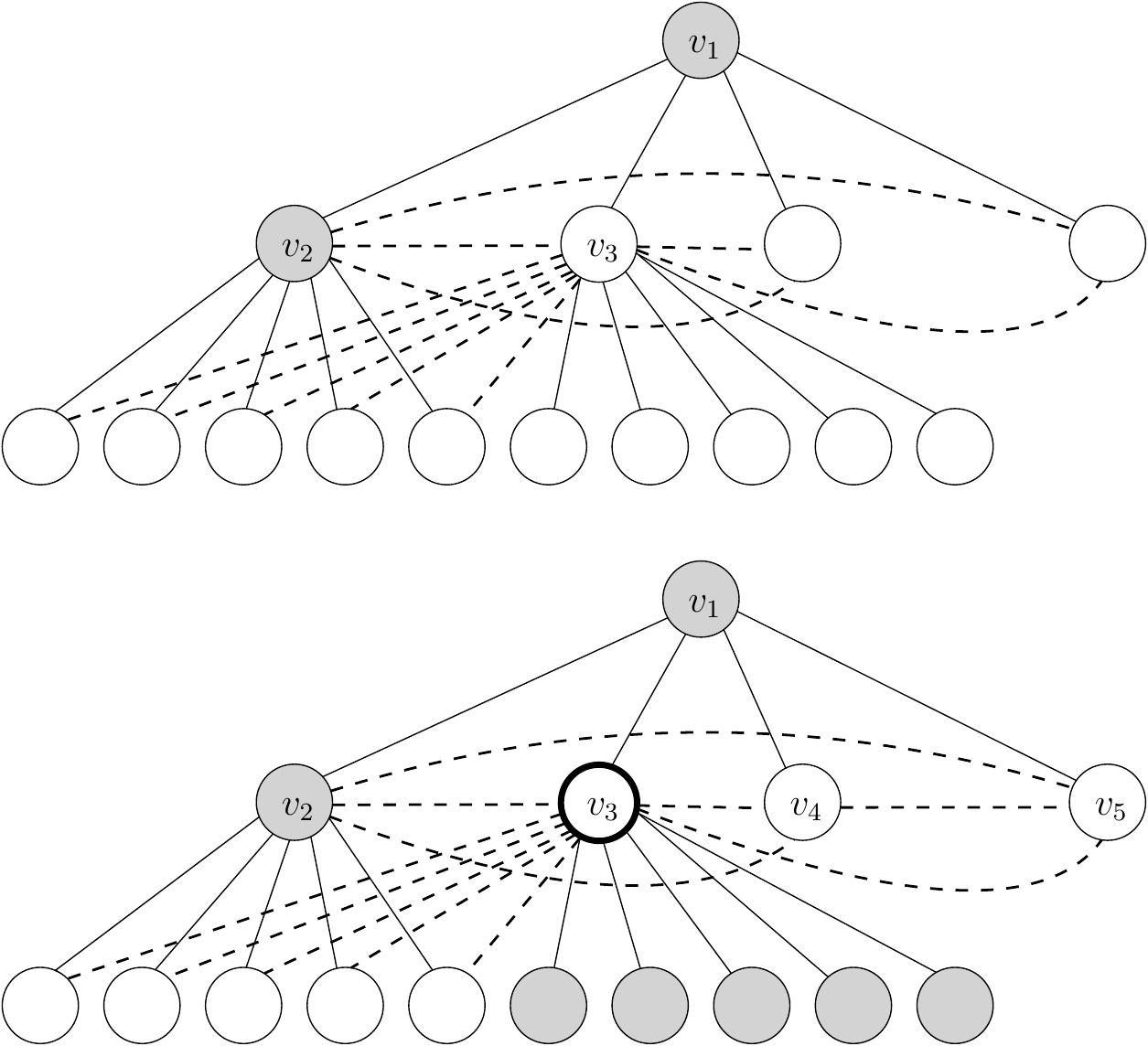}
\caption{An instance described in Theorem~\ref{thm : LB Threshold Graphs}  with $k = 5$ where $ALG$ does not select $v_3$.  The top depicts the graph at the moment $v_3$ was revealed and the bottom depicts the completely revealed graph.}\label{fig:threshold-third-rejected}
\end{figure}

\subsection{Planar Bipartite Graphs}
\label{ssec:planar}

Below is a lower bound of $\Omega(\sqrt{n})$ for planar bipartite graphs.  We should mention that is strikingly similar to the lower bound on general graphs given in \cite{OnlineDominatingSet1}.  We provide a simple augmentation of their lower bound so that it not only consists of inputs that are revealed according to our model but inputs that are also planar bipartite graphs.

\begin{theorem}\label{thm : LB Bipartite Graphs}
$ \rho(ALG, $PLANAR BIPARTITE$) = \Omega(\sqrt{n}).$
\end{theorem}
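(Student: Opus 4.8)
The plan is to augment the $\Omega(\sqrt{n})$ lower bound for general graphs of \cite{OnlineDominatingSet1} so that the adversarial instances (a) are revealed in our model --- each vertex arrives with its \emph{entire} neighbourhood, and $\langle R_i\rangle$ is kept connected at all times --- and (b) are planar and bipartite. Conformance to our model costs little: whenever a ``centre'' vertex of the construction is revealed, it is presented together with all the pendants it will eventually carry (as visible, not yet revealed, vertices), every new vertex is hung off the already-revealed root or off an already-revealed centre so that $\langle R_i\rangle$ stays connected, and the ``far'' side of each gadget is disclosed only when the adversary closes the instance.

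For planarity and bipartiteness I would keep the overall shape of the argument --- a common root joined to $m=\Theta(\sqrt{n})$ gadgets, each gadget forcing the algorithm to select $\Omega(\sqrt{n})$ vertices while an optimal solution spends only $O(1)$ on it, so that $OPT=\Theta(\sqrt{n})\to\infty$ while $ALG=\Theta(n)$ --- but redesign each gadget. The obstruction in the original gadget is a single vertex adjacent to all of its pendants together with many ``candidate'' dominators, which forces a $K_{3,3}$ and creates odd cycles. I would replace it by a planar ``book'': one auxiliary vertex $z_i$ per gadget adjacent to all of that gadget's pendants, each pendant adjacent to exactly its own candidate and to $z_i$, with the $2$-colouring that puts roots and pendants on one side and candidates and the $z_i$'s on the other. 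Embedded with $z_i$ at the hub and the candidates on the outer rim of its page-fan each gadget is planar, and joining all gadgets at the common root (placed in the outer face) keeps the whole instance planar bipartite. The argument then runs on the familiar template of the earlier lower bounds: fix $m=\Theta(\sqrt{n})$; reveal the root and, gadget by gadget, the candidate vertices with their pendant blocks; whenever the algorithm declines a candidate, close that gadget by revealing its pendants as degree-one vertices (so the algorithm is forced, for feasibility, to take all of them) and disclosing $z_i$ to dominate the rest; if the algorithm accepts every candidate in a gadget, close it after the last candidate, again disclosing $z_i$; finally verify in each terminal instance that the root, the $z_i$'s, and one declined candidate per relevant gadget form a dominating set, so $OPT=O(\sqrt{n})$, whereas $ALG=\Omega(n)$, which yields $ALG/OPT=\Omega(\sqrt{n})$.

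The step I expect to be hardest is making each gadget simultaneously (i) force $\Omega(\sqrt{n})$ selections \emph{whatever} the algorithm does inside it, (ii) cost $OPT$ only $O(1)$, and (iii) remain planar and bipartite: (i) and (ii) together are precisely what drove the original construction to a non-planar biclique, so the book replacement must let $z_i$ touch every pendant without creating a $K_{3,3}$ or $K_5$ minor, and a candidate once accepted must not give the algorithm a cheap escape --- which is why the pendants of a \emph{declined} candidate are made degree-one. Checking the plane embedding of the closed instances and confirming that $\langle R_i\rangle$ stays connected when $z_i$ and the pendant blocks are finally disclosed are the routine-but-fiddly parts; if the book gadget cannot keep $OPT$ at $O(1)$ per gadget, the fallback is to chain the gadgets along a path rather than at a common root and absorb the loss into the constant.
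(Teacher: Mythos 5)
Your overall mechanism is the same one the paper uses --- candidates revealed together with blocks of pendant vertices; pendants of a declined candidate are turned into leaves the algorithm is forced to buy, pendants of accepted candidates are handed to auxiliary vertices that $OPT$ buys --- and the planarity/bipartiteness bookkeeping (root and pendants on one side, candidates and auxiliaries on the other) is sound. The gap is exactly at the step you flag as hardest, and the numbers you announce cannot hold. A gadget that is \emph{closed at the first declined candidate}, with $r$ candidates carrying $p$ pendants each, forces an algorithm to spend only about $\min(p,r)$ inside it (decline immediately, paying $p$ leaves, or accept everything, paying $r$); since $\min(p,r)\le\sqrt{rp}$, a gadget of size $\Theta(\sqrt{n})$ forces only $O(n^{1/4})$ selections, not $\Omega(\sqrt{n})$, so ``$m=\Theta(\sqrt{n})$ gadgets, each forcing $\Omega(\sqrt{n})$'' is internally inconsistent. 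Trying to fix this by making both $p$ and $r$ large runs into your own architecture: if every candidate hangs off the common root, then all $m\cdot r$ candidates are visible (hence must eventually be revealed and counted in $n$) from the very first step, so the many candidates needed to make the accept-everything branch expensive inflate $n$ in the branches where the algorithm declines early, where $ALG$ is only about $mp$, $OPT$ about $m$, and $n\ge mr$ --- the ratio collapses there. If instead you reveal a gadget's candidates adaptively in a chain so as to hide the unused ones, consecutive candidates lie on opposite sides of the bipartition, and a single auxiliary $z_i$ adjacent to pendants of both parities creates odd cycles; resolving this is precisely the crux, and the proposal does not do it. Note also that $n$ is branch-dependent in your scheme, so any repaired parameter choice needs a full mixed-branch analysis, which is absent.

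For comparison, the paper avoids all of this with a \emph{non-adaptive skeleton}: a path of $k$ candidates, each revealed in path order with $k$ pendant neighbours, and all $k$ path vertices are revealed no matter what the algorithm does, so $n=k^2+k+2$ is the same in every branch. Only the pendants' fate is adaptive: leaves under declined path vertices, and attachment to one of \emph{two} global auxiliary vertices $e$, $o$ (for even and odd positions --- exactly to avoid the odd cycles above) under accepted ones. This gives $ALG\ge k+i(k-1)$ and $OPT\le i+2$, where $i$ is the number of declined path vertices, hence a ratio of at least $k/2=\Omega(\sqrt{n})$ uniformly; the degenerate case $OPT=O(1)$ is handled by a one-line extension (repeating the trap off $o$), not by $\Theta(\sqrt{n})$ parallel gadgets. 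To salvage your write-up, either adopt this ``reveal the whole skeleton regardless of the algorithm's choices'' rule, or supply concrete $m,r,p$ together with the mixed-branch accounting --- as it stands, the quantitative heart of the argument is missing.
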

\begin{proof}
Let $k \geq 2$ and consider a path on $k$ vertices with the vertices ordered $v_i, 1 \leq i \leq k$.  Each vertex along the path is adjacent to $k$ neighbors appearing as leaves.  Every odd labeled vertex is adjacent to a common vertex $o$ and every even labeled vertex is adjacent to a common vertex $e$ where both $e, o$ do not lie on the path (and have not yet been revealed).  The ordering of the path is the order in which these vertices were revealed to $ALG$ (see Figure~\ref{fig:planar-bipartite-1-instance}). Of the $k$ vertices along the path we suppose that $ALG$ selects $k - i$  where $0 \leq i \leq k$.  For each of the $k - (k - i) = i$ vertices not selected by $ALG$, the $k$ leaves adjacent are revealed to remain leaves and $ALG$ must select them.  For each of the $k - i$ vertices selected by $ALG$, the leaves adjacent to said vertices are revealed as adjacent to $e$ if their neighbour had an odd label and $o$ if their neighbour had an even label.  Thus, $ALG$ selects at least $(k - i) + (i)k = k + i(k - 1)$ vertices whereas $OPT$ need only select $o, e$ and the $i$ vertices not selected by $ALG$.  

Thus, we have that $\frac{ALG}{OPT} \geq \frac{k + i(k-1)}{i + 2}$.  Noting that $k - 1 \geq \frac{k}{2}$ since $k \geq 2$ we obtain that 
$$ i(k - 1) \geq i(k /2) \iff 
k + i(k - 1) \geq i(k / 2) + k = \frac{k}{2}(i + 2) \iff
\frac{k + i(k-1)}{i + 2} \geq \frac{k}{2}
.$$

Since the input consists of $n = k^{2} + k + 2$ vertices the result would then follow.  To finish we provide a justification that the input is planar and bipartite.  To see that it is bipartite let one part $X$ consist of the vertices along the path with odd labels and the neighbors of the vertices with even labels (this includes $e$).  The other part $Y$ consists of the vertices along the path with even labels and the neighbors of the vertices with odd labels (this includes $o$).  To see that it is planar, consider a drawing with the vertices along the path drawn in a line from left to right, $e$ placed above this path and $o$ placed below.  For any odd labeled vertex $v_i$, the $k$ neighbors of $v_i$ that do not lie on the path (and are different from $o$) are placed immediately above $v_i$ but below $e$ (i.e. $v_i$ along with said neighbors are depicted as a star on $k + 1$ vertices with $v_i$ as the center).  Similarly, for any even labeled vertex $v_i$, the $k$ neighbors of $v_i$ that do not lie on the path and are different from $e$ are placed immediately below $v_i$ and above $o$.
\end{proof}

\begin{figure}[h]
\centering
\includegraphics[scale=0.5]{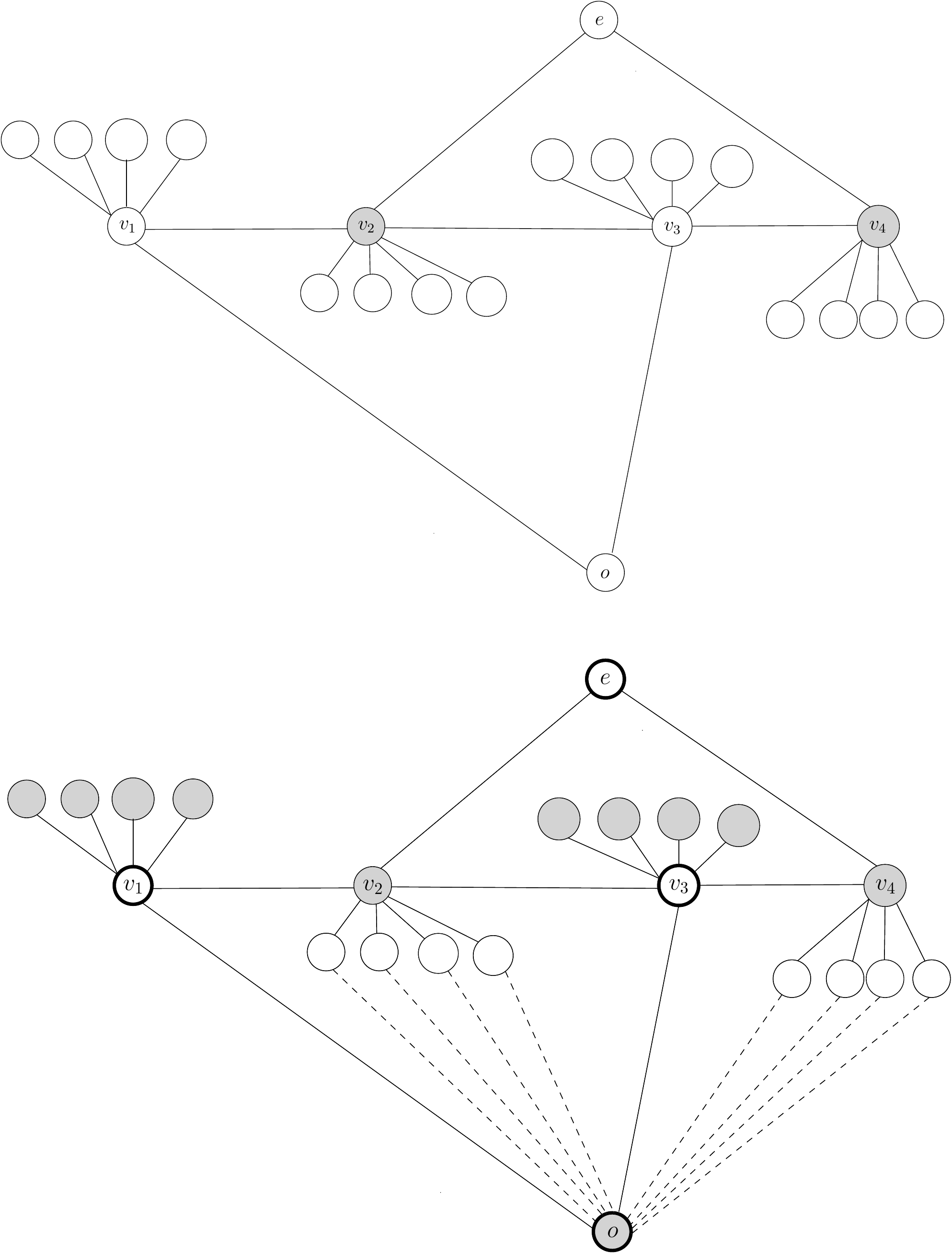}
\caption{An instance described in Theorem~\ref{thm : LB Bipartite Graphs}  with $k = 4$.  The top depicts the prefix where only vertices along the path have been revealed.  Assuming the vertices on the path that $ALG$ selects are $v_2$ and $v_4$, the bottom depicts the completely revealed graph.}\label{fig:planar-bipartite-1-instance}
\end{figure}

We remark that in Theorem~\ref{thm : LB Bipartite Graphs} there are cases when $OPT \leq \alpha$ for some constant $\alpha \geq 2$.  For example, when $ALG$ selects all $k$ vertices along the path $OPT$ selects only $\{ e, o \}$.  In this case, we extend the input by revealing $o$ with an additional neighbor $u_1$, and repeat a similar trap with $u_1$ as the first vertex along the path.

\subsection{Series-Parallel Graphs}
\label{ssec:series-parallel}

In light of our $2$-competitive algorithm for trees, it is natural to suppose that some class of graphs generalizing trees might admit competitive algorithms, that is, algorithms with bounded competitive ratio. One such generalization is graphs of bounded treewidth. Trees have treewidth $1$, so the next step is to consider graphs of treewidth $2$. Unfortunately, in this section we show that by increasing treewidth parameter from $1$ to $2$, the online dominating set problem becomes extremely hard for online algorithms. More specifically, we show that series-parallel graphs do not admit online algorithms with competitive ratio better than $\Omega(\sqrt{n})$. We remark that series-parallel graphs have treewidth at most $2$.

We begin by recalling the definition of a series-parallel graph. It is defined with the help of the notion of a two-terminal graph $(G,s,t)$, which is a graph $G$ with two distinguished vertices $s$, called a source, and $t$, called a sink. For a pair of two-terminal graphs $(G_1, s_1, t_1)$ and $(G_2, s_2, t_2)$, there are two composition operations:
\begin{itemize}
    \item \emph{Parallel composition}: take a disjoint union of $G_1$ with $G_2$ and merge $s_1$ with $s_2$ to get the new source, as well as $t_1$ with $t_2$ to get the new sink.
    \item \emph{Series composition}: take a disjoint union of $G_1$ with $G_2$ and merge $t_1$ with $s_2$, which now becomes an inner vertex of the resulting two-terminal graph; $s_1$ becomes the new source and $t_2$ becomes the new sink.
\end{itemize}
A two-terminal series-parallel graph is a two-terminal graph that can be obtained by starting with several copies of the $K_2$ graph and applying a sequence of parallel and series compositions. Lastly, a graph is called series-parallel if it is a two-terminal series-parallel graph for some choice of source and sink vertices. Observe that intermediate graphs resulting in the construction of a series-parallel graph may have multiple parallel edges, so they are multigraphs. This is permitted, as long as the resulting overall graph is a simple undirected graph at the end.

Now, we are ready to prove the main result of this section.

\begin{theorem}\label{thm : LB SP Graphs}
$ \rho(ALG,$SERIES-PARALLEL$) = \Omega(\sqrt{n}).$
\end{theorem}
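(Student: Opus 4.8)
The plan is to mimic the structure of the planar bipartite lower bound (Theorem~\ref{thm : LB Bipartite Graphs}) but to engineer the gadget so that it is series-parallel rather than planar bipartite. The high-level idea is the same: present to $ALG$ a ``spine'' of $k$ vertices $v_1,\dots,v_k$, each carrying $k$ private pendant-like vertices, and then resolve each pendant depending on whether $ALG$ selected its spine vertex. If $ALG$ rejects a spine vertex, its $k$ attached vertices are revealed to be degree-$1$ leaves, forcing $ALG$ to select all $k$ of them while $OPT$ pays only $1$ for the spine vertex. If $ALG$ accepts a spine vertex, its $k$ attached vertices are instead funnelled into a small number of shared ``collector'' vertices, so that $OPT$ pays $O(1)$ per collector to dominate them. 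Summing over the spine, $ALG$ is forced to select $\Omega(k^2)$ vertices on some branch while $OPT$ uses $O(k)$, and since $n = \Theta(k^2)$ this gives $ALG/OPT = \Omega(\sqrt{n})$, exactly as in the bipartite case. The adversary maintains connectivity of the revealed prefix by revealing $v_1,\dots,v_k$ as a connected path (or star) first, with all pendants and collectors visible but unrevealed, and only later committing their true neighborhoods.

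The technical heart — and the main obstacle — is verifying that the final committed graph is genuinely series-parallel (equivalently, has no $K_4$ minor / has treewidth $\le 2$), and that every intermediate revealed prefix can be completed to such a graph while respecting the online revelation rules. First I would fix the base spine to be a path $v_1 - v_2 - \cdots - v_k$; each edge $\{v_i,v_{i+1}\}$ of the path is trivially a two-terminal series-parallel graph, and a path is obtained by repeated series composition, so the spine itself is series-parallel. Then I would attach to each $v_i$ a set of $k$ internally-disjoint length-two paths $v_i - x - w$, or rather, when $ALG$ accepts $v_i$, route the $k$ pendants of $v_i$ to a collector $y$ via paths $v_i - p - y$; a bundle of parallel $v_i$--$y$ paths of length $2$ is a parallel composition of series compositions of $K_2$'s, hence series-parallel, and gluing such bundles along the shared spine vertices keeps the whole graph series-parallel because series-parallel graphs are closed under identifying a single vertex (``$1$-sums''), which never creates a $K_4$ minor. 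The rejected case — pendant leaves hanging off $v_i$ — is even easier, as a pendant edge is a series composition. The one delicate point is that a collector vertex $y$ may need to receive routed pendants from several accepted spine vertices; I would argue that the union of several length-two-path bundles sharing only $y$ and sharing only spine vertices is still a $1$-sum of series-parallel pieces along cut vertices, hence has no $K_4$ minor. I would double-check there is no configuration where a collector and two spine vertices and a pendant together form a $K_4$ minor; keeping each collector's paths of length exactly two (so collectors are never adjacent to spine vertices directly and pendants have degree exactly two) is what prevents this.

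After the gadget is in place, the counting argument is essentially identical to Theorem~\ref{thm : LB Bipartite Graphs}: if $ALG$ accepts $k-i$ of the spine vertices and rejects $i$ of them, it is forced to take $\ge (k-i) + ik = k + i(k-1)$ vertices (the accepted spine vertices themselves plus the $k$ leaves of each rejected spine vertex), whereas $OPT$ can take the $i$ rejected spine vertices, plus $O(1)$ collectors to handle all the routed pendants of accepted vertices, plus possibly a constant number of spine/auxiliary vertices, for a total of $O(i + 1)$; using $k-1 \ge k/2$ gives $ALG/OPT \ge k/2 \cdot \tfrac{i+2}{i+2}$-type bound $= \Omega(k)$ after the same algebra as in the bipartite proof. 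Since $n = \Theta(k^2)$, this is $\Omega(\sqrt n)$. As in the remark following Theorem~\ref{thm : LB Bipartite Graphs}, in the corner case where $OPT$ stays bounded by a constant (e.g.\ $ALG$ accepts the entire spine, so $OPT$ only needs the collectors), I would extend the instance by hanging a fresh spine off one of the collector vertices and repeating the trap, which drives $OPT \to \infty$ and makes the bound asymptotic in the required sense. The main risk in writing this up cleanly is purely the series-parallel verification, so I would state it as an explicit claim (``the completed graph is series-parallel, and every prefix extends to one'') and prove it by exhibiting the composition sequence directly rather than appealing to forbidden-minor characterizations.
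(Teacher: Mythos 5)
There is a genuine gap in your accounting for $OPT$. Your optimal set is ``the $i$ rejected spine vertices plus $O(1)$ collectors,'' but with a path spine and collectors that (by your own design) are never adjacent to spine vertices, nothing in that set dominates the \emph{accepted} spine vertices: an accepted $v_j$ is adjacent only to $v_{j-1}$, $v_{j+1}$ and its own pendants, none of which need lie in your proposed $OPT$. In the extreme case where $ALG$ accepts the entire spine, the forced cost of $ALG$ is only about $k$, while any dominating set must spend $\Theta(k)$ vertices just to dominate the path itself, so the ratio degenerates to $O(1)$; your corner-case patch (``$OPT$ only needs the collectors'') presupposes exactly the false claim $OPT=O(1)$. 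The paper's construction avoids this by using a \emph{star} spine: the first vertex $s$ is adjacent to all of $c_1,\dots,c_k$ and is placed in $OPT$, so every spine vertex is dominated at cost $1$ and $OPT=\{s,t\}\cup\{c_i \mid i\notin S\}$ genuinely has size $k-p+2$. To repair your version you must give every spine vertex a cheap dominator (a star centre, or a collector adjacent to the spine, as $o,e$ are in the bipartite proof) and then redo the series-parallel verification for that modified gadget; note that your worry that such adjacencies create a $K_4$ minor is unfounded, since any $K_4$ minor through a single hub would require a cycle avoiding the hub, which the spine-plus-pendant structure does not contain.

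A second, definitional problem: the paper defines series-parallel strictly via two-terminal series and parallel compositions (for some choice of source and sink), and in any such graph every non-terminal vertex has degree at least $2$, so at most two vertices of degree $1$ can occur. Your rejected-branch gadget leaves $k$ pendant leaves on each rejected spine vertex, so the completed graph is \emph{not} series-parallel in the paper's sense; your appeals to the $K_4$-minor-free/treewidth-$2$ characterization and to closure under identifying a single vertex are facts about the strictly larger class of partial $2$-trees (a $1$-sum of two paths at an interior vertex is already a branching tree, which is not two-terminal series-parallel). This is precisely why the paper does not leave bare leaves: for each rejected $c_i$, every child $d_{ij}$ is revealed with a mate $f_{ij}$ (forcing one selection per pair, the same count your leaves give), and all the $f_{ij}$ are later attached to $t$, so the final graph is exhibited explicitly as a composition of paths between the two terminals $s$ and $t$. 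Your counting idea is the right one, but as written both the $OPT$ bound and the membership in the stated graph class fail.
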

\begin{proof}
Let $k \ge 2$ be an integer. The adversary reveals $s$ with $k$ neighbors $c_1, \ldots, c_k$. Then $c_1, \ldots, c_k$ are revealed in this order with $k$  new neighbors each. Let neighbors of $c_i$ be $d_{i1}, \ldots, d_{ik}$. Let $S \subseteq \{c_1, \ldots, c_k\}$ be those vertices selected by $ALG$. For those $i \notin S$ we reveal their new neighbors in order $d_{i1}, \ldots, d_{ik}$. Each such $d_{ij}$ is revealed with a single new neighbor $f_{ij}$. For $i \in S$ we reveal their new neighbors in order $d_{i1}, \ldots, d_{ik}$. Each such $d_{ij}$ is revealed with a new neighbor $t$ that is common to all these vertices. Then $f_{ij}$ are revealed in arbitrary order with $t$ as a new neighbor. Lastly $t$ is revealed without any new neighbors.

Let $p = |S|$. Observe that in addition to these $p$ vertices $ALG$ must select at least one vertex from each of $\{d_{ij}, f_{ij}\}$ pairs for those $i \notin S$; otherwise, vertex $d_{ij}$ would be undominated. Thus, $ALG \ge p + k(k-p)$. Also, observe that $\{s, t\} \cup \{c_i \mid i \notin S\}$ is a dominating set, so $OPT \le k-p+2$. The bound on the competitive ratio is
\[\frac{ALG}{OPT} \ge \frac{p + k(k-p)}{k-p+2} = k - \frac{2k-p}{k-p+2} \ge \frac{k}{2},\]
where the last inequality is obtained as follows. For $k \ge 2$ we have $k^2 -k p \ge 2k - 2p$, which implies $k^2-kp + 2k \ge 4k-2p$. This in turn implies that $k(k-p+2) \ge 2(2k-p)$, hence $(2k-p)/(k-p+2) \le k/2$. The quantitative part of the statement of this theorem follows from the fact that the total number of vertices is at most $2+k+k^2+k(k-p) = \Theta(k^2)$. 

Lastly, we note that the adversarial graph thus constructed is, indeed, series-parallel. For each $i \notin S$ and $j \in \{1, \ldots, k\}$ the path $c_i \rightarrow d_{ij} \rightarrow f_{ij} \rightarrow t$ is a series-composition of $3$ copies of $K_2$. These paths can be merged by a parallel composition to obtain the subgraph induced on $\{c_i, t\} \cup \{d_{ij}, f_{ij} \mid j \in \{1, \ldots, k\}\}$ for each $i \notin S$. Each of these subgraphs is composed at $c_i$ with another copy of $K_2$ with the new vertex playing the role of $s$. Similar argument holds to show that the subgraph induced on $\{s, c_i, t\} \cup \{d_{ij} \mid j \in \{1, \ldots, k\}\}$ for $i \in S$ is a two-terminal series-parallel graph. Lastly, all these subgraphs are merged by a sequence of parallel compositions at $s$ and $t$.
\end{proof}

\begin{figure}[h]
\centering
\includegraphics[scale=0.5]{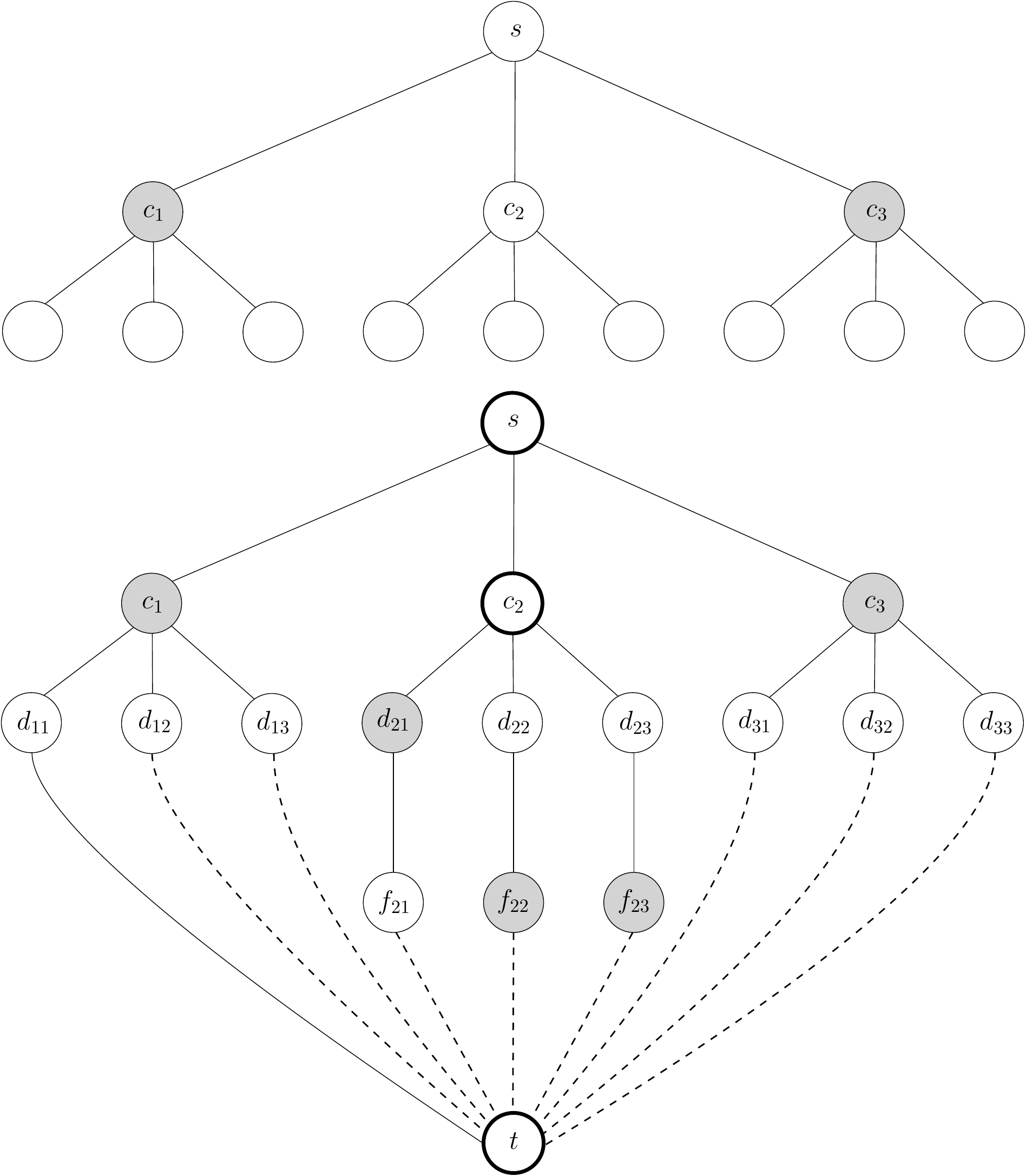}
\caption{An instance described in Theorem~\ref{thm : LB SP Graphs}  with $k = 3$.  The top depicts a prefix where $ALG$ selects $S = \{ c_1, c_3 \}$ whereas the bottom depicts the completely revealed graph.}\label{fig:series-parallel-lower-bound}
\end{figure}

\section{Conclusions}

In this paper we studied the minimum dominating set problem in an online setting where a vertex is revealed alongside all its neighbors. We also contrasted our results with those obtained by Boyar et al.~\cite{OnlineDominatingSet3} and Kobayashi~\cite{OnlineDominatingSet4} in a related vertex-arrival model. Dominating set is a difficult problem both offline and online. In our setting, the best achievable competitive ratio on general graphs is $O(\sqrt{n})$. This observation prompted us to study this problem with respect to more restrictive graph classes. Trees provide a natural graph class that usually allows for non-trivial competitive ratios. Indeed, we showed that in our model trees admit $2$-competitive algorithms. There are several ways to try to extend this result to larger graph classes. We considered cactus graphs and showed that the optimal competitive ratio is $2.5$ on them. Another way of generalizing trees is to consider graphs of higher treewidth. Unfortunately, once treewidth goes up to $2$, competitive ratio jumps to $\Omega(\sqrt{n})$ (which is trivial in our setting due to $O(\sqrt{n})$ upper bound), as witnessed by series-parallel graphs. We also established non-trivial upper bounds on graphs of bounded degree, as well as graphs with bounded claws. When one moves to planar (even bipartite planar) graphs and threshold graphs, the competitive ratio jumps to $\Omega(\sqrt{n})$ again.

The above can be viewed as a larger program of developing a deeper understanding of the dominating set problem in an online setting. What are the main structural obstacles in graphs that prohibit online algorithms with small competitive ratios? Can one discover a family of graphs parameterized by some parameter $t$, which include cactus graphs, claw-free graphs, and bounded-degree graphs, such that the competitive ratio scales gracefully with $t$? Lastly, as another research direction, we mention that we have only considered the deterministic setting, so it would be of interest to extend our results to the randomized setting, as well as the setting of online algorithms with advice.

%
%
%
\bibliographystyle{splncs04}
\bibliography{refs}

\end{document}